\documentclass[mnsc]{informs3}

\OneAndAHalfSpacedXI % current default line spacing
\usepackage{endnotes}

\usepackage{float} 
\usepackage{enumitem}
\usepackage{amsmath}
\usepackage{placeins}   % for \FloatBarrier
\usepackage{bm} 
\usepackage{framed}
\usepackage[outdir=./]{epstopdf}
\usepackage{pgf}
\usepackage{pgfplots}
\pgfplotsset{compat=1.18} 
\usepackage{subfig}

\usepackage{graphicx}
\usepackage{xcolor,cancel}
\usepackage{tabularx}
\usepackage{multirow}
\usepackage{threeparttable}
\usepackage{booktabs}  % added by Sheng

\usepackage{algorithm}
\usepackage{algorithmicx}
\usepackage{algpseudocode}

\usepackage{url}

\newcolumntype{C}[1]{>{\centering\arraybackslash}p{#1}}

\usepackage{natbib}
 \bibpunct[, ]{(}{)}{,}{a}{}{,}%
 \def\bibfont{\small}%
\TheoremsNumberedThrough     % Preferred (Theorem 1, Lemma 1, Theorem 2)
\ECRepeatTheorems

\EquationsNumberedThrough    % Default: (1), (2), ...
\begin{document}
%%%%%%%%%%%%%%%%

% Outcomment only when entries are known. Otherwise leave as is and
%   default values will be used.
%\setcounter{page}{1}
%\VOLUME{00}%
%\NO{0}%
%\MONTH{Xxxxx}% (month or a similar seasonal id)
%\YEAR{0000}% e.g., 2005
%\FIRSTPAGE{000}%
%\LASTPAGE{000}%
%\SHORTYEAR{00}% shortened year (two-digit)
%\ISSUE{0000} %
%\LONGFIRSTPAGE{0001} %
%\DOI{10.1287/xxxx.0000.0000}%

% Author's names for the running heads
% Sample depending on the number of authors;
% \RUNAUTHOR{Jones}
% \RUNAUTHOR{Jones and Wilson}
% \RUNAUTHOR{Jones, Miller, and Wilson}
% \RUNAUTHOR{Jones et al.} % for four or more authors
% Enter authors following the given pattern:
\RUNAUTHOR{xxx}

% Title or shortened title suitable for running heads. Sample:
% \RUNTITLE{Bundling Information Goods of Decreasing Value}
% Enter the (shortened) title:
\RUNTITLE{xxx}

% Full title. Sample:
% \TITLE{Bundling Information Goods of Decreasing Value}
% Enter the full title:
\TITLE{Minimizing Type 2 Errors in an Experiment-Rich Regime via Optimal Resource Allocation}

% Block of authors and their affiliations starts here:
% NOTE: Authors with same affiliation, if the order of authors allows,
%   should be entered in ONE field, separated by a comma.
%   \EMAIL field can be repeated if more than one author
\ARTICLEAUTHORS{%
\AUTHOR{Fenghua Yang, Dae Woong (David) Ham, Stefanus Jasin}
\AFF{Stephen M. Ross School of Business, University of Michigan, Ann Arbor, MI, USA  \\ \EMAIL{yfenghua, daewoong, sjasin@umich.edu}} 
% Enter all authors
} % end of the block

\ABSTRACT{Randomized experiments (often known as ``A/B tests'') are widely used to 
evaluate product and service innovations. We study how to allocate
limited experimentation resources across $M$ concurrent experiments in an 
experiment-rich regime. Existing work on allocation has predominantly 
focused on minimizing the \emph{worst-case mean squared error} (MSE) of 
estimated treatment effects, which favors experiments with larger (and 
typically unknown) outcome variance. While appropriate for controlling 
estimation accuracy, this objective does not directly capture a common 
managerial priority in screening stages: detecting practically meaningful 
treatment effects with high probability. Motivated by this, we consider the objective of \emph{minimizing the 
worst-case Type~2 error} across all experiments. When the standard 
deviations are known, we characterize the power-optimal allocation and 
show that MSE-based allocations can be highly inefficient for detection, 
even though the two objectives align asymptotically. When the standard 
deviations are unknown and must be learned from pilot data, we show that 
a naive plug-in approach---treating pilot standard deviations as truth---can 
suffer substantial power loss. We propose inflating pilot estimates via 
\emph{correction factors} and develop three optimization-based frameworks for selecting them, each reflecting a different risk 
criterion with distinct managerial implications. Although the resulting 
stochastic programs are computationally challenging at scale, we derive 
tractable surrogate reformulations inspired by robust optimization and 
establish favorable theoretical properties. We further propose 
\textbf{Surrogate-$S$}, a fully data-dependent and implementable procedure 
that computes correction factors using only pilot variance estimates and 
achieves near-oracle performance in numerical experiments. %Finally, we extend the framework to a weighted minimax objective to accommodate heterogeneous experiment priorities.
}

% Sample
%\KEYWORDS{deterministic inventory theory; infinite linear programming duality;
%  existence of optimal policies; semi-Markov decision process; cyclic schedule}

% Fill in data. If unknown, outcomment the field
\KEYWORDS{ }

\maketitle
%%%%%%%%%%%%%%%%%%%%%%%%%%%%%%%%%%%%%%%%%%%%%%%%%%%%%%%%%%%%%%%%%%%%%%

% Samples of sectioning (and labeling) in OPRE
% NOTE: (1) \section and \subsection do NOT end with a period
%       (2) \subsubsection and lower need end punctuation
%       (3) capitalization is as shown (title style).
%
%\section{Introduction.}\label{intro} %%1.
%\subsection{Duality and the Classical EOQ Problem.}\label{class-EOQ} %% 1.1.
%\subsection{Outline.}\label{outline1} %% 1.2.
%\subsubsection{Cyclic Schedules for the General Deterministic SMDP.}
%  \label{cyclic-schedules} %% 1.2.1
%\section{Problem Description.}\label{problemdescription} %% 2.

% Text of your paper here

\section{Introduction} \label{sec:introduction}

The proliferation of large-scale online platforms has fundamentally
changed how firms design, evaluate, and deploy new ideas. A central
enabler of this transformation is the widespread use of randomized
experiments (A/B testing) to guide decisions in product design, web
layout, pricing, advertising, and algorithm development, among many
other applications. Modern platforms routinely run dozens, and often
hundreds, of experiments in parallel, giving rise to what we call an
\emph{experiment-rich regime} \citep{schmit2019optimal}. 

For example, Microsoft was already conducting approximately 250 daily
experiments on its Bing search engine as early as 2013
\citep{Kohavi2013}. Platforms such as Netflix and Booking.com have each
reported running over 10{,}000 experiments annually to optimize their
products and services \citep{Kohavi2017}. More recently, the Google
Search team reported launching 16{,}871 experiments in 2023 alone
\citep{Google2023}. While this growing abundance of parallel
experimentation greatly accelerates organizational learning, it also
creates fundamental challenges for experimental design and the
allocation of limited experimentation resources.

Despite their large active user bases, high-tech companies often face
tight constraints on experimentation resources. A finite pool of users must be allocated across an ever-growing portfolio of
concurrent A/B tests. One fundamental reason is that experimental
traffic cannot be freely reused across tests. Users cannot be
indiscriminately assigned to multiple experiments: Overlapping tests
may interact and confound attribution, so platforms must partition
traffic to preserve valid inference
\citep{tang2010overlapping, pansare_2025}. Consequently, traffic that
might once have been concentrated on a single flagship experiment must
now be split across many. 

At the same time, the total amount of traffic available for
experimentation is inherently limited. Experiments are typically run
under fixed timelines so that results can inform timely product
decisions and feature deployments, which bounds the number of
observations that can be collected. In addition, platforms are often cautious
about exposing large numbers of users to potentially inferior
treatments, further restricting experimental traffic. These limits are
compounded by operational constraints—such as server capacity,
engineering support, and analyst time—which jointly constrain the
overall scale and duration of experimentation.

This reality raises a central design question: 
\vspace{2mm}
\begin{center}
\emph{Given a fixed pool of users (or human subjects), how should they be
allocated \\ across $M$ concurrent experiments to support reliable
decision-making?
} 
\end{center}

\vspace{2mm}
A natural and well-studied approach in the literature is to optimize
allocation for \emph{estimation accuracy}. In particular, many works focus on
minimizing the \emph{worst-case mean squared error (MSE)} of estimated
treatment effects, which leads to allocating more samples to
experiments with higher outcome variance
\citep{antos2008active, carpentier2011upper, deng2012active}. Such
minimax MSE criteria provide uniform guarantees on estimation precision
across experiments and have become a standard benchmark for resource
allocation. However, MSE-based allocations do \emph{not} directly address the statistical
power of detecting true effects. This mismatch arises because
large-scale experimentation typically serves two distinct purposes at
different stages. In an
\emph{initial screening phase}, a platform evaluates many candidate
ideas with the goal of identifying which ones exhibit sufficiently
strong effects to justify further investment. Decisions at this stage
are inherently binary: determining whether an effect exceeds a
meaningful threshold and should be retained. In this setting, a
``miss'' (a false negative) is particularly costly, since a potentially
valuable innovation may be discarded prematurely. By contrast, in a
subsequent \emph{confirmatory phase}, attention shifts to precisely
estimating the magnitude of the effect for a smaller set of surviving
ideas, where estimation accuracy supports forecasting, resource
allocation, and rollout decisions. This paper focuses on the initial
screening phase, where detection performance—rather than estimation
precision—is the primary concern. This motivates optimizing allocation
directly for detection, namely by controlling Type~2 error (see below)
at the portfolio level. Indeed, we show that allocations optimized for
MSE can exhibit poor Type~2 error performance, especially under tight
resource constraints.

Framing the problem in terms of screening naturally leads to a
hypothesis-testing perspective. Any hypothesis test involves two types
of error. A {Type~1 error} occurs when the null hypothesis is
incorrectly rejected, producing a false positive, whereas a
{Type~2 error} occurs when the null hypothesis is incorrectly
accepted when the alternative is true, producing a false negative. Both errors carry significant business consequences. False positives can result in wasted investment, misallocated engineering effort, and erosion of confidence in the experimentation process, whereas false negatives may cause platforms to overlook genuinely valuable innovations, slowing organizational learning and weakening competitive advantage.
While standard
testing procedures explicitly control the Type~1 error by fixing a
significance level~$\alpha$, it provides no comparable guarantees for
the Type~2 error. Addressing this imbalance in the screening phase is
the central focus of this paper.

Motivated by these considerations, we develop an allocation method that
provides explicit guarantees on detection performance across a
portfolio of experiments. Our approach adopts a minimax perspective and
seeks to minimize the maximum Type~2 error across experiments. By doing
so, it prevents any single experiment from becoming severely
underpowered and eliminates weak links in the testing portfolio. As a
result, the platform can ensure a uniform baseline level of statistical
power across all experiments, even under tight resource constraints.

\vspace{2mm}
\textbf{Our results and contributions.} We summarize our key contributions as follows:
\vspace{1mm}
\begin{enumerate}
\item For the benchmark case in which the standard deviations of outcomes are known, we derive a closed-form \emph{power-optimal allocation}. This allocation assigns samples in proportion to the square of the standard-deviation-to-effect-size ratio, thereby equalizing Type~2 errors across experiments. In sharp contrast, the MSE-optimal
allocation depends solely on standard deviations and disregards effect sizes. Numerical results demonstrate that, under tight resource constraints, the power-optimal allocation can substantially increase the probability of detecting true effects relative to the MSE-based benchmark.

\vspace{1mm}
\item For the more realistic case in which the standard deviations are unknown, we follow common practice and assume that the platform can estimate them from pilot data. A natural but naive approach is to substitute the pilot-based estimates as if they were true values and apply the known-standard-deviation allocation. Numerical results show that this \emph{naive plug-in} strategy can lead to substantial power loss. To mitigate this issue, we adapt the concept
of \emph{correction factors}, originally developed for single-experiment settings, to the multi-experiment context. Specifically, we introduce correction factors that inflate the pilot-based standard deviations, providing a safeguard against variance underestimation and the associated loss of power.

\vspace{1mm}
\item We propose three optimization-based frameworks for selecting these correction factors, reflecting distinct risk preferences: (i) \emph{tolerance-based optimization} (\textbf{TOL}), which minimizes the smallest possible tolerance $\delta$ such that, with high probability, the realized maximum Type 2 error remains within $\delta$ of the optimum; (ii) \emph{confidence-based optimization} (\textbf{CONF}), which maximizes the probability of meeting a pre-specified tolerance level; and (iii) \emph{expectation-based optimization} (\textbf{EXP}), which minimizes the expected realized maximum Type 2 error. Together, these frameworks offer a flexible toolkit for balancing
reliability and risk. 

\vspace{1mm}
\item While conceptually useful, unfortunately, the original \textbf{TOL}, \textbf{CONF}, and \textbf{EXP} formulations are computationally intractable in large-scale settings. We, therefore, develop tractable reformulations inspired by \emph{robust optimization} and derive favorable theoretical properties. To operationalize these reformulations in practice, we introduce the \textbf{Surrogate-$S$} method---a fully data-dependent algorithm that computes correction factors using only pilot-based standard deviation estimates. Our numerical experiments demonstrate that \textbf{Surrogate-$S$} is highly competitive with the theoretical oracle benchmark that uses true variances, validating it as a scalable, principled solution for experiment-rich regimes.

\iffalse
\vspace{1mm}
\item Finally, we extend our analysis to settings where experiments may differ in their relative importance. Motivated by practical scenarios where some discoveries carry more business value than others, we study a weighted version of the power-optimal allocation problem. We characterize the optimal allocation in this setting and show how our robust reformulation approach naturally adapts to incorporate weights. This extension demonstrates the flexibility of our framework to handle heterogeneous priorities across experiments.
\fi
\end{enumerate}

\vspace{1mm}
Overall, our paper contributes to the growing literature on optimal design and statistical decision-making in large-scale experimentation. By explicitly focusing on minimizing Type 2 errors under uniform guarantees, we highlight the limitations of estimation-centric allocation rules and provide new tools for platforms operating in experiment-rich regimes. Our results offer both conceptual insights and practical algorithms that can help platforms maximize the value of their testing resources in high-stakes decision environments.

\vspace{2mm}
\textbf{Organization of the paper.} The remainder of the paper is organized as follows. Section~\ref{sec:lit} reviews the related literature, and 
Section~\ref{sec:model} introduces the model and formalizes the 
allocation problem. Section~\ref{sec:known_sigma} studies the benchmark 
case in which the standard deviations are known, characterizes the 
power-optimal allocation, and compares it to the MSE-optimal rule. Section~\ref{sec:unknown sigma 2exp} turns to the more realistic case 
with unknown standard deviations and analyzes a stylized two-experiment 
setting to characterize the structure of the ideal correction factor. 
Building on these structural insights, Section~\ref{sec:approximate k} develops tractable, robust optimization–inspired surrogate reformulations for large experiment portfolios and introduces the data-dependent implementation. 
Section~\ref{sec:closing} concludes the paper. Unless otherwise noted, all proofs and additional technical details 
are provided in the Appendix of the paper.

\section{Literature Review} \label{sec:lit}

Our work in this paper relates primarily to three streams of literature: large-scale experimentation, resource allocation in multi-armed bandits, and sample size allocation with probabilistic guarantees when using estimated variance.

\subsection{Large-Scale Experimentation and A/B Testing}

The rapid growth of large-scale online platforms has transformed A/B testing into a central tool for decision-making in product design, pricing, and algorithm optimization. Industry leaders such as Microsoft, Google, Netflix, Booking.com, and Meta collectively run thousands of experiments annually \citep{Kohavi2013,Google2023,Kohavi2017}. This \emph{experiment-rich regime} accelerates organizational learning but creates significant challenges in allocating finite user traffic across many concurrent tests. Real-world experimentation systems described by \citet{tang2010overlapping} and \citet{Kohavi2013} highlight operational
issues such as traffic segmentation, interference between parallel experiments, and the need for flexible allocation policies. 

More recently, \citet{schmit2019optimal} consider an experiment-rich setting with a sequential stream of users and an effectively infinite pool of potential experiments. Their objective is to minimize the expected time to the first ``discovery'' and they characterize the optimal policy for this goal. While their approach is effective for maximizing the speed of achieving the first discovery in an infinite-horizon setting, it does not address the common scenario in which a fixed, finite budget must be allocated across a known set of concurrent experiments with guarantees on their statistical power. Our work fills this gap by providing the first optimal allocation method with explicit Type~2 error control for all experiments.

%\citet{schmit2019optimal} formalized the experiment-rich setting and proposed allocation policies to manage opportunity costs when many experiments compete for limited traffic. Their framework considers a dynamic problem with a sequential stream of users and a infinite pool of potential experiments, aiming to minimize the expected time to the first discovery. While their approach effectively prioritizes speed and exploration, it does not address the most common scenario where a fixed budget must be allocated across a specific, finite set of concurrent experiments to guarantee uniform performance. While prior studies underscore the importance of allocation, no existing approach guarantees uniform statistical power across experiments. Our framework fills this gap by providing the first optimal allocation method with explicit Type~2 error control for all experiments.

\subsection{Resource Allocation in Multi-Armed Bandits}
A substantial body of work on allocation arises in the active learning literature within the multi-armed bandit (MAB) framework. Early studies focused on estimation-centric objectives such as minimizing the worst-case mean squared error (MSE) of arm mean estimates. For example, \citet{antos2008active} and \citet{carpentier2011upper} advocate allocating samples in proportion to arm variances to achieve uniform estimation accuracy. 
Implementing this variance-proportional allocation requires learning arm variances from data, since they are typically unknown ex ante. This motivates adaptive MAB algorithms that estimate (and upper-bound) variances online and allocate future pulls according to these confidence bounds, e.g., via UCB-style variance indices \citep{audibert2009variance, carpentier2011upper}.

While minimizing MSE is the gold standard for \emph{confirmatory stages} (Phase 2)—where precise impact sizing is required for financial forecasting and final launch decisions—it is less aligned with the objectives of the \emph{initial screening phase} (Phase 1), which is the focus of our work. In this screening regime, the decision-maker's primary goal is to efficiently filter a large funnel of potential innovations to identify surviving candidates for follow-up. As \citet{Kohavi2017} notes, even small design changes—such as Amazon’s revision of a credit card sign-up page—can yield substantial gains but may be difficult to detect. In this context, the primary risk is not imprecise measurement, but rather failing to detect a truly impactful treatment (Type~2 error). Consequently, we shift the objective from MSE minimization to explicitly controlling the maximum Type~2 error. Our results show that, under limited resources, MSE-based allocations can produce markedly lower statistical power than allocations explicitly designed for discovery.

\subsection{Testing with Unknown Variance and Pilot Studies}

When variances are unknown, researchers often conduct a \emph{pilot study} —a small preliminary study run prior to the main experiment—to test key procedures and to obtain initial estimates of nuisance parameters such as outcome variability that inform sample-size planning and allocation \citep{thabane2010tutorial,arain2010pilot}. Prior research
(e.g., \citealt{birkett1994internal,browne1995use,kieser1996use,julious2005sample,
sim2011size,Teare2014}) recommends pilot sizes that balance the need for accurate variance estimation with the need to preserve samples for the main study. A recurring challenge in this literature is that the sampling distribution of the sample variance is right-skewed, with a median strictly less than the true variance. Consequently, the
probability that a pilot-based variance estimate underestimates the true variance is strictly greater than $50\%$, leading to a systematic risk of underpowered designs. 

\citet{browne1995use} and \citet{kieser1996use} address this issue in single-experiment settings by applying \emph{correction factors} that inflate pilot-based variance estimates, thereby guaranteeing target power with a specified probability. These methods, however, do not directly extend to multi-experiment allocation problems with shared resource constraints. We extend this idea to multi-experiment settings by applying correction factors to pilot-based standard deviation estimates, thereby mitigating the power loss inherent in naive plug-in approaches. Our method draws on robust optimization principles, which address uncertainty by ensuring worst-case performance guarantees.

\section{Model} \label{sec:model}

We consider a setting where a platform (or an experimenter) runs \( M \) independent experiments in parallel and must allocate a fixed number \( N \in \mathbb{Z}^+ \) of users (human subjects) across them. For simplicity, we assume that each subject participates in at most 
one experiment, ensuring independence of outcomes across experiments. 
Relaxing this assumption would introduce cross-experiment dependence 
and fundamentally alter the structure of the allocation problem; 
we defer this extension to future work. 

Let \( E = \{e_1, e_2, \dots, e_M\} \) denote the set of experiments. The platform selects an allocation vector \( \vec{n} = (n_1, \dots, n_M) \), 
where \( n_i \) indicates the number of subjects assigned to experiment \( e_i \). An allocation \( \vec{n} \) is said to be feasible if 
\( n_i \geq 0 \) for all \( i \in [M] := \{1, \dots, M\} \), 
and the total number of subjects satisfies 
\( \sum_{i=1}^M n_i \le N \). Without loss of generality, we assume that experiment $e_i$ is associated
with the following one-sided hypothesis test:
\begin{eqnarray*}
H_{i,0}: \mu_i \leq \theta_i \,\, \mbox{ and } \,\,
H_{i,1}: \mu_i > \theta_i 
%\quad \text{(Alternative hypothesis: the true mean outcome exceeds the threshold)}
\end{eqnarray*}
\noindent
where $\mu_i$ denotes the true mean outcome of experiment $e_i$, and
$\theta_i$ is the threshold required for $e_i$ to be considered
effective. If we fail to reject the null hypothesis, we conclude that
there is insufficient statistical evidence that the outcome exceeds the
threshold; otherwise, we conclude that it does exceed the threshold and
a \emph{discovery} has been made. 

We make the following assumptions regarding the statistical framework.

\vspace{1mm}
\begin{assumption}
\label{assumption:2}
To test the hypothesis in experiment \( e_i \), \( i \in [M] \), the platform collects \( n_i \) independent and identically distributed (i.i.d.) samples \( S_i = \{X_{i,1}, \dots, X_{i,n_i}\} \), where each observation is drawn from a Normal distribution \( X_{i,j} \sim N(\mu_i, \sigma_i^2) \). These samples are used to compute a test statistic and decide whether to reject the null hypothesis \( H_{i,0} \).
\end{assumption}

\vspace{1mm}
%\begin{remark} \label{rem:interpretation}
We provide a practical interpretation of the idealized variable
$X_{i,j}$ in Assumption~\ref{assumption:2}. In practice, experiments often
compare outcomes between a treatment group and a control group; our
model abstracts this comparison into a single random variable
$X_{i,j}$ whose mean equals the treatment effect~$\mu_i$. From a managerial perspective, it is helpful to view $X_{i,j}$ as the
\emph{realized incremental value} generated by an individual user
interaction. For example, consider an experiment that evaluates a new
``One-Click Checkout'' button. In this setting, $\mu_i$ represents the
true expected revenue lift per user (e.g., \$0.50), capturing the
underlying effect of interest. The random variable $X_{i,j}$ then
represents the incremental revenue observed in the $j$-th user
session. Individual sessions may exhibit substantial variability due
to noise~$\sigma_i$—for instance, generating no additional revenue or
a large purchase—but aggregating these incremental values across users
recovers the true business impact. Finally, although the Normality assumption in Assumption~\ref{assumption:2}
is idealized, it is standard in large-scale experimentation. Furthermore, our results rely on the sample mean $\bar{X}_i$ and sample variance to have normal and chi-squared distributions, respectively, which hold asymptotically through the 
Central Limit Theorem. Therefore, we view the normality assumption as more ``regular''.

%\end{remark}

\vspace{1mm}
\begin{assumption}
\label{assumption:type1}
For each experiment \( e_i \), \( i \in [M] \), the platform seeks to control the Type 1 error---i.e., the probability of rejecting the null hypothesis when it is in fact true---at a level no greater than \( \alpha \), for some significance level \( \alpha \in (0, 1) \).
\end{assumption}

\vspace{1mm}
Assumption~\ref{assumption:type1} is stated for completeness as a valid hypothesis test controls Type~1 error by design.  Consequently, the allocation decision \( \vec{n} \) primarily affects statistical power. Our objective is therefore to minimize the maximum Type 2 error across all experiments.

This objective contrasts with much of the existing literature, which
typically focuses on minimizing the maximum mean squared error (MSE) in
estimation \citep{antos2008active, carpentier2011upper, deng2012active}.
Optimizing allocation with respect to Type~2 error, rather than MSE,
offers a different perspective that is often more closely aligned with
practical decision-making. Specifically, Type~2 error directly reflects how managers evaluate experimental
outcomes. By definition, it is the probability of failing to detect an
experiment that truly delivers a meaningful effect, where what
constitutes ``meaningful'' is determined by a managerial threshold
$\theta_i$. In the common case $\theta_i = 0$, the objective is to
correctly identify treatments with a statistically significant
positive effect. In many applications, however, managers often require the
effect to exceed a strictly positive threshold in order to justify
implementation costs. For example, a feature may be considered
successful only if it increases user engagement by at least $3\%$, in
which case $\theta_i = 0.03$. This flexibility makes Type~2 error a
natural performance metric in large-scale A/B testing, where the goal
is to screen out changes with insufficient practical impact.

At the same time, Type~2 error and MSE play complementary roles over the
lifecycle of experimentation. In the \emph{initial screening phase}
(Phase~1), the platform must rapidly evaluate a large set of candidate
ideas, and the decision problem is inherently binary: determining
whether an effect exceeds the threshold~$\theta_i$. In this phase, the
primary statistical risk is a Type~2 error, since failing to detect a
genuinely beneficial treatment may eliminate it from further
consideration. By contrast, once a treatment passes this screening
stage, the \emph{confirmatory phase} (Phase~2) focuses on accurately
estimating its effect size to support forecasting, resource allocation,
and launch decisions. In that setting, minimizing MSE is more
appropriate, as estimation precision rather than detection is the main
objective. Our work explicitly targets Phase~1, where power
considerations dominate and where MSE-based allocation rules may be
poorly aligned with the platform’s screening goals.

Below, we begin by discussing the case in which the standard deviation vector of our outcomes
\( \vec{\sigma} = (\sigma_1, \dots, \sigma_M) \) is known, and then we discuss the case where \( \vec{\sigma} \) is unknown. %{\color{red} ST: I commented out the discussion that $\vec{\sigma}$ is estimated using a pooled or two-sample variance estimator because the final formula is different from $s/\sqrt{n}$. This may raise confusion.} %As a reminder, $\vec{\sigma}$ represents the estimated treatment effect variance. Typically, $\vec{\sigma}$ is estimated using a pooled or two-sample variance estimator. In the next section, we assume this is known and show some preliminary results under this simplified case. 

\subsection{The case with known $\vec \sigma$}

In this case, we use the standard $z$-test where the test statistic for experiment \( e_i \) is given by
\[
Z_i = \frac{\bar{X}_i - \theta_i}{\sigma_i / \sqrt{n_i}},
\]

\vspace{1mm}
\noindent
where \( \bar{X}_i \) denotes the sample mean of the observations 
\( X_{i,1}, \dots, X_{i,n_i} \). Under the null hypothesis and Assumption~\ref{assumption:2}, one can view the observations as drawn from a Normal distribution with mean \( \mu_i = \theta_i \) and variance \( \sigma_i^2 \). 
Thus, the test statistic \( Z_i \) follows a standard Normal distribution (under the null), 
i.e., \( Z_i \sim N(0, 1) \). Let \( q_{1-\alpha} \) denote the \( (1 - \alpha) \)-quantile of the standard Normal distribution. We then apply the standard one-sided decision rule that controls Type~1 error (under Assumption~\ref{assumption:type1}): 
\begin{eqnarray*}
\mbox{reject \( H_{i,0} \) if \( Z_i > q_{1-\alpha} \); otherwise, we fail to reject \( H_{i,0} \)}. 
\end{eqnarray*}

%\noindent
To analyze the power of this test, as is standard in the literature (e.g., \citealt{cohen2013statistical}, \citealt{fleiss2013statistical}), we introduce a design parameter $\Delta_i > 0$, referred to as the \emph{minimum detectable gap} (MDG). The MDG represents the smallest excess over the threshold $\theta_i$
that is considered meaningful for experiment $e_i$. Power is then
evaluated under the hypothetical mean $\mu_i = \theta_i + \Delta_i$.
Note that this does not imply that the actual gap $\mu_i - \theta_i$ equals
$\Delta_i$, since $\mu_i$ is unknown, but rather that the experiment is designed to guarantee the desired power whenever the true gap is at least $\Delta_i$. 

Specifically, under the design scenario $\mu_i = \theta_i + \Delta_i$, the test
statistic $Z_i$ has mean $\delta_i = \tfrac{\Delta_i \sqrt{n_i}}{\sigma_i}$
and variance $1$, i.e., $Z_i \sim N(\delta_i,1)$. Consequently, the
Type~2 error is
\begin{eqnarray}
\beta(\sigma_i, n_i; \Delta_i) \ := \ 
\mathbb{P}(Z_i \le q_{1-\alpha} \mid \mu_i = \theta_i + \Delta_i)
\ = \ 
\Phi\!\left(q_{1-\alpha} - \frac{\Delta_i \sqrt{n_i}}{\sigma_i}\right),
\label{eq:power}
\end{eqnarray}
where $\Phi(\cdot)$ denotes the standard Normal CDF. Now, suppose an experiment is designed with $\Delta_i = 1\%$
and $n_i$ is chosen so that the Type~2 error at this design gap is $5\%$.
If the true gap $\mu_i - \theta_i$ is actually larger than $1\%$, then with the
same sample size $n_i$ the noncentrality parameter
$\tfrac{(\mu_i - \theta_i)\sqrt{n_i}}{\sigma_i}$ increases, and the
actual Type~2 error will be strictly smaller than $5\%$.

Choosing $\Delta_i > 0$ is essential: requiring power against
arbitrarily small gaps (i.e., $\Delta_i = 0$) would necessitate
unbounded sample sizes and would rarely be of practical value. In
practice, the choice of $\Delta_i$ is context-dependent. It is often set
to reflect the smallest improvement that justifies implementation costs
(e.g., a 1\% lift in click-through rate or a 0.5\% increase in revenue).
In other settings, available traffic or resource constraints dictate the
feasible $\Delta_i$, since with limited sample size only larger gaps can
be reliably detected. In this way, $\Delta_i$ provides a bridge between
statistical design and managerial priorities.

Recall that the platform must choose the allocation vector \( \vec{n} \) 
to minimize the maximum Type 2 error across all experiments. Formally, the platform solves the following optimization problem:
\begin{eqnarray*}
\textbf{POWER-OPT:} \hspace{15mm} \beta^*(\vec \sigma) \ := \ \min_{\vec n} && \max_{i \in [M]} \ \{\beta(\sigma_i, n_i)\} \\
\mbox{subject to} && \sum_{i=1}^M n_i \le N, \quad n_i \ge 0 \ \ \forall i  \in [M] 
\end{eqnarray*}

%\vspace{1mm}
\noindent
While in practice each \( n_i \) must be an integer, we follow the standard convention in the literature and relax this requirement, allowing \( n_i \in \mathbb{R}_{\ge 0} \) 
to facilitate analytical tractability. We refer to the above problem as \textbf{POWER-OPT} because minimizing the maximum Type 2 error is equivalent to maximizing the minimum power across all experiments. We use $\vec n^*(\vec\sigma)$ to denote the optimal solution to \textbf{POWER-OPT}.

The complete analysis of \textbf{POWER-OPT} together with the comparison with the more standard approach that minimizes the maximum MSE is given in Section \ref{sec:known_sigma}.

\subsection{The case with unknown $\vec \sigma$}

In most applications, the standard deviations $\vec{\sigma}$ are unknown. In this case we cannot directly compute the Type 2 error of experiment \( e_i \) using Equation~\eqref{eq:power}. A common statistical remedy is to replace the $z$-test with a $t$-test and evaluate the Type 2 error using the $t$-distribution in place of the Normal distribution. 

Specifically, we define the new test statistic \( T_i \) as follows:
\begin{eqnarray*}
T_i = \frac{\bar{X}_i - \theta_i}{\tilde{S}_i / \sqrt{n_i}},
\end{eqnarray*}

%\vspace{1mm}
\noindent
where \( \tilde{S}_i \) denotes the classic sample standard deviation computed 
from \( n_i \) i.i.d.\ observations. The corresponding Type 2 error for experiment \( e_i \) is given by
\begin{eqnarray*}
\beta^T(n_i) \ := \ 
\mathbb{P}(T_i \le t_{1-\alpha, n_i-1} \mid \mu_i = \theta_i + \Delta_i),
\end{eqnarray*}

%\vspace{1mm}
\noindent
where \( t_{1-\alpha, n_i - 1} \) denotes the $(1-\alpha)$-quantile of 
the $t$-distribution with \( n_i - 1 \) degrees of freedom. Note that under the alternative \( \mu_i = \theta_i + \Delta_i \), 
the statistic \( T_i \) follows a noncentral $t$-distribution with 
\( n_i - 1 \) degrees of freedom and noncentrality parameter $\delta_i = \frac{\Delta_i \sqrt{n_i}}{\sigma_i}$. 

This leads to the following optimization problem, analogous to 
\textbf{POWER-OPT}:
\begin{eqnarray*}
\min_{\vec n} && \max_{i \in [M]} \ \{\beta^T(n_i)\} \\
\mbox{subject to} && \sum_{i=1}^M n_i \le N, \quad n_i \ge 0 \ \ \forall i  \in [M] 
\end{eqnarray*}

%\vspace{1mm}
\noindent
While this formulation provides a direct way to optimize the allocation 
vector $\vec{n}$, solving it at scale is computationally challenging. 
Evaluating $\beta^T(n_i)$ requires numerical computation of the 
noncentral $t$ cumulative distribution function, and this evaluation 
must be embedded within an optimization over $\vec{n}$. 
As the number of experiments grows, the resulting problem becomes 
computationally intensive and difficult to scale efficiently. For this reason, 
we do not pursue this approach further in the paper.

\vspace{2mm}
\textbf{Pilot experiment and the naive plug-in method.} A more practical alternative—commonly used in the literature (e.g. \citealt{browne1995use,kieser1996use,Whitehead2016, Teresi2022, Kunselman2024}) and widely adopted in real-world applications—is to assume that, for each experiment \( e_i \), the platform has access to an estimate \( S_i \) of the true standard deviation \( \sigma_i \), obtained from a small pilot study of size \( \epsilon_i \ge 2 \). The size of a pilot study can vary substantially depending on the scale of the main experiment—ranging from 15 to 150 participants in clinical trials, to several thousand in experiments conducted by large tech companies with high customer traffic \citep{browne1995use, kieser1996use, Teare2014}. These pilot-based estimates are then used to inform the sample size allocation for the main experiments. Specifically, \( S_i \) is the sample standard deviation computed from \( \epsilon \) i.i.d.\ pilot observations, and satisfies the following distributional property:
\[
\frac{(\epsilon_i - 1) S_i^2}{\sigma_i^2} \ \sim \ \chi^2_{\epsilon_i - 1},
\]

%\vspace{1mm}
\noindent
where \( \chi^2_n \) denotes the chi-squared distribution with \( n \) degrees of freedom. We then use the estimated  
\( \vec{S} = (S_1, \dots, S_M) \) in place of the real unknown $\vec{\sigma}$ to determine the allocation vector 
\( \vec{n} \).

A natural approach is to apply a \emph{naive plug-in} method, where we directly substitute \( \vec{S} \) for \( \vec{\sigma} \) in the \textbf{POWER-OPT} formulation and solve the resulting problem. This method treats \( \vec{S} \) as if it were the true standard deviation vector and proceeds under the $z$-test framework for known variances. Similarly, we follow the approach of \citet{browne1995use, kieser1996use}, which assumes the true variance $\vec{\sigma}$ is well approximated by $\vec{S}$. However, prior work has shown that this naive plug-in method can perform poorly, even in the simplified case with only a single experiment. % and no resource constraint. 

\vspace{2mm}
\textbf{Using correction factors.}
To address the shortcomings of the naive plug-in approach, the literature
proposes inflating the pilot-based standard deviation estimates by a
\emph{correction factor}. This adjustment accounts for estimation
uncertainty and ensures that the resulting design achieves the desired
power with a specified probability
(e.g., \citealt{browne1995use, kieser1996use}). Although the sample
standard deviation is an unbiased estimator of $\sigma_i$ in expectation,
its distribution is skewed and exhibits high variability. As a result,
there is a nontrivial chance—greater than $50\%$—that $S_i$ underestimates
$\sigma_i$, leading to underpowered designs if left uncorrected.
Inflating $S_i$ by an appropriate factor mitigates this risk and provides
more reliable high-probability guarantees. In our setting, this adjustment corresponds to replacing \( \sigma_i \) with \( \sqrt{k_i} S_i \), for some \( k_i\) for all $i \in [M]$, and solving the resulting \textbf{POWER-OPT} problem using these adjusted estimates. This is the approach we adopt in this paper. (Throughout, we will also refer to $\vec k$ as the \emph{inflation} factors.) 

We consider the set of allocation vectors of the form \(\vec n =  \vec{n}^*(\vec{k}, \vec{S}) \), obtained by solving \textbf{POWER-OPT} using the adjusted plug-in standard deviations. That is, for each pair $(\vec k, \vec S)$, the vector \( \vec{n}^*(\vec{k}, \vec{S}) \) is the optimal solution to  
\begin{eqnarray*}
\min_{\vec n} && \max_{i \in [M]} \ \{\beta(\sqrt{k_i} S_i, n_i)\} \\
\mbox{subject to} && \sum_{i=1}^M n_i \le N, \quad n_i \ge 0 \ \ \forall i  \in [M] \nonumber
\end{eqnarray*}

%\vspace{1mm}
\noindent
Under the allocation \( \vec{n}^*(\vec{k}, \vec{S}) \), the maximum Type 2 error across all experiments is given by:
\begin{eqnarray}
\tilde{\beta}^*(\vec{k}, \vec{S}, \vec{\sigma}) \ := \ 
\max_{i \in [M]} 
\big\{ \beta\big(\sigma_i, n^*_i(\vec{k}, \vec{S})\big) \big\}. \label{eq:tildebeta*}
\end{eqnarray}

%\vspace{1mm}
\noindent
We treat the correction vector \( \vec{k} \) as the decision variables to optimize, with the goal of ensuring that the realized maximum Type 2 error \( \tilde{\beta}^*(\vec{k}, \vec{S}, \vec{\sigma}) \) closely approximates the true optimum \( \beta^*(\vec{\sigma}) \). 

Since \( \vec{S} \) is random, the realized \( \tilde{\beta}^*(\vec{k}, \vec{S}, \vec{\sigma}) \) is itself a random quantity. Thus, any meaningful choice of \( \vec{k} \) must account for its distributional behavior. To guide the selection of the correction vector \( \vec{k} \), we introduce three optimization-based frameworks. Each framework reflects a distinct objective the platform may wish to prioritize and leads to a different problem formulation. We refer to them as \emph{tolerance-based} optimization, 
\emph{confidence-based} optimization, and \emph{expectation-based} optimization, respectively. Note that there are different objectives, as opposed to one objective in Section~\ref{sec:known_sigma}, due to the different probabilistic guarantees one can make about the random power \( \tilde{\beta}^*(\vec{k}, \vec{S}, \vec{\sigma}) \). We discuss them below.
\vspace{1mm}
\begin{enumerate}
%\vspace{2mm}
%\noindent
\item \textbf{Tolerance-based Optimization.} This framework prioritizes guaranteeing a high-confidence performance bound. 
Given a desired confidence level $\gamma \in (0,1)$, the goal is to choose 
$\vec{k}$ such that, with probability at least $\gamma$, the realized 
maximum Type 2 error $\tilde{\beta}^*(\vec{k}, \vec{S}, \vec{\sigma})$ 
is no larger than a threshold $\beta^*(\vec\sigma) + \delta$, where as a reminder $\beta^*(\vec\sigma)$ is the oracle best Type~2 error assuming knowledge of $\vec{\sigma}$. Among all such feasible 
allocations, we seek to minimize the smallest possible tolerance $\delta$. 
Formally, this leads to the following optimization problem:
\begin{eqnarray*}
\textbf{TOL:} \hspace{10mm} \delta^*(\gamma, \vec{\epsilon}, \vec \sigma) \ := \ \min_{\vec{k}, \delta} && \delta \\
\text{subject to} && 
\mathbb{P}(
\tilde{\beta}^*(\vec{k}, \vec{S}, \vec{\sigma}) \le \beta^*(\vec\sigma) + \delta) \ge \gamma, \\
&& k_i \ge 1, \quad \forall i \in [M], \\
&& \delta \ge 0.
\end{eqnarray*}
%\vspace{1mm}
\item \textbf{Confidence-based Optimization.} In contrast to \textbf{TOL}, this framework fixes a performance target $\delta$ and seeks to maximize the confidence level under which it can be guaranteed. That is, given a pre-specified tolerance $\delta$, we 
choose $\vec{k}$ to maximize the probability that the realized maximum Type 2 error remains below this threshold. This approach is appropriate when the platform is willing to tolerate a certain level of error but wants to make that guarantee hold as reliably as possible. This corresponds to the following optimization problem:
\begin{eqnarray*}
\textbf{CONF:} \hspace{10mm} \gamma^*(\delta, \vec{\epsilon}, \vec \sigma) \ := \ \max_{\vec{k}, \gamma} && \gamma \\
\text{subject to} && 
\mathbb{P}(
\tilde{\beta}^*(\vec{k}, \vec{S}, \vec{\sigma}) \le \beta^*(\vec\sigma) + \delta) \ge \gamma, \\
&& k_i \ge 1, \quad \forall i \in [M], \\
&& \gamma \in (0,1).
\end{eqnarray*}

\vspace{1mm}
%\noindent
\item \textbf{Expectation-based Optimization.} Rather than controlling the probability of meeting a specific threshold, this framework directly minimizes the expected value of the 
realized worst-case Type 2 error. This objective reflects a risk-neutral perspective, focusing on average-case performance over the randomness in $\vec{S}$. This leads to the following optimization problem:
\begin{eqnarray*}
\textbf{EXP:} \hspace{10mm} g^*(\vec{\epsilon}, \vec \sigma) \ := \ \min_{\vec{k}} && 
\mathbb{E}[
\tilde{\beta}^*(\vec{k}, \vec{S}, \vec{\sigma})] \\
\text{subject to} && 
k_i \ge 1, \quad \forall i \in [M].
\end{eqnarray*}
\end{enumerate}

\vspace{2mm}
%Together, \textbf{TOL}, \textbf{CONF}, and \textbf{EXP} offer complementary perspectives on how to select the correction factors $\vec{k}$---by targeting high-probability guarantees, threshold-based reliability, and expected performance, respectively. These objectives naturally arise from two major tradeoffs. Since $\vec{S}$ is random, the realized optimal Type~2 error $\tilde{\beta}^*(\vec{k}, \vec{S}, \vec{\sigma})$ is also random. Consequently, one may observe \emph{Scenario~1}: $\tilde{\beta}^*(\vec{k}, \vec{S}, \vec{\sigma})$ is very close to the ideal $\beta^*(\sigma)$, but only with relatively small probability; or \emph{Scenario~2}: $\tilde{\beta}^*(\vec{k}, \vec{S}, \vec{\sigma})$ is within a wider tolerance of $\beta^*(\sigma)$, but with probability close to one. This is a fundamental tradeoff: increasing the probability guarantee that $\tilde{\beta}^*(\vec{k}, \vec{S}, \vec{\sigma})$ is close to $\beta^*(\sigma)$ necessarily requires a wider tolerance, and vice versa. Which scenario is preferable depends on the decision risk: in low-stakes or repeatable settings (e.g., exploratory screening where an underperforming allocation can be corrected in subsequent iterations), an experimenter may accept Scenario~1; whereas in high-stakes or one-shot decisions (e.g., an expensive rollout or policy change), the experimenter may prefer Scenario~2. Therefore, we present both the \textbf{TOL} and \textbf{CONF} objectives, which control tolerance and confidence, respectively. {\color{red} ST: this paragraph feels redundant now. Delete it? Or make it shorter, half long}

It is worth noting that while \textbf{TOL}, \textbf{CONF}, and \textbf{EXP} offer valuable conceptual insights, unfortunately, the associated optimization problems are generally intractable for instances with a large number of experiments 
(i.e., large \( M \)). Specifically, even when $\vec{\sigma}$ is
known, both \textbf{TOL} and \textbf{CONF} are chance-constrained
stochastic programs whose feasibility depends on probabilities (over
the pilot randomness~$\vec{S}$) of events defined through
$\tilde{\beta}^*(\vec{k},\vec{S},\vec{\sigma})$. Evaluating these
probabilities requires computing tail probabilities of nonlinear
functions of~$\vec{S}$ that depend on weighted sums of independent
$\chi^2$ random variables, for which no closed-form expressions exist
in the general $M$-experiment case. A direct numerical approach would require Monte Carlo methods nested inside an outer
optimization over~$\vec{k}$, leading to non-smooth, non-convex, and
computationally prohibitive procedures. 

To address these challenges, in Section~\ref{sec:approximate k} we develop \emph{surrogate reformulations} of each objective through the lens of robust optimization. The proposed surrogate reformulations yield, respectively, an upper bound, a lower bound, and an upper bound on the objectives of \textbf{TOL}, \textbf{CONF}, and \textbf{EXP}. As such, they provide \emph{conservative} estimates of the best achievable tolerance, reliability, and expected maximum Type 2 error under any allocation vector of the form \( \vec{n}^*(\vec{k}, \vec{S}) \). These reformulations are computationally tractable and remain close to
the original formulations under suitable conditions. As a result, they
provide a principled and scalable approach for approximating the
optimal correction in large experiment portfolios.

\section{Known $\vec \sigma$}  \label{sec:known_sigma}
%{\color{red} DH: Worth thinking about for organization but can we move 3.2 and all discussion after this section. This way we just cleanly talk about the known variance case and the results than move on.}
In this section, we examine the optimal allocation vector \( \vec{n}^*(\vec{\sigma}) \) and the corresponding optimal maximum Type 2 error \( \beta^*(\vec{\sigma}) \) when $\sigma$ is known. We then compare this power-optimal approach with a traditional allocation strategy that minimizes the total Mean Squared Error (MSE), highlighting key differences in their objectives, underlying trade-offs, and resulting allocations.

\subsection{Optimal Allocation under \textbf{POWER-OPT}} \label{subsec:power_opt_solution}

The following proposition characterizes the optimal allocation vector 
under \textbf{POWER-OPT}, along with the corresponding maximum 
Type 2 error.

\vspace{1mm}
\begin{proposition} \label{prop:power_opt_optimal}
The optimal allocation under \textbf{POWER-OPT} is given by
\[
n_i^*(\vec{\sigma}) \ = \ 
N \cdot \frac{\left( \frac{\sigma_i}{\Delta_i} \right)^2}
{\sum_{j=1}^M \left( \frac{\sigma_j}{\Delta_j} \right)^2},
\quad \forall i \in [M],
\]

%\vspace{1mm}
\noindent
which equalizes the Type 2 error across all experiments, i.e., $\beta(\sigma_1, n_1^*(\vec{\sigma})) 
= \cdots = 
\beta(\sigma_M, n_M^*(\vec{\sigma})) 
= \beta^*(\vec{\sigma}),
$
where the common (and optimal) Type 2 error is given by
\[
\beta^*(\vec{\sigma}) \ = \ 
\Phi\left( q_{1-\alpha} - 
\sqrt{ \frac{N}{\sum_{j=1}^M 
\left( \frac{\sigma_j}{\Delta_j} \right)^2} } \right).
\]
\end{proposition}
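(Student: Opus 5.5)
The plan is to use the strict monotonicity of $\beta$ in $n_i$ to turn the minimax problem into a problem of equalizing a scalar quantity. From \eqref{eq:power}, $\beta(\sigma_i,n_i)=\Phi\!\left(q_{1-\alpha}-\Delta_i\sqrt{n_i}/\sigma_i\right)$. Since $\Phi$ is strictly increasing, this is a strictly decreasing function of $n_i\ge 0$. Hence
\[
\max_i \beta(\sigma_i,n_i)=\Phi\Bigl(q_{1-\alpha}-\min_i \tfrac{\Delta_i\sqrt{n_i}}{\sigma_i}\Bigr),
\]
and \textbf{POWER-OPT} is equivalent to maximizing $t(\vec n):=\min_i \Delta_i\sqrt{n_i}/\sigma_i$ subject to $\sum_i n_i\le N$, $n_i\ge 0$.

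The first step is an upper bound on $t$. For any feasible $\vec n$ and each $i$, $t(\vec n)\le \Delta_i\sqrt{n_i}/\sigma_i$, which rearranges to $n_i\ge t(\vec n)^2(\sigma_i/\Delta_i)^2$. Summing over $i$ and using the budget gives
\[
N\ \ge\ \sum_i n_i\ \ge\ t(\vec n)^2\sum_j(\sigma_j/\Delta_j)^2,
\qquad\text{so}\qquad
t(\vec n)\le \sqrt{N\big/\textstyle\sum_j(\sigma_j/\Delta_j)^2}.
\]
The second step shows the bound is attained. Plugging in the proposed $n_i^*(\vec\sigma)$, one checks directly that $\Delta_i\sqrt{n_i^*}/\sigma_i=\sqrt{N/\sum_j(\sigma_j/\Delta_j)^2}$ for every $i$. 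So $n_i^*(\vec\sigma)$ achieves the bound, all Type~2 errors are equal, and their common value is the stated $\beta^*(\vec\sigma)$.

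The third step is uniqueness, which justifies speaking of ``the'' optimal allocation. If $\vec n$ is optimal, equality must hold throughout the chain above. That forces $\sum_i n_i=N$ and $n_i=t^2(\sigma_i/\Delta_i)^2$ for every $i$, which pins down $\vec n=\vec n^*(\vec\sigma)$. Note that optimality only forces the minimum ratio, not every ratio, to equal the bound. Equality in the summed inequality, however, requires each individual inequality $n_i\ge t^2(\sigma_i/\Delta_i)^2$ to be tight, so the argument goes through.

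The only delicate point is this last equality-case argument, together with the degenerate case $\sigma_i=0$, which I will exclude by assuming $\sigma_i>0$ as implicit in the model. Everything else is routine.
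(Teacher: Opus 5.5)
Your proof is correct, but it takes a different route from the paper's.

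\textbf{The paper's route.} It uses a perturbation (exchange) argument. At any optimum all Type~2 errors must be equal, since otherwise some budget could be moved toward an experiment with larger error, and the budget must be exhausted. It then solves the system $\Phi(q_{1-\alpha}-\Delta_i\sqrt{n_i^*}/\sigma_i)=\beta^*$, $\sum_i n_i^*=N$ for $\beta^*$ and $n_i^*$. This makes the equalization principle the centerpiece, but it only gives necessary conditions. It tacitly assumes an optimum exists, which holds by compactness and continuity but is never stated. The exchange step also needs more care than it receives when several experiments tie at the maximum. As written, it moves budget away from the experiment with the larger error, which is the wrong direction.

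\textbf{Your route.} You reduce the problem to maximizing $t(\vec n)=\min_i \Delta_i\sqrt{n_i}/\sigma_i$ and produce a matching certificate. Summing $n_i\ge t^2(\sigma_i/\Delta_i)^2$ against the budget bounds $t$ for every feasible allocation, and the proposed allocation attains that bound. The equality case then forces $\sum_i n_i=N$ and each $n_i=t^2(\sigma_i/\Delta_i)^2$. This gives optimality, existence, and uniqueness at once, with no perturbation argument and no appeal to existence of a minimizer.

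\textbf{What each buys.} Your argument is more self-contained and rigorous. The paper's version foregrounds the intuition that minimax optimality means equalizing errors. That intuition is also recovered in your equality case, where every ratio $\Delta_i\sqrt{n_i}/\sigma_i$ is forced to equal the common bound.
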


\vspace{1mm}
This allocation assigns more samples to experiments with larger variance-to-effect-size ratios, as captured by the quantity \( (\frac{\sigma_i}{\Delta_i})^2 \). Equivalently, we can interpret \( \frac{\sigma_i}{\Delta_i} \) as a measure of the \emph{statistical 
difficulty} of experiment \( e_i \): It quantifies how hard it is to detect the signal \( \Delta_i \) in the presence of noise \( \sigma_i \). Intuitively, an experiment with higher variance (larger \( \sigma_i \)) or a smaller effect size (smaller \( \Delta_i \)) requires more samples to reliably distinguish the effect from random variation. The optimal allocation accounts for this by distributing the total budget in proportion to \( ( \frac{\sigma_i}{\Delta_i} )^2 \), effectively giving more resources to more difficult experiments.

Note that the optimal allocation under \textbf{POWER-OPT} equalizes the Type 2 error across all experiments. Since the goal is to minimize the maximum Type 2 error, the optimal strategy prevents any single experiment from becoming a weak link by over- or under-allocating relative to its statistical difficulty. As a result, it balances effort across experiments in a way that reflects their relative difficulty and ensures robust power guarantees.

\vspace{2mm}
\begin{remark} \label{remark1}
As discussed in Section~\ref{sec:model}, when $\vec{\sigma}$ is unknown, 
the allocation vector $\vec{n}^*(\vec{k},\vec{S})$ is obtained by solving 
\textbf{POWER-OPT} with adjusted plug-in standard deviations, i.e., by 
replacing each $\sigma_i$ with $\sqrt{k_i}S_i$. 
Applying Proposition~\ref{prop:power_opt_optimal} yields the following expression for $\vec{n}^*(\vec{k},\vec{S})$:
\begin{eqnarray}
n_i^*(\vec{k}, \vec S) \;=\; 
N \cdot \frac{k_i \bigl(\tfrac{S_i}{\Delta_i}\bigr)^2}
{\sum_{j=1}^M k_j \bigl(\tfrac{S_j}{\Delta_j}\bigr)^2},
\qquad \forall i \in [M]. 
\label{eq:n*(k,S)}
\end{eqnarray}
%Observe that what ultimately matters are not the absolute values of the $k_i$’s  but their ratios $\tfrac{k_i}{k_j}$. 

\end{remark}

\subsection{Comparison with the MSE-Minimization Approach} \label{subsec:mse_allocation}

In the context of estimating the mean outcomes \( \mu_i \) for each experiment \( e_i \), a popular objective from the literature is to minimize the maximum Mean Squared Error (MSE) across all experiments (e.g., \citealt{antos2008active, carpentier2011upper, deng2012active}). 

For the sample mean estimator 
\( \hat{\mu}_i = \frac{1}{n_i} \sum_{j=1}^{n_i} X_{i,j} \), the MSE is given by:
\[
\mathbb{E}\left[ (\hat{\mu}_i - \mu_i)^2 \right] = \frac{\sigma_i^2}{n_i},
\]

%\vspace{1mm}
\noindent
where \( \sigma_i^2 \) is the variance of the outcome 
\( X_i \sim \mathcal{N}(\mu_i, \sigma_i^2) \), and 
\( n_i \) is the number of units allocated to experiment \( e_i \). The corresponding minimax allocation problem is:
\begin{eqnarray*}
\min_{\vec{n}} && \max_{i \in [M]} \ \left\{\frac{\sigma_i^2}{n_i}\right\} \\
\text{subject to} &&
\sum_{i=1}^M n_i = N, \quad n_i \ge 0 \ \ \forall i \in [M]
\end{eqnarray*}

%\vspace{1mm}
The above formulation seeks to equalize the MSEs across all experiments, 
ensuring the worst-case estimation error is minimized. 
The optimal allocation is given by
\[
n_i^{\text{MSE}} \ := \ N \cdot \frac{\sigma_i^2}{\sum_{j=1}^M \sigma_j^2}, \quad \forall i \in [M]
\]

%\vspace{1mm}
\noindent
\citep[see, e.g.,][]{antos2008active, carpentier2011upper, deng2012active}. Comparing the MSE-optimal allocation \( \vec{n}^{\text{MSE}} \) with the power-optimal allocation \( \vec{n}^*(\vec{\sigma}) \) from Proposition \ref{prop:power_opt_optimal}, we observe a key difference in how the two approaches handle effect sizes \( \Delta_i \). Specifically, while both strategies allocate more samples to experiments with larger variances \( \sigma_i^2 \), only the power-optimal allocation accounts for signal strength by incorporating \( \Delta_i \) through the hypothesis-testing framework. As a result, MSE-based allocation may under-invest in statistically difficult experiments, leading to poor 
Type 2 error control. This highlights a fundamental distinction: MSE-optimal allocation is well-suited for estimation objectives, whereas power-optimal allocation is tailored for decision-making under uncertainty—especially when the goal is to identify experiments with \( \mu_i \ge \theta_i \). 

\begin{figure}[htbp] % Use [htbp] for flexible placement: here, top, bottom, or page
    \centering
    \includegraphics[width=0.7\textwidth]{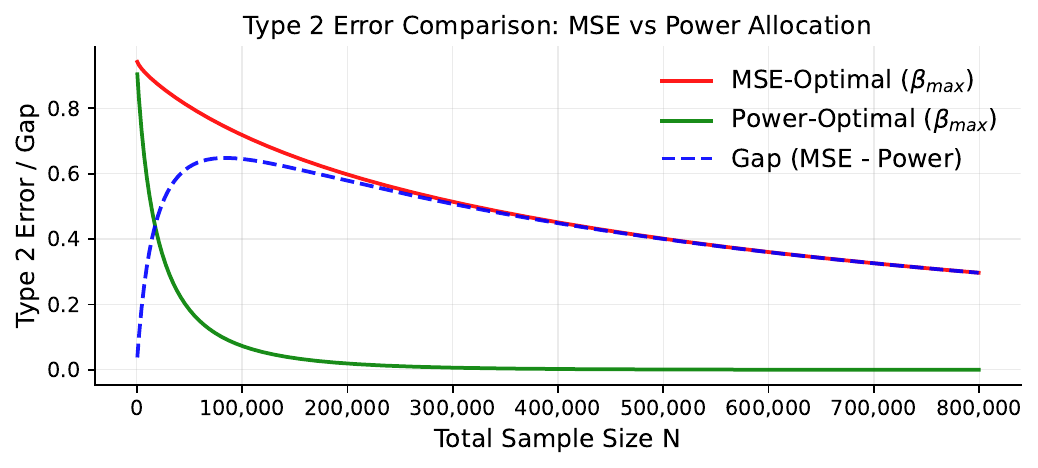} 
    \caption{Comparison of worst-case Type 2 error under power-optimal and MSE-optimal allocations as a function of \( N \). Parameters: \( M = 50 \), \( \alpha = 0.05 \), \( \sigma_i \) and \( \Delta_i \) are randomly generated from $[0.5,2]$ and $[0.01, 1]$, repeated for $R=1,000$ times.}
    \label{fig:type2-gap}
\end{figure}

Figure~\ref{fig:type2-gap} reports simulation results comparing the two
allocation strategies across different total sample sizes~$N$. The red
curve shows the worst-case Type~2 error under the benchmark
MSE-optimal allocation, whereas
the green curve shows the substantially lower error achieved by the
power-optimal allocation derived in
Section~\ref{subsec:power_opt_solution}. The blue dashed curve plots the
difference between the two and quantifies the efficiency gain from
optimizing directly for detection power.

The figure highlights three regimes as the total budget~$N$ varies.
When resources are scarce (small~$N$), neither allocation provides
enough samples to reliably detect effects. As a result, the Type~2
error under both methods approaches the trivial upper bound
$1-\alpha$, and the performance gap is negligible. At the other
extreme, where $N$ is very large, both
allocations achieve near-certain detection. As the Type~2 errors of
both methods converge to zero, the difference between them naturally
vanishes. The most pronounced contrast arises in the intermediate,
resource-constrained regime, which is often the most relevant in
practice. For example, at a total budget of $N = 80{,}000$—corresponding
to an average of $1{,}600$ samples per experiment—the MSE-optimal
allocation yields a worst-case Type~2 error of nearly $0.75$. In
contrast, the power-optimal allocation reduces this error to
approximately $0.10$, a gap of about $65$ percentage points. Even at
$N = 200{,}000$, the gap remains close to $0.60$, indicating that
MSE-based allocation requires substantially more traffic to achieve the
same level of detection performance as the power-optimal approach.

\section{Unknown $\vec \sigma$: Exact Analysis of a Two-Experiment Setting } \label{sec:unknown sigma 2exp}

We now turn to the more realistic case in which $\vec{\sigma}$ is unknown. 
Before addressing the fully general setting, we first study a simplified 
two-experiment case ($M = 2$) with identical pilot sample sizes 
($\epsilon_1 = \epsilon_2 = \epsilon$). 
The purpose of this section is to characterize the structure of the 
\emph{ideal} correction factor under an oracle benchmark in which the 
true standard deviations are known. 
Although this oracle solution is not directly implementable in practice, 
it yields sharp and interpretable insights into how optimal variance 
inflation depends on the relative difficulty indices 
$\sigma_i/\Delta_i$ and on the decision-maker’s risk criterion 
(\textbf{TOL}, \textbf{CONF}, or \textbf{EXP}).

Under this symmetric two-experiment setting, the optimization problems 
introduced in Section~3 reduce to a univariate problem in the ratio 
\[
r := \frac{k_1}{k_2}.
\]
By exploiting the symmetry of the $F$-distribution under equal degrees 
of freedom, we obtain closed-form and interpretable expressions for 
the optimal inflation in the \textbf{TOL} and \textbf{CONF} formulations. 
Although deriving an explicit solution for the \textbf{EXP} formulation 
is challenging, we show that the same qualitative structure of the 
optimal correction factor extends to that case as well.

\subsection{Analysis of \textbf{TOL} and \textbf{CONF}}

For analytical tractability, we start with providing alternative formulations of \textbf{TOL} and \textbf{CONF}.

Recall that we focus on allocations of the form $\vec{n} = \vec{n}^*(\vec{k}, \vec{S})$ for some $\vec{k}$, where $\vec{n}^*(\vec{k}, \vec{S})$ is given in (\ref{eq:n*(k,S)}). When $M = 2$, the corresponding maximum Type~2 error becomes:
\begin{eqnarray*}
\tilde{\beta}^*(\vec{k}, \vec{S}, \vec{\sigma})  
= \max_{i \in [2]} \big\{ \beta(\sigma_i, n_i^*(\vec{k}, \vec{S})) \big\} 
%& = & \Phi\left(q_{1-\alpha} - \sqrt{N} \cdot \min\left\{\frac{\Delta_1}{\sigma_1} \sqrt{\frac{k_1 \left(\frac{S_1}{\Delta_1} \right)^2}{\sum_{j=1}^2 k_j \left( \frac{S_j}{\Delta_j} \right)^2}}, \ \frac{\Delta_2}{\sigma_2} \sqrt{\frac{k_2 \left(\frac{S_2}{\Delta_2} \right)^2}{\sum_{j=1}^2 k_j \left( \frac{S_j}{\Delta_j} \right)^2}}  \right\} \right) \\[1mm]
%& = & \Phi\left(q_{1-\alpha} - \min_{i \in [2]}  \left\{\frac{\Delta_i}{\sigma_i}  \sqrt{n^*_i(\vec k, \vec S)} \right\} \right) \\[1mm]
& = & \Phi\left( q_{1-\alpha} - \sqrt{N} \cdot \sqrt{ \frac{1}{\max(U_1, U_2)} } \right),
\end{eqnarray*}

%\vspace{1mm}
\noindent
where
\[
U_1 := \left( \frac{\sigma_1}{\Delta_1} \right)^2 + \frac{1}{r} \cdot \left( \frac{\sigma_2}{\Delta_2} \right)^2 \cdot \frac{Y_2}{Y_1}, \qquad 
U_2 := \left( \frac{\sigma_2}{\Delta_2} \right)^2 + r \cdot \left( \frac{\sigma_1}{\Delta_1} \right)^2 \cdot \frac{Y_1}{Y_2}, 
\]

\vspace{1mm}
\noindent
and %$a_i := \left( \frac{\sigma_i}{\Delta_i} \right)^2$ and 
$Y_i \sim \chi^2_{\epsilon - 1}$ for $i \in [2]$. Thus, the probability $\mathbb{P}(\tilde{\beta}^*(\vec{k}, \vec{S}, \vec{\sigma}) \leq \beta^*(\vec{\sigma}) + \delta)$ can be expressed as $\mathbb{P}(\max(U_1, U_2) \leq d(\delta))$, where the critical threshold $d(\delta)$ is given by
\begin{eqnarray}
d(\delta) := \frac{N}{\left( q_{1-\alpha} - \Phi^{-1}(\beta^*(\vec\sigma) + \delta) \right)^2}. \label{eq:d(delta)}
\end{eqnarray}

\vspace{1mm}
\noindent
Note that since \( \tilde\beta^*(\vec k, \vec S, \vec \sigma) \le 1 - \alpha \) for all \( \vec k \ge \vec 1 \) and \( \vec S \), without loss of generality, we can focus on the case \( \delta \in (0, 1 - \alpha - \beta^*(\vec\sigma)) \). Focusing on this interval also has the benefit that $d(\delta)$ is monotonically increasing. Moreover, \(d(\delta) \in \big(\sum_{j \in [2]}(\frac{\sigma_j}{\Delta_j})^2, \infty \big)\) almost surely. 

Given the above set-up, \textbf{TOL} can be expressed as:
\begin{eqnarray*}
\delta^*(\gamma, \epsilon, \vec{\sigma}) \ \ 
:= \ \ \min_{r, \delta} && \delta \\
\text{subject to} && \mathbb{P}\left( \max(U_1, U_2) \leq d(\delta) \right) \geq \gamma, \\
&& r > 0, \quad \delta \geq 0
\end{eqnarray*}

\noindent
Similarly, \textbf{CONF} can be expressed as
\begin{eqnarray*}
\gamma^*(\delta, \epsilon, \vec{\sigma}) 
\ \ := \ \ \max_{r, \gamma} && \gamma \\
\mbox{subject to} && \mathbb{P}\left( \max(U_1, U_2) \leq d(\delta) \right) \geq \gamma, \\
&& r > 0, \quad \gamma \in (0,1)
\end{eqnarray*}

\noindent

%\vspace{1mm}
The probability $\mathbb{P}\left( \max(U_1, U_2) \leq d(\delta) \right)$ plays a central role in the analysis of \textbf{TOL} and \textbf{CONF}. In what follows, we first examine key properties of this probability function. First, note that the event \( \max(U_1, U_2) \le d(\delta) \) is equivalent to $U_1 \le d(\delta)$ and $U_2 \le d(\delta)$. Letting $a_i := (\frac{\sigma_i}{\Delta_i})^2$ for $i \in [2]$ and substituting the definitions of \( U_1 \) and \( U_2 \), this condition is equivalent to
\[
\frac{a_2}{r (d(\delta) - a_1)}  
 \le  
\Theta  
 \le  
\frac{d(\delta) - a_2}{r a_1},
\]

\vspace{1mm}
\noindent
where \( \Theta = \frac{Y_1}{Y_2} \sim F_{\nu, \nu} \) follows an $F$-distribution with degrees of freedom \( (\nu, \nu) \), and \( \nu = \epsilon - 1 \). Let \( F_{F_{\nu,\nu}} \) denote the cumulative distribution function (CDF) of this distribution. Then,
\begin{eqnarray*}
\mathbb{P} \left( \max(U_1, U_2) \le d(\delta) \right) 
& = & \mathbb{P} \left( 
\frac{a_2}{r (d(\delta) - a_1)} \le \Theta \le 
\frac{d(\delta) - a_2}{r a_1} 
\right) \\[1mm]
& = & F_{F_{\nu, \nu}} \left( \frac{d(\delta) - a_2}{r a_1} \right)
- F_{F_{\nu, \nu}} \left( \frac{a_2}{r (d(\delta) - a_1)} \right).
\end{eqnarray*}

\vspace{1mm}
\noindent
%Since \( \tilde\beta^*(\vec k, \vec S, \vec \sigma) \le 1 - \alpha \) almost surely for all \( \vec k \ge \vec 1 \) and \( \vec S \), we focus on the case \( \delta \in (0, 1 - \alpha - \beta^*(\vec\sigma)) \). In this interval, 
Since \( d(\delta) \) is increasing for \( \delta \in (0, 1 - \alpha - \beta^*(\vec\sigma)) \) (this maps to $d(\delta) \in \big( \sum_{j \in [2]} (\frac{\sigma_j}{\Delta_j})^2, \infty \big)$), we can re-parameterize the above expression as a function of \( (r,d) \) instead of \( (r,\delta) \). Define:
\[
H(r,d) \ := \ 
F_{F_{\nu, \nu}} \left( \frac{d - a_2}{r a_1} \right)
- F_{F_{\nu, \nu}} \left( \frac{a_2}{r (d - a_1)} \right).
\]

\vspace{1mm}
\noindent
The following lemma characterizes the properties of \( H(r,d) \).

\vspace{1mm}
\begin{lemma} \label{lem:H(r,d)}
The function \( H(r,d) \) has the following properties:
\vspace{1mm}
\begin{itemize}
\item[\emph{(i)}] For fixed \( r \in (0, \infty) \), \( H(r,d) \) is increasing in 
\( d \in \big( \sum_{j \in [2]} (\frac{\sigma_j}{\Delta_j})^2, \infty \big) \).

\item[\emph{(ii)}] For fixed \( d \in \big( \sum_{j \in [2]} (\frac{\sigma_j}{\Delta_j})^2, \infty \big) \), \( H(r,d) \) is unimodal in \( r \). Specifically, $\lim_{r \to 0^+} H(r,d) = 0$, $\lim_{r \to \infty} H(r,d) = 0$, and there exists a unique maximizer \( r(d) = \argmax_{r \in (0, \infty)} H(r,d)\) where % \in (0,\infty) \) such that $H(r^*(d), d) = \max_{r > 0} \ H(r, d)$ and $H(r^*(d), d) > 0$.
\begin{eqnarray*}
r(d) = \sqrt{\frac{a_2}{d-a_1} \cdot \frac{d-a_2}{a_1}}.
\end{eqnarray*}
\end{itemize}
\end{lemma}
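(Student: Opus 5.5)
Write $u(r,d) := \frac{d-a_2}{r a_1}$ and $\ell(r,d) := \frac{a_2}{r(d-a_1)}$, so that $H(r,d) = F_{F_{\nu,\nu}}(u) - F_{F_{\nu,\nu}}(\ell)$. Let $f$ be the density of $F_{\nu,\nu}$, which is positive on $(0,\infty)$.

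For $d > a_1 + a_2$ we have $(d-a_1)(d-a_2) > a_1 a_2$, hence $u > \ell > 0$. So $H > 0$ on the whole domain.

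\textbf{Part (i).} Fix $r$. Then $u$ is strictly increasing in $d$ and $\ell$ is strictly decreasing in $d$ (since $d - a_1 > 0$). Because $F_{F_{\nu,\nu}}$ is strictly increasing, $H$ is strictly increasing in $d$.

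\textbf{Limits in $r$.} Fix $d$. As $r \to 0^+$, both $u$ and $\ell$ tend to $\infty$, so both CDF values tend to $1$ and $H \to 0$. As $r \to \infty$, both tend to $0$, so $H \to 0$.

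\textbf{Unimodality via symmetry.} Put $A := (d-a_2)/a_1$ and $B := a_2/(d-a_1)$, so $u = A/r$, $\ell = B/r$, and $A > B$. The key tool is the symmetry of $F_{\nu,\nu}$ with equal degrees of freedom: if $\Theta \sim F_{\nu,\nu}$ then $1/\Theta \sim F_{\nu,\nu}$. Equivalently, $x f(x) = x^{-1} f(1/x)$, i.e. $g(x) := x f(x)$ satisfies $g(x) = g(1/x)$.

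Differentiating,
\[
\partial_r H = -\frac{1}{r}\bigl( u f(u) - \ell f(\ell) \bigr) = -\frac{1}{r}\bigl( g(A/r) - g(B/r) \bigr).
\]
Explicitly, $g(x) = c\, x^{\nu/2}(1+x)^{-\nu}$ for $x > 0$. Its logarithmic derivative is $\frac{\nu}{2x} - \frac{\nu}{1+x}$, which vanishes only at $x = 1$. So $g$ is strictly increasing on $(0,1)$, strictly decreasing on $(1,\infty)$, and symmetric under $x \mapsto 1/x$.

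Hence, for $0 < y < x$, $g(x) = g(y)$ iff $xy = 1$, and $g(x) > g(y)$ iff $xy < 1$. To check the last claim, map both points into $(0,1]$ using $x \mapsto \min(x,1/x)$ and compare there.

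Apply this with $x = A/r$ and $y = B/r$:
\begin{itemize}
\item $\partial_r H < 0$ iff $AB/r^2 < 1$, i.e. $r > \sqrt{AB}$;
\item $\partial_r H > 0$ iff $r < \sqrt{AB}$.
\end{itemize}
So $H(\cdot,d)$ increases and then decreases. Its unique maximizer is
\[
r(d) = \sqrt{AB} = \sqrt{\frac{a_2}{d-a_1}\cdot\frac{d-a_2}{a_1}},
\]
as claimed.

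The main obstacle is establishing that $g(x) = x f(x)$ has exactly one mode, located at $1$, with the reciprocal symmetry. That single fact reduces the sign of $\partial_r H$ to the condition $r^2 \gtrless AB$. The rest is bookkeeping.
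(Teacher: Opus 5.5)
Your proof is correct and follows the paper's route. You differentiate in $r$, reduce the sign of $\partial_r H$ to comparing $g(A/r)$ with $g(B/r)$ for $g(x)=x f(x)$, and then use that $g$ peaks uniquely at $1$ and satisfies $g(x)=g(1/x)$ to pin the maximizer at $r=\sqrt{AB}$. Your check that $A>B$ whenever $d>a_1+a_2$, together with the explicit sign of $\partial_r H$ on each side of $\sqrt{AB}$, makes the unimodality fully rigorous, whereas the paper only identifies the unique critical point.
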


\vspace{1mm}
Property~(i) implies that, for any fixed $r$, $H(r,d)$ is nondecreasing in $d$; hence larger $d$ yields larger (or equal) values of $H(r,d)$, making the constraint $H(r,d)\ge \gamma$ easier to satisfy.
 Property~(ii) highlights a key structural feature: for any fixed~$d$ in the given interval, the function $H(r,d)$ is unimodal in~$r$, attaining a unique maximum at some $r^*(d) \in (0, \infty)$. This implies that among all values of $r$, only $r(d)$ maximizes the probability $H(r,d)$. Together, these two properties provide the foundation for identifying the smallest $d$ (call it $d^*$) such that $H(r,d) \ge \gamma$ for some $r > 0$. As we will show below, the value of $d^*$ must correspond to choosing $r = r(d^*)$, thereby enabling an explicit characterization of the optimal tolerance.

Using Lemma~\ref{lem:H(r,d)}, we next establish an explicit expression for the optimal inflation ratio \( r^* \) in the case with two experiments and equal pilot sizes.

\vspace{1mm}
\begin{proposition} \label{prop:optimal_k_M2}
Let $F^{-1}_{\nu, \nu}(p)$ denote the $p$-quantile of the $F$-distribution 
with degrees of freedom $(\nu, \nu)$, where $\nu = \epsilon - 1$. Then, in the two-experiment setting, we have:
\vspace{1mm}
\begin{enumerate}
    \item[\emph{(i)}] For \textbf{TOL} with target confidence $\gamma \in (0,1)$, the tolerance $\delta$ is minimized at 
    %\begin{eqnarray*}
    $r^* = \sqrt{m_1 m_2}$, 
    %\end{eqnarray*}
    %\noindent
    where $m_1 = \frac{a_2}{d^* - a_1}$,  $m_2 = \frac{d^* - a_2}{a_1}$, and the critical threshold $d^* = d(\delta^*(\gamma, \epsilon,\vec \sigma))$ is given by
    \[
    d^* = \frac{a_1 + a_2 + \sqrt{(a_1 - a_2)^2 + 4 a_1 a_2 \cdot \big(F^{-1}_{\nu, \nu}(\frac{1+\gamma}{2})\big)^2}}{2}.
    \]

     \vspace{1mm}
    \item[\emph{(ii)}] For \textbf{CONF} with threshold $\beta^*(\vec\sigma) + \delta \in (\beta^*(\vec\sigma), 1-\alpha)$, the confidence level $\gamma$ is maximized at 
    %\begin{eqnarray*}
    $r^* = \sqrt{m_1 m_2}$,  
    %\end{eqnarray*}
    %\noindent
    where $m_1 = \frac{a_2}{d(\delta) - a_1}$ and $m_2 = \frac{d(\delta) - a_2}{a_1}$.
   \end{enumerate}
\end{proposition}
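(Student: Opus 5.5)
The plan is to reduce both parts to Lemma~\ref{lem:H(r,d)} via the reparameterization $\delta \mapsto d(\delta)$. By \eqref{eq:d(delta)}, $d(\delta)$ is strictly increasing on $(0, 1-\alpha-\beta^*(\vec\sigma))$ and maps onto $(a_1+a_2,\infty)$. So the chance constraint $\mathbb{P}(\max(U_1,U_2)\le d(\delta))\ge\gamma$ is exactly $H(r,d(\delta))\ge\gamma$.

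Part (ii) is immediate. With $\delta$ fixed, $d=d(\delta)$ is fixed, and \textbf{CONF} becomes $\max_{r>0} H(r,d)$. Lemma~\ref{lem:H(r,d)}(ii) gives the unique maximizer $r(d)=\sqrt{m_1 m_2}$ with $m_1=a_2/(d-a_1)$ and $m_2=(d-a_2)/a_1$. We should also check that $H(r(d),d)\in(0,1)$, so that $\gamma$ is admissible. This holds because $m_1<m_2$: indeed $a_1a_2<(d-a_1)(d-a_2)$ whenever $d>a_1+a_2$. Hence the interval for $\Theta$ is nondegenerate and its probability is strictly positive.

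For part (i), minimizing $\delta$ is equivalent to minimizing $d$ by monotonicity. Define $G(d):=\max_{r>0}H(r,d)=H(r(d),d)$. Lemma~\ref{lem:H(r,d)}(i) shows that $H(r,\cdot)$ is increasing for each $r$, so $G$ is nondecreasing. Since $G$ is a maximum over $r$ of functions increasing in $d$, it is in fact strictly increasing.

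We then argue that the feasible set $\{d: G(d)\ge\gamma\}$ is an interval $[d^*,\infty)$ with $G(d^*)=\gamma$ and optimal $r^*=r(d^*)$. This needs continuity of $G$, which follows from the explicit $r(d)$ and the continuity of the $F$ CDF. It also needs the boundary limits $G\to 0$ as $d\downarrow a_1+a_2$ (both endpoints tend to $\sqrt{a_2/a_1}$) and $G\to1$ as $d\to\infty$.

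The main computation is solving $G(d)=\gamma$. The key observation is that at $r=r(d)$ the two endpoints are
\[
\frac{m_2}{r(d)}=\sqrt{m_2/m_1}=:t \quad\text{and}\quad \frac{m_1}{r(d)}=1/t.
\]
Because $\Theta\sim F_{\nu,\nu}$ and $1/\Theta$ have the same law, we get $F_{F_{\nu,\nu}}(1/t)=1-F_{F_{\nu,\nu}}(t)$. Therefore $G(d)=2F_{F_{\nu,\nu}}(t)-1$. Setting this equal to $\gamma$ gives $t=q:=F^{-1}_{\nu,\nu}(\tfrac{1+\gamma}{2})$, that is, $m_2/m_1=q^2$, which reads
\[
(d-a_1)(d-a_2)=q^2 a_1a_2 .
\]
This quadratic is $d^2-(a_1+a_2)d+a_1a_2(1-q^2)=0$. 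Since $q>1$ (because $\gamma>0$ and the median of $F_{\nu,\nu}$ is $1$), the larger root exceeds $a_1+a_2$. That root is
\[
d^*=\tfrac12\Bigl(a_1+a_2+\sqrt{(a_1-a_2)^2+4a_1a_2q^2}\Bigr),
\]
which is the claimed formula. The smaller root is below $a_1$ and is therefore discarded.

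The expected obstacle is rigorously establishing that the minimizing $d$ is attained exactly where $G(d)=\gamma$. This requires continuity and strict monotonicity of $G$, together with confirming that no $r\neq r(d^*)$ can achieve feasibility at a smaller $d$. The latter follows directly from $G$ being the maximum over $r$.
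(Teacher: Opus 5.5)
Your proposal is correct and takes essentially the same route as the paper: reduce to $H(r,d)$, invoke the unique maximizer $r(d)$ from Lemma~\ref{lem:H(r,d)}, use $F_{F_{\nu,\nu}}(x)+F_{F_{\nu,\nu}}(1/x)=1$ at the endpoints $\sqrt{m_2/m_1}$ and $\sqrt{m_1/m_2}$ to get $F_{F_{\nu,\nu}}(\sqrt{m_2/m_1})=\tfrac{1+\gamma}{2}$, and solve the quadratic. Your profile function $G(d)=H(r(d),d)$, with its continuity, strict monotonicity and boundary limits, is a cleaner version of the paper's contradiction argument, and you also justify the root choice, which the paper skips; one harmless slip is that as $d\downarrow a_1+a_2$ both endpoints tend to $1$, not $\sqrt{a_2/a_1}$ (that is the limit of $r(d)$ as $d\to\infty$), but only their coincidence matters for $G\to 0$.
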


\vspace{1mm}
Note that the optimal ratio $r^* = \sqrt{m_1 m_2}$ depends on the relative values of $a_1$ and $a_2$, which reflect the statistical difficulty of the two experiments. When $a_1 = a_2$, we have $m_1 = m_2$ and hence $r^* = 1$. As the gap between $a_1$ and $a_2$ widens, however, $m_1$ and $m_2$ become asymmetric, pulling $r^*$ away from $1$. This result clarifies why the naive plug-in approach (see Section \ref{sec:model}), which treats the sample standard deviations \( \vec{S} \) as if they were the true variances and implicitly assumes \( k_1 = k_2 = 1 \), i.e., \( r = 1 \), can be suboptimal. Proposition \ref{prop:optimal_k_M2} shows that the optimal inflation ratio generally differs from 1, especially when the experiments differ in statistical difficulty. This highlights the value of using a calibrated correction factor rather than relying on the naive plug-in method.

The following corollary formalizes this behavior by describing how \( r^* \) deviates from 1 based on the relative statistical difficulty of the two experiments (see also Figure~\ref{fig:r_fig}) .

\vspace{1mm}
\begin{corollary} \label{cor:optimal_2exp}
The following hold for \textbf{TOL} and \textbf{CONF}. In the two-experiment setting,
\begin{enumerate}
\item[\emph{(i)}] If $\frac{\sigma_1}{\Delta_1} = \frac{\sigma_2}{\Delta_2}$, then $r^* = 1$;
\item[\emph{(ii)}] If $\frac{\sigma_1}{\Delta_1} < \frac{\sigma_2}{\Delta_2}$, then $r^* > 1$;
\item[\emph{(iii)}] If $\frac{\sigma_1}{\Delta_1} > \frac{\sigma_2}{\Delta_2}$, then $r^* < 1$.
\end{enumerate}
\end{corollary}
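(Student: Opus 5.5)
The plan is to work directly from the closed form $r^* = \sqrt{m_1 m_2}$ in Proposition~\ref{prop:optimal_k_M2}. That result gives the same formula for \textbf{TOL} and \textbf{CONF}, with $d = d^*$ in the first case and $d = d(\delta)$ in the second. So both cases reduce to one claim: for any $d > a_1 + a_2$, the sign of $r(d)^2 - 1$ equals the sign of $a_2 - a_1$. Since $a_i = (\sigma_i/\Delta_i)^2$ and both quantities are positive, the ordering of $\sigma_1/\Delta_1$ and $\sigma_2/\Delta_2$ matches the ordering of $a_1$ and $a_2$, which gives the three cases (i)–(iii).

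The key step is the algebra on
\[
r(d)^2 \;=\; m_1 m_2 \;=\; \frac{a_2 (d - a_2)}{a_1 (d - a_1)}.
\]
Because $d > a_1 + a_2$, both $d - a_1$ and $d - a_2$ are strictly positive, so this ratio is well defined and positive. Comparing it with $1$ amounts to comparing $a_2 d - a_2^2$ with $a_1 d - a_1^2$. The difference is
\[
(a_2 - a_1)\,d - (a_2^2 - a_1^2) \;=\; (a_2 - a_1)\bigl(d - a_1 - a_2\bigr).
\]
The second factor is strictly positive, so $r(d)^2 - 1$ has the sign of $a_2 - a_1$. Taking square roots, which preserves order on positive numbers, yields $r^* = 1$, $r^* > 1$, or $r^* < 1$ according as $a_1 = a_2$, $a_1 < a_2$, or $a_1 > a_2$.

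The only point needing care, and the likely obstacle, is confirming that the relevant $d$ really lies in $(a_1 + a_2, \infty)$ in each case.

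For \textbf{CONF}, this follows from the paper's observation that $\delta \in (0, 1 - \alpha - \beta^*(\vec\sigma))$ maps $d(\delta)$ into $(\sum_j a_j, \infty)$.

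For \textbf{TOL}, one can check it directly from the explicit formula for $d^*$. The $F_{\nu,\nu}$ distribution is symmetric about $1$ in the multiplicative sense: $\Theta$ and $1/\Theta$ have the same law. Hence its median is $1$, and $F^{-1}_{\nu,\nu}\bigl(\frac{1+\gamma}{2}\bigr) > 1$ for $\gamma \in (0,1)$. Substituting this into the square root gives
\[
(a_1 - a_2)^2 + 4 a_1 a_2 \bigl(F^{-1}_{\nu,\nu}(\tfrac{1+\gamma}{2})\bigr)^2 \;>\; (a_1 + a_2)^2,
\]
so $d^* > a_1 + a_2$. With this in hand the sign argument above applies to both formulations.
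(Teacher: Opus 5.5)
Your proposal is correct and follows essentially the same route as the paper: both first establish $d > a_1 + a_2$ (via the median-one property of $F_{\nu,\nu}$ for \textbf{TOL}, and via $d(0) = a_1 + a_2$ plus monotonicity of $d(\delta)$ for \textbf{CONF}), and then read off the sign of $r^* - 1$ from $m_1 m_2$. The only cosmetic difference is in the algebra: the paper writes $d = a_1 + a_2 + a^*$ and rewrites $(r^*)^2 = 1 + \frac{1/a_1 - 1/a_2}{1/a^* + 1/a_2}$, whereas you factor $a_2(d-a_2) - a_1(d-a_1) = (a_2 - a_1)(d - a_1 - a_2)$, which is equivalent.
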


\vspace{1mm}
This corollary offers key insights into the optimal inflation strategy in the two-experiment setting. When experiment~1 is statistically easier (more difficult) than experiment~2—i.e., it has a smaller (larger) variance-to-signal ratio—then the optimal inflation ratio satisfies $r^* = \frac{k_1}{k_2} > (<) 1$. That is, the correction factor for the easier (more difficult) experiment is inflated more (less) than that of the harder (easier) one. At first glance, this may seem counterintuitive. One might expect the more difficult or uncertain experiment to receive greater inflation. However, the objective is not to guard each experiment individually, but to control the probability that the maximum Type~2 error across the two experiments exceeds a target threshold. To understand why this leads to $r^* > 1$, recall that the worst-case Type~2 error depends on the random quantities $U_1$ and $U_2$, given by
\[
U_1 = a_1 + \frac{1}{r} a_2 \cdot \frac{Y_2}{Y_1}, \qquad 
U_2 = a_2 + r a_1 \cdot \frac{Y_1}{Y_2},
\]

%\vspace{1mm}
\noindent
where $Y_1, Y_2 \sim \chi^2_{\epsilon - 1}$ independently. Although the ratios $\frac{Y_2}{Y_1}$ and $\frac{Y_1}{Y_2}$ have symmetric distributions, the asymmetry in $a_1$ and $a_2$ breaks this balance. Since the more difficult experiment (larger $a_i$) amplifies the impact of random fluctuations in these ratios, the optimizer reduces the risk by using a larger correction for the easier experiment. For example, if \( a_1 > a_2 \), then the term \( a_1 \frac{Y_1}{Y_2} \) is more volatile than \( a_2 \frac{Y_2}{Y_1} \), leading the optimizer to set  \( r < 1 \). This reduces the chance that either $U_1$ or $U_2$ becomes too large. In short, the easier experiment is deliberately over-inflated  to stabilize the overall variability and minimize the maximum Type~2 error across the two experiments. 

The previous corollary characterize how the optimal inflation ratio depends on the relative statistical difficulty of the experiments. The following corollary introduces how the optimal inflation ratio also adapts to the experimenter's tolerance or confidence preference.  

\vspace{1mm}
\begin{corollary} \label{cor:r_with_delta}
Suppose \( \frac{\sigma_1}{\Delta_1} < \frac{\sigma_2}{\Delta_2} \). In the two-experiment setting, %we have:
\begin{enumerate}
    \item[\emph{(i)}] For \textbf{TOL}, %\( r^* > 1 \) for all \( \gamma \in (0, 1) \), and \
 \(r^* \) is increasing in the confidence level \( \gamma \in (0,1) \);
    
    \item[\emph{(ii)}] For \textbf{CONF}, %\( r^* > 1 \) for all \( \delta \in (0, 1 - \alpha - \beta^*(\vec{\sigma})) \), and 
    \( r^* \) is increasing in the tolerance \( \delta \in (0, 1 - \alpha - \beta^*(\vec\sigma)) \).
\end{enumerate}
\noindent
By symmetry, if  \( \frac{\sigma_1}{\Delta_1} > \frac{\sigma_2}{\Delta_2} \), then $r^*$ is decreasing in $\gamma$ (for \textbf{TOL}) and $\delta$ (for \textbf{CONF}).
\end{corollary}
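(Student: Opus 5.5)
The plan is to reduce both parts to the monotonicity in $d$ of $r(d)=\sqrt{\frac{a_2(d-a_2)}{a_1(d-a_1)}}$ from Lemma~\ref{lem:H(r,d)}, and then chain this with the monotone maps $\gamma\mapsto d^*$ and $\delta\mapsto d(\delta)$. By Proposition~\ref{prop:optimal_k_M2}, in both \textbf{TOL} and \textbf{CONF} we have $r^*=r(d)$. For \textbf{TOL} the argument is $d=d^*(\gamma)$, and for \textbf{CONF} it is $d=d(\delta)$. So the first step is to show that, when $a_1<a_2$, the map $d\mapsto r(d)^2=\frac{a_2}{a_1}\cdot\frac{d-a_2}{d-a_1}$ is strictly increasing on $(a_1+a_2,\infty)$.

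This is a direct computation. We have
\[
\frac{\partial}{\partial d}\,\frac{d-a_2}{d-a_1}=\frac{(d-a_1)-(d-a_2)}{(d-a_1)^2}=\frac{a_2-a_1}{(d-a_1)^2},
\]
which is positive exactly when $a_1<a_2$. The interval keeps $d-a_1>0$ and $d-a_2>0$, so the square root is well defined and monotone. The hypothesis $\frac{\sigma_1}{\Delta_1}<\frac{\sigma_2}{\Delta_2}$ is precisely $a_1<a_2$.

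For part (ii) (\textbf{CONF}), the paper has already noted that $d(\delta)=N/(q_{1-\alpha}-\Phi^{-1}(\beta^*(\vec\sigma)+\delta))^2$ is increasing on $\delta\in(0,1-\alpha-\beta^*(\vec\sigma))$. To confirm this, note that on this range $\beta^*(\vec\sigma)+\delta<1-\alpha$, so $\Phi^{-1}(\beta^*+\delta)<q_{1-\alpha}$. The denominator is therefore the square of a positive quantity that decreases in $\delta$. Composing the two increasing maps gives that $r^*$ is increasing in $\delta$.

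For part (i) (\textbf{TOL}), I will use the closed form for $d^*$ in Proposition~\ref{prop:optimal_k_M2}(i). Write $c(\gamma)=F^{-1}_{\nu,\nu}(\frac{1+\gamma}{2})$. For $\gamma\in(0,1)$ we have $\frac{1+\gamma}{2}\in(\tfrac12,1)$, and the quantile function is strictly increasing, so $c(\gamma)$ is strictly increasing in $\gamma$. Moreover $c(\gamma)>F^{-1}_{\nu,\nu}(\tfrac12)=1$, since the median of $F_{\nu,\nu}$ equals $1$ by the symmetry $\Theta\stackrel{d}{=}1/\Theta$; this shows $c>0$, so $c^2$ is increasing as well. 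Hence $d^*=\frac{a_1+a_2+\sqrt{(a_1-a_2)^2+4a_1a_2c^2}}{2}$ is strictly increasing in $\gamma$. Composing with the first step gives that $r^*$ is increasing in $\gamma$.

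The symmetric statement for $a_1>a_2$ follows because the derivative computed above then has the opposite sign. The only step needing any care is verifying that $d^*$ and $d(\delta)$ lie in the domain $(a_1+a_2,\infty)$, where both $d-a_i>0$. For $d(\delta)$ this is noted in the paper. For $d^*$ it follows from $c>1$: then $\sqrt{(a_1-a_2)^2+4a_1a_2c^2}>a_1+a_2$, which gives $d^*>a_1+a_2$. I therefore expect no real obstacle beyond this bookkeeping.
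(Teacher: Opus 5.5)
Your proposal is correct and follows essentially the same route as the paper. The paper writes $d = a_1 + a_2 + a$ and rewrites $(r^*)^2 = 1 + \frac{1/a_1 - 1/a_2}{1/a + 1/a_2}$ to read off monotonicity in $a$, which is equivalent to your derivative of $\frac{d-a_2}{d-a_1}$; it then composes with the monotonicity of $d^*$ in $\gamma$ and of $d(\delta)$ in $\delta$. Your explicit checks that $d^*$ and $d(\delta)$ are increasing and lie in $(a_1+a_2,\infty)$ fill in steps the paper only asserts.
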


\vspace{1mm}
This corollary shows that the optimal inflation ratio is driven not only by the statistical difficulty of the experiments, but also by the  decision-maker's risk preference—reflected in the choice of confidence  level \( \gamma \) (for \textbf{TOL}) or tolerance \( \delta \) (for \textbf{CONF}). Under the assumed ordering \( \frac{\sigma_1}{\Delta_1} < \frac{\sigma_2}{\Delta_2} \),  experiment 1 is statistically easier than experiment 2, so  
Corollary~\ref{cor:optimal_2exp} implies that \( r^* > 1 \).  As the experimenter demands higher reliability (larger \( \gamma \)) or permits greater deviation from optimality (larger \( \delta \)), the  critical threshold—\( d(\delta^*) \) for \textbf{TOL} or \( d(\delta) \)  for \textbf{CONF}—increases. A larger threshold makes it easier for the condition \( \max(U_1, U_2) \le d(\delta^*) \) (or \( d(\delta) \)) to be satisfied. Mathematically, this means that the set of feasible inflation ratios \( r \)  
satisfying the constraint  
\[
\mathbb{P}( \max(U_1, U_2) \le d(\delta^*) ) \ge \gamma 
\quad \text{or} \quad 
\mathbb{P}( \max(U_1, U_2) \le d(\delta) ) \ge \gamma
\]

%\vspace{1mm}
\noindent
becomes strictly larger as the critical threshold increases. To take advantage of this added flexibility, the optimizer adjusts  the inflation ratio \( r^* \) further away from 1—by inflating the easier  experiment even more relative to the harder one. This asymmetry effectively shifts risk away from the harder experiment, reducing the probability that either \( U_1 \) or \( U_2 \) becomes too large, and  thereby helps maintain tighter control over the collective maximum Type 2 error.
Figure~\ref{fig:r_fig} visualizes the key insights from Corollaries~\ref{cor:optimal_2exp} and \ref{cor:r_with_delta}, showing how the optimal inflation ratio varies with both the relative statistical difficulty of the experiments and the experimenter's risk preference---captured by the confidence level $\gamma$ in \textbf{TOL} and the tolerance $\delta$ in \textbf{CONF}.

\begin{figure}[!htbp]
    \centering
    \subfloat[\textbf{TOL} Objective ($\epsilon=20$)]{%
        \includegraphics[width=0.7\textwidth]{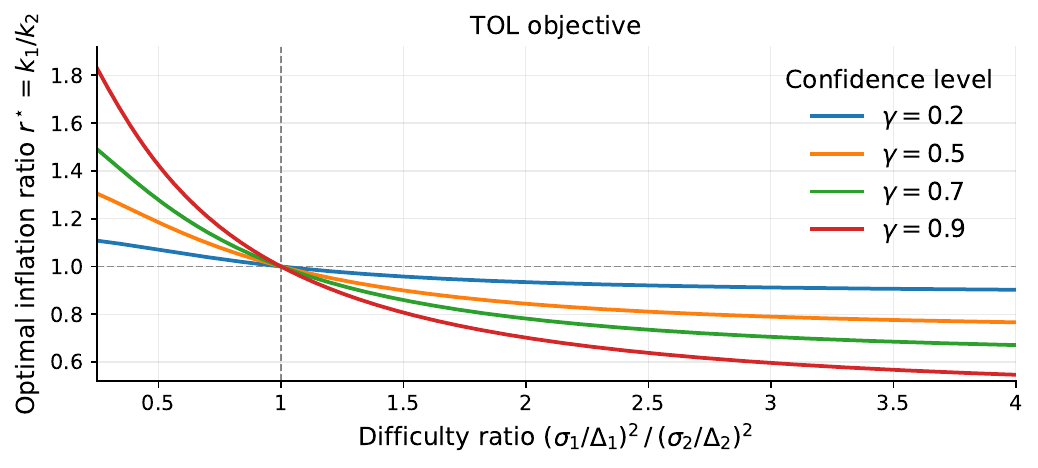}%
        \label{fig:cor1_tol}%
    }

    \vspace{0.5cm}

    \subfloat[\textbf{CONF} Objective ($N=200, \epsilon=20$)]{%
        \includegraphics[width=0.7\textwidth]{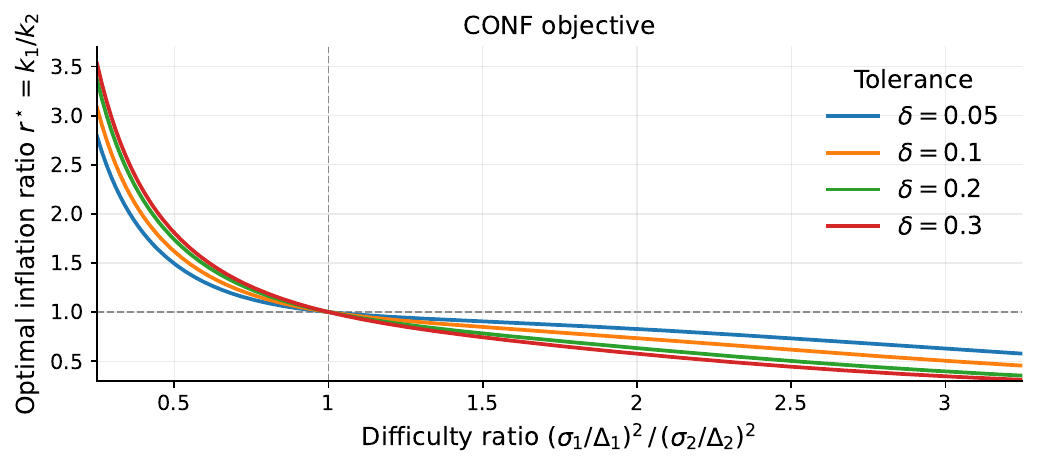}%
        \label{fig:cor1_conf}%
    }

    \caption{Optimal inflation ratio $r^\star$ for the \textbf{TOL} and \textbf{CONF}
    objectives under varying difficulty ratios ($a_1/a_2$), confidence levels,
    and tolerances.}
    \label{fig:r_fig}
\end{figure}

\subsection{Analysis of \textbf{EXP}}

We now discuss \textbf{EXP}. Based on our discussion in the previous subsection, it can be expressed as
\begin{eqnarray*}
g^*(\epsilon, \vec{\sigma}) \,\, := \,\, \min_{r} && 
\mathbb{E}\!\left[
\Phi\!\left(
q_{1-\alpha}
- \sqrt{N}\cdot \sqrt{\frac{1}{\max(U_1, U_2)}}
\right)
\right] \\
\text{subject to } && r > 0.
\end{eqnarray*}

\noindent
Unlike \textbf{TOL} and \textbf{CONF}, it is intractable to derive the exact solution to \textbf{EXP}. But we show that the key insight of Corollary \ref{cor:optimal_2exp} for \textbf{TOL} and \textbf{CONF} similarly holds for \textbf{EXP}.

\vspace{1mm}
\begin{proposition} \label{prop:optimal_exp}
The following hold for \textbf{EXP}. In the two-experiment setting, we have:
\begin{enumerate}
\item[\emph{(i)}] If $\frac{\sigma_1}{\Delta_1} = \frac{\sigma_2}{\Delta_2}$, then $r^* = 1$;
\item[\emph{(ii)}] If $\frac{\sigma_1}{\Delta_1} < \frac{\sigma_2}{\Delta_2}$, then $r^* > 1$;
\item[\emph{(iii)}] If $\frac{\sigma_1}{\Delta_1} > \frac{\sigma_2}{\Delta_2}$, then $r^* < 1$.
\end{enumerate}
\end{proposition}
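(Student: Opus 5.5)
The plan is to write the \textbf{EXP} objective as a function of $r$ alone and compare $r$ with $1/r$ using the symmetry of $\Theta = Y_1/Y_2 \sim F_{\nu,\nu}$. With $a_i = (\sigma_i/\Delta_i)^2$ we have
\[
\max(U_1,U_2) \;=\; \max\Big(a_1 + \tfrac{a_2}{r\Theta},\; a_2 + r a_1 \Theta\Big) \;=:\; V(r,\Theta).
\]
Set $G(x) := \Phi(q_{1-\alpha} - \sqrt{N/x})$, which is strictly increasing in $x>0$. Then the objective is $g(r) = \mathbb{E}[G(V(r,\Theta))]$. The key distributional fact is that $\Theta \stackrel{d}{=} 1/\Theta$ because the degrees of freedom are equal. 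Existence of a minimizer follows from continuity of $g$ together with limits as $r\to 0^+$ and $r\to\infty$. In both limits $V\to\infty$ almost surely, so $g \to \Phi(q_{1-\alpha}) = 1-\alpha$. This is the supremum, so the minimum is attained at an interior point $r^*$ (uniqueness is not required; the argument below applies to every minimizer).

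For part (i), set $a_1 = a_2 = a$. Then
\[
V(1/r,\, 1/\Theta) \;=\; \max\Big(a + \tfrac{a\,\Theta}{r},\; a + \tfrac{a\,r}{\Theta}\Big) \;=\; V(r,\Theta),
\]
so $g(r) = g(1/r)$. To conclude $r^* = 1$ I would show that $g$ is strictly decreasing on $(0,1]$. Writing $V = a + a\max\big(\tfrac{1}{r\Theta},\, r\Theta\big)$, we have $V \ge 2a$, with equality iff $r\Theta = 1$. So $V$ is a function of $|\log r + \log\Theta|$, increasing in that quantity.

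Now set $L = \log\Theta$. By the symmetry above, $L$ is symmetric about $0$, and it is unimodal: the density of $\log F_{\nu,\nu}$ is proportional to $e^{\nu x/2}/(1+e^x)^{\nu}$, which is a log-concave symmetric density. By Anderson's lemma, $\mathbb{P}(|L + c| \le t)$ is maximized at $c = 0$ and decreases in $|c|$. Hence $\mathbb{E}\,\phi(|L+c|)$ is increasing in $|c|$ for increasing $\phi$, which gives $r^* = 1$.

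For (ii), with $a_1 < a_2$, I will show $g(1/r) > g(r)$ for every $r \in (0,1)$. Then no minimizer lies in $(0,1)$. To rule out $r = 1$ as well, I will check that $g'(1) < 0$, or argue by a perturbation, so that $r^* > 1$. Part (iii) then follows by relabeling the two experiments, which maps $r \mapsto 1/r$.

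The comparison uses the coupling $\Theta \mapsto 1/\Theta$:
\[
g(1/r) \;=\; \mathbb{E}\,G\Big(\max\big(a_1 + \tfrac{a_2 r}{\Theta},\; a_2 + \tfrac{a_1\Theta}{r}\big)\Big),
\]
and this is to be compared with $\max\big(a_1 + \tfrac{a_2}{r\Theta},\, a_2 + r a_1\Theta\big)$. Substituting $x = r\Theta$ in the first and $y = \Theta/r$ in the second turns both into expectations of the same function
\[
h(x) \;=\; \max\Big(a_1 + \tfrac{a_2}{x},\; a_2 + a_1 x\Big)
\]
evaluated at $\log x = L + c$, with $c = \log r$ in one case and $c = -\log r$ in the other. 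Thus $g(r) = \psi(\log r)$ with $\psi(c) := \mathbb{E}\, G(h(e^{L+c}))$.

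I then study $k(u) := G(h(e^u))$. The function $h$ is the maximum of a decreasing and an increasing function of $x$, so it is quasiconvex with minimizer $x_0$ solving $a_1 + a_2/x = a_2 + a_1 x$, namely $x_0 = a_2/a_1$. Indeed $a_1 x^2 + (a_2 - a_1)x - a_2 = 0$ factors as $(a_1 x - a_2)(x+1) = 0$. So $k$ is minimized at $u_0 = \log(a_2/a_1) > 0$.

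The main obstacle is that $k$ is not symmetric about $u_0$. To show that $\psi(c) < \psi(-c)$ for $c > 0$ (and hence $r^* > 1$), I will decompose $k$ into level sets $\{u : k(u) \le t\} = [\ell_t, m_t]$. By the layer-cake formula,
\[
\psi(c) \;=\; \text{const} \;-\; \int \mathbb{P}\big(L + c \in [\ell_t, m_t]\big)\, dG\text{-levels}.
\]
For a symmetric unimodal $L$, the probability $\mathbb{P}(L + c \in [\ell,m])$ is a function of $c$ that is symmetric about the interval midpoint $(\ell+m)/2$ and unimodal there. The step I would need, and the one I expect to be hardest, is that each midpoint $(\ell_t + m_t)/2$ is strictly positive. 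This amounts to checking $\ell_t + m_t > 0$, i.e.\ $x_\ell\, x_m > 1$, where $x_\ell, x_m$ solve
\[
a_1 + a_2/x = D \quad\text{and}\quad a_2 + a_1 x = D
\]
for $D$ above the minimum. These give $x_\ell = a_2/(D - a_1)$ and $x_m = (D - a_2)/a_1$. Their product exceeds $1$ iff $a_2(D - a_2) > a_1(D - a_1)$, which reduces to $(a_2 - a_1)(D - a_1 - a_2) > 0$. This holds because $D > a_1 + a_2$; this is the same computation behind $r(d)$ in Lemma~\ref{lem:H(r,d)}. Given positive midpoints, each probability satisfies $P(c) \ge P(-c)$ for $c > 0$, strictly on a set of levels of positive measure. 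Integrating gives $g(r) < g(1/r)$ for $r > 1$, so every minimizer satisfies $r^* > 1$.
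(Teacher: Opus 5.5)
Your argument is essentially the paper's. You write the objective as $\psi(\log r)=\mathbb{E}[G(h(e^{L+\log r}))]$ and slice it by the sublevel intervals $[\ell_t,m_t]$ of $u\mapsto G(h(e^u))$. You then show each interval has a strictly positive midpoint when $a_1<a_2$, and use that $c\mapsto\mathbb{P}(L+c\in[\ell_t,m_t])$ is symmetric and strictly unimodal about that midpoint, because $L=\log\Theta$ is symmetric and strictly unimodal. Your check $x_\ell x_m>1\iff(a_2-a_1)(D-a_1-a_2)>0$ is a cleaner way to get positive centers than the paper's monotonicity argument for $C(m)=\tfrac12\log\frac{a_2(m-a_2)}{a_1(m-a_1)}$. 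Part (i) via log-concavity and Anderson's lemma is fine. Two points need repair.

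\textbf{The minimizer of $h$ is wrong.} Since $a_1x^2+(a_2-a_1)x-a_2=(a_1x+a_2)(x-1)$, the minimizer is $x_0=1$, so $u_0=0$ and $\min h=a_1+a_2$. Your factor $(a_1x-a_2)(x+1)$ expands to $a_1x^2+(a_1-a_2)x-a_2$. This is harmless because you never use $u_0$ afterwards. In fact $x_0=1$ is exactly what justifies $D>a_1+a_2$ and gives $\ell_t<0<m_t$. The paper's proof makes the same slip, writing $y^*=\log(a_2/a_1)$, and it is equally harmless there.

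\textbf{The case $r^*=1$ is not closed.} The reflection inequality $\psi(c)<\psi(-c)$ for $c>0$ only excludes minimizers in $(0,1)$. It says nothing about $r^*=1$, so your final sentence does not yet follow. You noted that $g'(1)<0$ is needed but did not prove it. It follows from the same ingredients. Let $f_L$ be the density of $L$. The derivative of $c\mapsto\mathbb{P}(L+c\in[\ell_t,m_t])$ at $c=0$ equals $f_L(\ell_t)-f_L(m_t)$. This is strictly positive: $\ell_t<0<m_t$, the positive midpoint gives $|\ell_t|<m_t$, and $f_L$ is symmetric and strictly decreasing on $(0,\infty)$. Differentiating the layer-cake representation under the integral gives $\psi'(0)<0$; this is justified because those derivatives are bounded by $2\sup f_L$.

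This derivative step is exactly how the paper argues. Since every center $c(\tau)$ is positive, $\frac{d}{ds}\mathbb{P}(W+s\in[c(\tau)-r(\tau),c(\tau)+r(\tau)])>0$ for every $s\le 0$. Hence $J'(s)<0$ on all of $(-\infty,0]$, which disposes of both $r<1$ and $r=1$ at once. With that in hand, the reflection comparison becomes unnecessary.
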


\vspace{1mm}
The proof of Proposition~\ref{prop:optimal_exp} is technically involved and therefore deferred to the Appendix. Its intuition parallels that of Corollary~\ref{cor:optimal_2exp}, even though the 
\textbf{EXP} formulation minimizes the \emph{expected} maximum Type~2 error rather than 
imposing a high-probability guarantee. The optimizer seeks to balance the
distribution of the random variables $U_1$ and $U_2$, whose maxima
determine the realized power. When the two experiments have equal
difficulty indices $\frac{\sigma_i}{\Delta_i}$, symmetry implies $r^* = 1$.
When one experiment is statistically easier, say
$\frac{\sigma_1}{\Delta_1} < \frac{\sigma_2}{\Delta_2}$, the harder
experiment contributes more volatility to $\max(U_1,U_2)$. To mitigate
this asymmetry in expectation, the optimizer deliberately inflates the
easier experiment more ($r^*>1$), thereby smoothing the distribution of
$\max(U_1,U_2)$ and reducing its tail heaviness. In short, although
\textbf{EXP} focuses on average-case performance rather than
probabilistic guarantees, the same structural principle emerges as in
\textbf{TOL} and \textbf{CONF}.

Figure~\ref{fig:exp} corroborates these theoretical insights. The plot displays the optimal inflation ratio $r^*$ against the difficulty ratio $(\sigma_1/\Delta_1)^2/(\sigma_2/\Delta_2)^2$ for pilot sample sizes $\epsilon \in \{20, 50, 100, 500\}$. Consistent with Proposition~\ref{prop:optimal_exp}, an inverse relationship emerges: as $e_1$ becomes relatively easier (difficulty ratio $< 1$), the optimal strategy prescribes stronger inflation ($r^* > 1$). Critically, the figure highlights the role of estimation uncertainty. For the smallest pilot size ($\epsilon=20$), $r^*$ deviates significantly from unity, reflecting the need for asymmetric correction to buffer against pilot variability. As $\epsilon$ increases to $500$, the curve flattens toward $r^* \approx 1$, confirming that asymmetric inflation specifically mitigates small-sample risk; as pilot precision improves, the corrective mechanism attenuates.

\begin{figure}[htbp]
    \centering
    \includegraphics[width=0.7\textwidth]{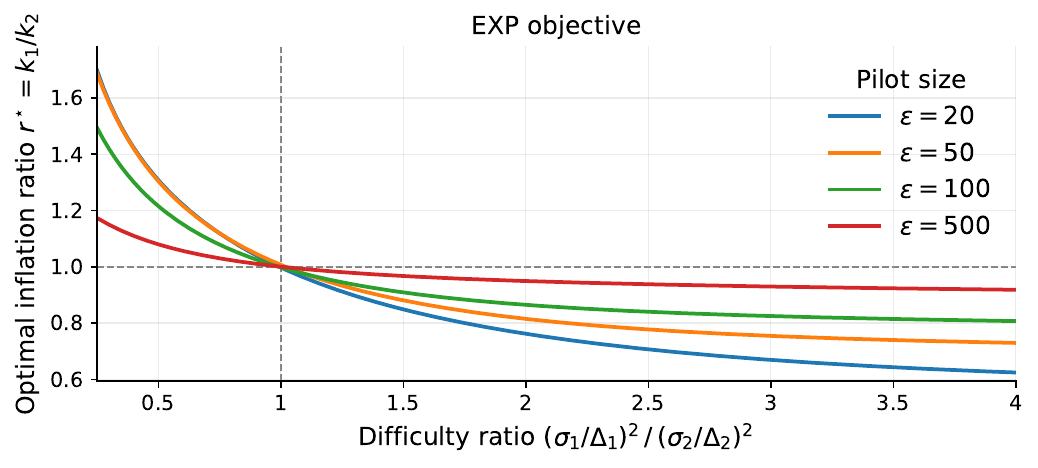}
    \caption{Optimal inflation ratio $r^*$ under \textbf{EXP} $(N=200)$ as a function of difficulty ratio $(a_1/a_2)$ for varying pilot sizes $\epsilon$. The deviation from $r^*=1$ is most pronounced for small $\epsilon$.}
    \label{fig:exp}
\end{figure}

\section{Unknown $\vec\sigma$: Approximations for the General Case} \label{sec:approximate k}

We now develop tractable approximations for the general case with $M$ 
experiments. While Section~5 characterized the structure of the oracle 
correction factor in a stylized two-experiment setting, directly optimizing 
the \textbf{TOL}, \textbf{CONF}, or \textbf{EXP} formulations in Section~\ref{sec:model} 
becomes computationally challenging in large portfolios. The resulting 
problems involve high-dimensional stochastic objectives or chance 
constraints over the pilot randomness, making direct optimization 
difficult to scale. To address this challenge, we construct surrogate reformulations inspired 
by robust optimization. The key idea is to replace the original stochastic 
criteria with deterministic upper bounds that preserve the structural 
logic of optimal variance inflation while avoiding repeated evaluation of 
complex probability expressions. 

In this section, we proceed in three steps. We first present preliminary 
observations that motivate our reformulations. We then develop and analyze 
the surrogate problems, establishing their structural properties and 
tractability. Finally, we leverage these insights to construct a fully 
data-dependent implementation, in which the unknown standard deviations 
are replaced by their pilot-based estimates and the resulting surrogate 
problem is solved directly.

\subsection{Preliminary observations}

To see the connection between the expression of $\vec n^*(\vec k, \vec S)$ in (\ref{eq:n*(k,S)})  and robust optimization, consider the following uncertainty set for $S_i$, which corresponds to the confidence interval for the variance $\sigma_i^2$ at confidence level $c_i \in [0,1)$:
\begin{eqnarray*}
CI_i(S_i, \epsilon_i, c_i) = \left[ \frac{(\epsilon_i-1)S_i^2}{\chi^2_{1 - (1 - c_i)/2, \epsilon_i-1}}, \frac{(\epsilon_i-1)S_i^2}{\chi^2_{(1 - c_i)/2, \epsilon_i-1}} \right] = \left[ \frac{(\epsilon_i-1)S_i^2}{\chi^2_{(1 + c_i)/2, \epsilon_i-1}}, \frac{(\epsilon_i-1)S_i^2}{\chi^2_{(1 - c_i)/2, \epsilon_i-1}} \right],
\end{eqnarray*}

%\vspace{1mm}
\noindent
where $\chi^2_{\xi,n}$ is the $\xi$-quantile of chi-squared distribution with $n$ degrees of freedom. For each $i \in [M]$, define the lower and upper scaling factors 
$\underline{\varphi}(\epsilon_i, c_i)$ and 
$\overline{\varphi}(\epsilon_i, c_i)$ as follows:
\begin{eqnarray*}
\underline{\varphi}(\epsilon_i, c_i) 
&:=& \frac{\epsilon_i - 1}{\chi^2_{(1 + c_i)/2, \ \epsilon_i - 1}}, 
\quad \text{and} \quad
\overline{\varphi}(\epsilon_i, c_i) 
:= \frac{\epsilon_i - 1}{\chi^2_{(1 - c_i)/2, \ \epsilon_i - 1}}.
\end{eqnarray*}

%\vspace{1mm}
\noindent
Using this notation, the confidence interval for the variance 
$\sigma_i^2$ can be rewritten as:
\begin{eqnarray*}
CI_i(S_i, \epsilon_i, c_i) 
= \left[ 
\underline{\varphi}(\epsilon_i, c_i) S_i^2, \ 
\overline{\varphi}(\epsilon_i, c_i) S_i^2 
\right].
\end{eqnarray*}

%\vspace{1mm}
Now, consider the following robust optimization problem:
\begin{eqnarray*}
\textbf{R-POWER-OPT:} \hspace{10mm} 
\tilde{\beta}^{robust}(\vec{S}, {\vec{\epsilon}}, \vec{\sigma}, \vec{c}) 
:= \min_{\vec{n}} && \max_{i \in [M]} \ 
\left\{ 
\max_{\tilde{\sigma}_i^2 \in CI_i(S_i, \epsilon_i, c_i)} 
\beta(\tilde{\sigma}_i, n_i) 
\right\} \\
\text{subject to} && \sum_{i=1}^M n_i \le N, \quad n_i \ge 0 \ \ \forall i \in [M],
\end{eqnarray*}

%\vspace{1mm}
\noindent
Since $\beta(\tilde{\sigma}_i, n_i)$ is increasing in $\tilde{\sigma}_i$, 
the worst case within the uncertainty set occurs at the upper bound. 
Thus, the problem simplifies to:
\begin{eqnarray*}
\textbf{R-POWER-OPT:} \hspace{10mm} 
\tilde{\beta}^{robust}(\vec{S}, {\vec{\epsilon}}, \vec{\sigma}, \vec{c}) 
= \min_{\vec{n}} && \max_{i \in [M]} \ \{
\beta( \sqrt{\overline{\varphi}(\epsilon_i, c_i)} S_i, \ n_i) \} \\
\text{subject to} 
&& \sum_{i=1}^M n_i \le N, \quad n_i \ge 0 \ \ \forall i \in [M]
\end{eqnarray*}

%\vspace{1mm}
\noindent
This is equivalent to the original \textbf{POWER-OPT} formulation after replacing $\sigma_i$ with $\sqrt{\overline{\varphi}(\epsilon_i, c_i)} S_i$ for each $i \in [M]$. In other words, the optimal solution to \textbf{R-POWER-OPT} is given by $\vec{n}^*(\vec{\ell}, \vec{S})$, where $\ell_i := \overline{\varphi}(\epsilon_i, c_i)$ for all $i \in [M]$. The corresponding optimal value is:
\begin{eqnarray*}
\tilde{\beta}^{\text{robust}}(\vec{S}, {\vec{\epsilon}}, \vec{\sigma}, \vec{c}) 
= \tilde{\beta}^*(\vec{\ell}, \vec{S}, \vec{\Sigma}(\vec{\ell}, \vec{S})),
\end{eqnarray*}

%\vspace{1mm}
\noindent
where we define 
$\Sigma_i(\vec{\ell}, \vec{S}) := \sqrt{\ell_i} S_i$ for each $i \in [M]$. 
Under this notation, the true standard deviation vector can be written as 
$\vec{\Sigma}(\vec{1}, \vec{\sigma})$.

We remark that $\tilde{\beta}^*(\vec{\ell}, \vec{S}, \vec{\sigma})$ (a random quantity induced by the randomness of $\vec{S}$) is now effectively a fixed quantity analogous to \textbf{POWER-OPT} by only considering values of $\sigma$ in the confidence set constructed above. This allows us to upper bound  $\tilde{\beta}^*(\vec{\ell}, \vec{S}, \vec{\sigma})$ by $\tilde{\beta}^{\text{robust}}(\vec{S}, {\vec{\epsilon}}, \vec{\sigma}, \vec{c})$, with a high probability. To formalize this, define the event 
$\mathcal{E}(\vec{S}, {\vec{\epsilon}}, \vec{c})$ as follows:
\begin{eqnarray}
\mathcal{E}(\vec{S}, {\vec{\epsilon}}, \vec{c}) 
:= \bigcap_{i=1}^M \left\{ 
\sigma_i^2 \in CI_i(S_i, \epsilon_i, c_i) 
\right\}. \label{event E}
\end{eqnarray}

%\vspace{1mm}
\noindent
That is, $\mathcal{E}(\vec{S}, {\vec{\epsilon}}, \vec{c})$ is the event that each true variance $\sigma_i^2$ lies within its respective confidence interval $CI_i(S_i, \epsilon_i, c_i)$. The next lemma establishes the desired probabilistic guarantee, which we have obtained by construction.

\vspace{1mm}
\begin{lemma} \label{lem_A1}
Suppose that $\ell_i = \overline{\varphi}_i(\epsilon_i, c_i)$ for all 
$i \in [M]$. On the set $\mathcal{E}(\vec{S}, {\vec{\epsilon}}, \vec{c})$, 
we have 
$
\tilde{\beta}^*(\vec{\ell}, \vec{S}, \vec{\sigma}) 
\le 
\tilde{\beta}^{\text{robust}}(\vec{S}, {\vec{\epsilon}}, \vec{\sigma}, \vec{c}).
$
Moreover, this event occurs with probability 
$\mathbb{P}(\mathcal{E}(\vec{S}, {\vec{\epsilon}}, \vec{c})) = \prod_{i=1}^M c_i$.
\end{lemma}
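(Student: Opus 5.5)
The plan is to prove the two claims separately: a deterministic inequality that holds pointwise on $\mathcal{E}$, and a probability computation. For the inequality, fix a realization $\vec S$ in $\mathcal{E}(\vec S,\vec\epsilon,\vec c)$. Then $\sigma_i^2 \le \overline{\varphi}(\epsilon_i,c_i)S_i^2 = \ell_i S_i^2$ for every $i$, so $\sigma_i \le \sqrt{\ell_i}S_i = \Sigma_i(\vec\ell,\vec S)$.

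Write $n_i := n_i^*(\vec\ell,\vec S)$, as given by (\ref{eq:n*(k,S)}). By (\ref{eq:power}), $\beta(\sigma,n)=\Phi(q_{1-\alpha}-\Delta_i\sqrt{n}/\sigma)$ is nondecreasing in $\sigma$ for fixed $n\ge 0$. Hence $\beta(\sigma_i,n_i)\le\beta(\sqrt{\ell_i}S_i,n_i)$ for each $i$. Taking the maximum over $i$ gives
\[
\tilde\beta^*(\vec\ell,\vec S,\vec\sigma)\le \max_i \beta(\sqrt{\ell_i}S_i,n_i).
\]
The right-hand side is the value of \textbf{R-POWER-OPT} evaluated at its optimizer $\vec n^*(\vec\ell,\vec S)$; this identification was established just before the lemma via Proposition~\ref{prop:power_opt_optimal}. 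It therefore equals $\tilde\beta^{robust}(\vec S,\vec\epsilon,\vec\sigma,\vec c)=\tilde\beta^*(\vec\ell,\vec S,\vec\Sigma(\vec\ell,\vec S))$. If one wants the common value explicitly, Proposition~\ref{prop:power_opt_optimal} applied with $\sigma_i$ replaced by $\sqrt{\ell_i}S_i$ gives it. There is one degenerate point to note: if some $S_i=0$ then $n_i=0$, but this is a null event under the chi-square law and can be ignored.

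For the probability, note that $\sigma_i^2\in CI_i$ is equivalent to
\[
\chi^2_{(1-c_i)/2,\epsilon_i-1}\le (\epsilon_i-1)S_i^2/\sigma_i^2\le \chi^2_{(1+c_i)/2,\epsilon_i-1}.
\]
The middle quantity is $\chi^2_{\epsilon_i-1}$-distributed, and this distribution is continuous, so the event has probability $(1+c_i)/2-(1-c_i)/2=c_i$. Pilot samples are independent across experiments, because each subject participates in at most one experiment. So the $S_i$ are independent, and the probability of the intersection $\mathcal{E}$ factorizes into $\prod_i c_i$.

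No step is genuinely hard. The only points needing care are the monotonicity of $\beta$ in $\sigma$, which holds because $\Delta_i>0$, and the independence of the $S_i$ across experiments.
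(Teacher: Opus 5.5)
Your proposal is correct and follows essentially the same route as the paper. On $\mathcal{E}$ you use $\sigma_i^2 \le \ell_i S_i^2$ together with monotonicity of $\beta$ in $\sigma$ to get $\tilde{\beta}^*(\vec{\ell},\vec{S},\vec{\sigma}) \le \tilde{\beta}^*(\vec{\ell},\vec{S},\vec{\Sigma}(\vec{\ell},\vec{S})) = \tilde{\beta}^{\text{robust}}$, then per-experiment coverage $c_i$ and independence for the product. Your explicit chi-square pivot computation of the coverage is a small addition to the paper's argument, which simply says ``by the definition of the confidence interval.''
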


\vspace{1mm}
Lemma~\ref{lem_A1} provides the foundation for constructing tractable
approximations of the optimal correction vector $\vec{k}$.
Rather than analyzing the random quantity
$\tilde{\beta}^*(\vec{\ell},\vec{S},\vec{\sigma})$ directly, we
consider the auxiliary random variable
$\tilde{\beta}^{\text{robust}}(\vec{S},\vec{\epsilon},\vec{\sigma},\vec{c})$,
which serves as a high-probability upper bound on
$\tilde{\beta}^*(\vec{\ell},\vec{S},\vec{\sigma})$ with coverage probability
$\prod_i c_i$. To operationalize this idea, we set each correction factor
$k_i$ to the deterministic quantity
\[
k_i=\ell_i:=\overline{\varphi}(\epsilon_i,c_i), \qquad i\in[M],
\]
and parameterize the search over the corresponding confidence levels
$\vec{c}$.

The feasible domain of $\vec{c}$ can be characterized from the following
observations. As noted in Remark~\ref{remark1}, only the ratios
$\tfrac{k_i}{k_j}$ matter rather than their absolute magnitudes.

\vspace{1mm}
The next lemma shows that $\tilde{\beta}^{\text{robust}}(\vec{S},\epsilon,\vec{\sigma},\vec{c})$ can be bounded above by a closed-form expression independent of the random vector~$\vec{S}$. This property will play a crucial role in the surrogate reformulations developed in the next subsection.

\vspace{1mm}
\begin{lemma} \label{lem_A2}
Let $\kappa(\epsilon_i, c_i)
:= \frac{\overline{\varphi}(\epsilon_i, c_i)}{\underline{\varphi}(\epsilon_i, c_i)}$ for all $i \in [M]$. On the event $\mathcal{E}(\vec{S}, {\vec{\epsilon}}, \vec{c})$, we have:
\begin{eqnarray*}
\Phi\left(
q_{1-\alpha} - 
\sqrt{
\frac{N}
{\sum_{i=1}^M \left( \frac{\sigma_i}{\Delta_i} \right)^2}
}
\right)
\ \le \ 
\tilde{\beta}^{\text{robust}}(\vec{S}, {\vec{\epsilon}}, \vec{\sigma}, \vec{c})
\ \le \ 
\Phi\left(
q_{1-\alpha} - 
\sqrt{
\frac{N}
{\sum_{i=1}^M 
\kappa_i(\epsilon_i, c_i) 
\left( \frac{\sigma_i}{\Delta_i} \right)^2}
}
\right),
\end{eqnarray*}
\end{lemma}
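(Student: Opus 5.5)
Since \textbf{R-POWER-OPT} is \textbf{POWER-OPT} with $\sigma_i$ replaced by $\sqrt{\overline{\varphi}(\epsilon_i,c_i)}\,S_i$, Proposition~\ref{prop:power_opt_optimal} gives the explicit value
\[
\tilde{\beta}^{\text{robust}}(\vec{S},\vec{\epsilon},\vec{\sigma},\vec{c})
= \Phi\left( q_{1-\alpha} - \sqrt{\frac{N}{\sum_{i=1}^M \overline{\varphi}(\epsilon_i,c_i)\left(\frac{S_i}{\Delta_i}\right)^2}} \right).
\]
The map $x \mapsto \Phi\bigl(q_{1-\alpha}-\sqrt{N/x}\bigr)$ is increasing in $x>0$. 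Hence both inequalities reduce to a deterministic sandwich on the denominator. On $\mathcal{E}(\vec{S},\vec{\epsilon},\vec{c})$ I will show
\[
\sum_{i=1}^M \left(\frac{\sigma_i}{\Delta_i}\right)^2 \;\le\; \sum_{i=1}^M \overline{\varphi}(\epsilon_i,c_i)\left(\frac{S_i}{\Delta_i}\right)^2 \;\le\; \sum_{i=1}^M \kappa(\epsilon_i,c_i)\left(\frac{\sigma_i}{\Delta_i}\right)^2 .
\]

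Both bounds come termwise from the definition of the event. On $\mathcal{E}$ we have $\underline{\varphi}(\epsilon_i,c_i)S_i^2 \le \sigma_i^2 \le \overline{\varphi}(\epsilon_i,c_i)S_i^2$ for every $i$. The right inequality $\sigma_i^2\le \overline{\varphi}_iS_i^2$ gives the lower bound on each summand. The left inequality gives $S_i^2 \le \sigma_i^2/\underline{\varphi}_i$, so $\overline{\varphi}_i S_i^2 \le (\overline{\varphi}_i/\underline{\varphi}_i)\sigma_i^2 = \kappa(\epsilon_i,c_i)\sigma_i^2$. Dividing each term by $\Delta_i^2$, summing over $i$, and applying monotonicity of $\Phi$ and of $x\mapsto -\sqrt{N/x}$ yields both inequalities of Lemma~\ref{lem_A2}. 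The lower bound is exactly $\beta^*(\vec\sigma)$ from Proposition~\ref{prop:power_opt_optimal}.

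There is no real obstacle here. The only points needing care are the following:
\begin{itemize}
\item $\underline{\varphi}_i>0$, which holds because the chi-squared quantiles are positive and finite for $c_i\in[0,1)$.
\item The orientation of the interval endpoints: $\chi^2_{(1+c_i)/2}$ gives the lower endpoint $\underline{\varphi}_i$ and $\chi^2_{(1-c_i)/2}$ gives the upper endpoint $\overline{\varphi}_i$, so $\kappa_i\ge 1$.
\item That the robust optimal value is indeed the closed form above. This follows from Proposition~\ref{prop:power_opt_optimal} applied with the inflated standard deviations, exactly as in Remark~\ref{remark1} with $k_i=\ell_i$.
\end{itemize}
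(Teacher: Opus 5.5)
Your proof is correct. The closed form for $\tilde{\beta}^{\text{robust}}$ and the upper bound are exactly the paper's argument: use $\underline{\varphi}(\epsilon_i,c_i)S_i^2\le\sigma_i^2$ to get $\overline{\varphi}(\epsilon_i,c_i)S_i^2\le\kappa(\epsilon_i,c_i)\sigma_i^2$, then apply monotonicity of $x\mapsto\Phi(q_{1-\alpha}-\sqrt{N/x})$.

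The lower bound is where you differ. You use the other endpoint termwise, $\sigma_i^2\le\overline{\varphi}(\epsilon_i,c_i)S_i^2$, and push it through the same monotone map. The paper instead argues through the chain $\tilde{\beta}^{\text{robust}}(\vec S,\vec\epsilon,\vec\sigma,\vec c)\ge\tilde{\beta}^*(\vec\ell,\vec S,\vec\sigma)\ge\beta^*(\vec\sigma)$. It cites Lemma~\ref{lem_A1} for the first step. The second step is really the optimality of $\beta^*(\vec\sigma)$ in \textbf{POWER-OPT}, since $\vec n^*(\vec\ell,\vec S)$ is a feasible allocation; the paper attributes it to monotonicity of $\beta$ in $\tilde\sigma_i$, but feasibility versus optimality is the operative reason.

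Your route is more elementary and symmetric: both bounds come from a single deterministic sandwich on the denominator $\sum_i\overline{\varphi}(\epsilon_i,c_i)(S_i/\Delta_i)^2$, with no appeal to any optimization argument. The paper's route yields a bit more information. It places the realized error $\tilde{\beta}^*(\vec\ell,\vec S,\vec\sigma)$ itself between $\beta^*(\vec\sigma)$ and $\tilde{\beta}^{\text{robust}}$, and its second inequality holds on every sample path, not only on $\mathcal{E}(\vec S,\vec\epsilon,\vec c)$.
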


\vspace{1mm}
The next lemma highlights useful properties of $\kappa(\epsilon_i, c_i)$. %The proof follows directly from the definition and standard properties of the chi-squared distribution, and is omitted for brevity.

\vspace{1mm}
\begin{lemma} \label{lem_kappa}
The function $\kappa(\epsilon_i, c_i)$ has the following properties: 
\begin{enumerate}
\item[\emph{(i)}] For all $c_i \in [0,1)$, $\lim_{\epsilon_i \to \infty}$ $\kappa(\epsilon_i, c_i) = 1$. 

\item[\emph{(ii)}] For all $\epsilon_i \in [2, \infty)$, $\kappa(\epsilon_i, c_i)$ is continuous, differentiable and convex in $c_i$, increasing in $c_i$ with $\lim_{c_i \to 0}$ $ \kappa(\epsilon_i, c_i) = 1$ and $\lim_{c_i \to 1} \kappa(\epsilon_i, c_i) = \infty$.

\item[\emph{(iii)}] Consider the equation $\kappa(\epsilon_i, c_i) = x$ for a fixed $x \in (1, \infty)$. As $\epsilon_i \to \infty$, we must have $c_i \to 1$. In other words, if we consider $c_i$ as a function of $\epsilon_i$ and $x$, we have $\lim_{\epsilon_i \to \infty} c_i(\epsilon_i, x) = 1$. 
\end{enumerate}
\end{lemma}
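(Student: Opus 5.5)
The plan is to write $\nu=\epsilon_i-1$ and $p=(1-c_i)/2\in(0,1/2]$. In this notation
\[
\kappa(\epsilon_i,c_i)=\frac{\chi^2_{1-p,\nu}}{\chi^2_{p,\nu}},
\]
so everything reduces to the behaviour of chi-squared quantiles. I will work with the normalized variable $W_\nu=\chi^2_\nu/\nu$, whose quantiles $w_\xi(\nu)$ satisfy $\kappa=w_{1-p}(\nu)/w_p(\nu)$.

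For (i), I fix $c_i$, so that $p$ is fixed. By the law of large numbers, $W_\nu\to1$ in probability. Hence for every fixed $\xi\in(0,1)$ the quantile $w_\xi(\nu)\to1$. More precisely, since $\sqrt{\nu/2}\,(W_\nu-1)\Rightarrow N(0,1)$, we get $w_\xi(\nu)=1+q_\xi\sqrt{2/\nu}+o(\nu^{-1/2})$. Both numerator and denominator of $\kappa$ therefore tend to $1$, which gives the limit. The edge case $c_i=0$ gives $p=1/2$ and $\kappa\equiv1$ trivially.

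For (ii), I fix $\nu$ and let $G$ and $g$ denote the $\chi^2_\nu$ CDF and density. The quantile function $Q=G^{-1}$ is smooth on $(0,1)$ with $Q'(\xi)=1/g(Q(\xi))$. Continuity and differentiability of $\kappa$ in $c_i$ then follow from the chain rule, with $\partial p/\partial c_i=-1/2$.

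Monotonicity is the easy part. As $c_i$ increases, $p$ decreases, so $Q(1-p)$ increases and $Q(p)$ decreases; both are positive, so the ratio increases. At $c_i\to0$ the two quantiles coincide at the median, giving $\kappa\to1$. As $c_i\to1$ we have $Q(p)\to0$ and $Q(1-p)\to\infty$, giving $\kappa\to\infty$.

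Convexity is the main obstacle. My first attempt is to write $\kappa=e^{h}$ with
\[
h(c)=\log Q(1-p)-\log Q(p),
\]
and to show that $h$ is convex, since $e^{h}$ is then convex. Computing $h'$ gives
\[
h'=\tfrac12\left[\frac{1}{Q(1-p)\,g(Q(1-p))}+\frac{1}{Q(p)\,g(Q(p))}\right].
\]
Writing $x=Q(\xi)$ and using that $x\mapsto xg(x)\propto x^{\nu/2}e^{-x/2}$ is explicit, each term's derivative in $\xi$ reduces to a sign condition on $x\mapsto\bigl(xg(x)\bigr)^{-1}$ composed with $Q$. The first term is increasing and convex in $p$-decreasing direction because the tail term $1/(xg(x))$ blows up. The second term needs care when $\nu/2<1$ near $0$.

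If this direct approach gets messy, my fallback is to verify $h''\ge0$ via the identity $\frac{d}{d\xi}\frac{1}{xg(x)}=-\frac{(xg)'(x)}{(xg(x))^3}$, which makes the sign explicit in terms of $\nu/2-x/2$, and to handle the two regimes ($x$ below or above the mode $\nu$) separately. I expect this step to require the most case analysis.

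For (iii), I fix $x>1$ and suppose, for contradiction, that along a subsequence $c_i(\epsilon_i,x)\le\bar c<1$. By the monotonicity in (ii), $x=\kappa(\epsilon_i,c_i)\le\kappa(\epsilon_i,\bar c)$. By (i), $\kappa(\epsilon_i,\bar c)\to1$, which contradicts $x>1$. Existence and uniqueness of $c_i(\epsilon_i,x)$ follow from the continuity, strict monotonicity and limits established in (ii).
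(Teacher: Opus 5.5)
Your treatment of (i), (iii), and the continuity, monotonicity and limit claims in (ii) matches the paper. Your argument for (iii), via $x=\kappa(\epsilon_i,c_i)\le\kappa(\epsilon_i,\bar c)\to 1$, is a slightly cleaner form of the paper's sandwich. The gap is convexity. You correctly reduce it to $h''\ge 0$, but a sign analysis cannot settle it.

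Write $G(\xi):=1/\bigl(Q(\xi)\,g(Q(\xi))\bigr)$. Then $h''(c)=\tfrac14\bigl[G'(1-p)-G'(p)\bigr]$ and, at $x=Q(\xi)$,
\[
G'(\xi)=-\frac{(xg)'(x)}{x^{2}g(x)^{3}}=\frac{x-\nu}{2x^{2}g(x)^{2}}.
\]
Note that the denominator is $x^{2}g(x)^{3}$, not $(xg(x))^{3}$ as in your identity. The sign of $G'$ is that of $x-\nu$. The median of $\chi^2_\nu$ is strictly below $\nu$, so for all sufficiently small $c>0$ both $Q(p)$ and $Q(1-p)$ lie in $(0,\nu)$. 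Then $G'(p)$ and $G'(1-p)$ are both negative, and their signs say nothing about $h''$. In that regime your claim that the first term of $h'$ is increasing in $c$ is actually false, since $\tfrac{d}{dc}\tfrac12 G(1-p)=\tfrac14 G'(1-p)<0$. Splitting into $x<\nu$ and $x\ge\nu$ only disposes of the case $Q(1-p)\ge\nu$, where $G'(1-p)\ge 0>G'(p)$.

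What is missing is a magnitude comparison: you need $G'$ itself to be increasing in $\xi$ on all of $(0,1)$. This is the paper's key step. One has $G'(\xi)=H(Q(\xi))$ with $H(x)=K\,(x-\nu)\,e^{x}x^{-\nu}$ for a constant $K>0$, and
\[
H'(x)=K\,e^{x}x^{-\nu-1}\bigl(x+(x-\nu)^{2}\bigr)>0\qquad\text{for all }x>0.
\]
Since $Q$ is increasing and $1-p>p$ when $c>0$, this gives $h''(c)>0$, and hence $\kappa''=\kappa\,\bigl(h''+(h')^{2}\bigr)>0$. Your worry about $\nu/2<1$ is unnecessary: $xg(x)\propto x^{\nu/2}e^{-x/2}$, and the computation of $H'$ above is valid for every $\nu>0$.
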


\vspace{1mm}
By Lemma~\ref{lem_kappa}, the upper and lower bounds in Lemma~\ref{lem_A2} coincide with $\beta^*(\sigma)$ as $\epsilon_i \to \infty$ for all $i \in [M]$. Since the lower bound equals the true optimum $\beta^*(\vec{\sigma})$, this implies that $\tilde{\beta}^{\text{robust}}(\vec{S}, \vec{\epsilon}, \vec{\sigma}, \vec{c})$ is a highly accurate approximation to $\beta^*(\vec{\sigma})$ when the pilot sample sizes are sufficiently large. % when the number of observations per experiment is sufficiently large.
We are now ready to introduce our reformulated optimization frameworks.
\subsection{Surrogate Reformulations of  \textbf{TOL}, \textbf{CONF}, and \textbf{EXP}}
The key idea behind our reformulations is to leverage the upper bound 
established in Lemma~\ref{lem_A2} to construct tractable approximations—either upper 
or lower bounds, depending on the objective—for the original formulations.
Below, we introduce surrogate reformulations of the three problems \textbf{TOL}, \textbf{CONF}, and \textbf{EXP}, which we denote by \textbf{R-TOL}, \textbf{R-CONF}, and \textbf{R-EXP}, respectively.
\vspace{1mm}
\begin{eqnarray*}
\textbf{R-TOL:} \hspace{10mm} \delta^R(\gamma, \vec{\epsilon}, \vec \sigma) \ := \ \min_{\vec{c}} && \Phi\left(q_{1-\alpha} - \sqrt{\frac{N}{\sum_{i=1}^M \kappa(\epsilon_i, c_i) \left(\frac{\sigma_i}{\Delta_i}\right)^2}}\right) - \beta^*(\vec\sigma) \\
\text{subject to} && 
\prod_{i=1}^M c_i \ge \gamma, \\
&& c_i \in [0,1) \ \ \forall i \in [M] \\[2mm]
\textbf{R-CONF:} \hspace{10mm} \gamma^R(\delta, \vec{\epsilon}, \vec \sigma) \ := \ \max_{\vec{c}} && \prod_{i=1}^M c_i \\
\text{subject to} && 
\delta \ \ge \ \Phi\left(q_{1-\alpha} - \sqrt{\frac{N}{\sum_{i=1}^M \kappa(\epsilon_i, c_i) \left(\frac{\sigma_i}{\Delta_i}\right)^2}}\right) - \beta^*(\vec\sigma), \\[1mm]
&&  c_i \in [0,1)  \ \ \forall i \in [M] \\[2mm]
\textbf{R-EXP:} \hspace{10mm} g^R(\vec{\epsilon}, \vec \sigma) \ := \ \min_{\vec{c}} && 
1 + \left[ \Phi\left(q_{1-\alpha} - \sqrt{\frac{N}{\sum_{i=1}^M \kappa(\epsilon_i, c_i) \left(\frac{\sigma_i}{\Delta_i}\right)^2}}\right) - 1\right] \cdot \prod_{i=1}^M c_i \\[1mm]
\text{subject to} && 
 c_i \in [0,1)  \ \ \forall i \in [M]
\end{eqnarray*}

\vspace{1mm}
A key advantage of \textbf{R-TOL} and \textbf{R-CONF} is that, after the reparameterization $x_i=\log c_i$, both problems reduce to deterministic convex programs with separable structure and a single coupling constraint; see \eqref{alternate R_TOL} and \eqref{alternate R_CONF} below. 
Moreover, \textbf{R-EXP} inherits the same tractability: for any fixed value of $\prod_{i=1}^M c_i=\gamma$, the inner minimization over $\vec c$ coincides with \textbf{R-TOL}, so \textbf{R-EXP} can be solved efficiently via a one-dimensional search over $\gamma$ with convex subproblems.
In contrast, directly optimizing \textbf{TOL}/\textbf{CONF}/\textbf{EXP} requires repeated evaluation of probabilities or expectations over the pilot randomness $\vec S$, which generally has no closed form for large $M$ and must be approximated numerically, leading to nested simulation and poor scalability. 

With these computational benefits in mind, we next define how each surrogate formulation induces a concrete set of correction factors (and hence an allocation rule). 
Let \( \vec{c}^{\pi} \) denote an optimal solution to formulation \( \pi \), where 
\( \pi \in \{\textbf{R-TOL}, \textbf{R-CONF}, \textbf{R-EXP}\} \). The corresponding correction vector \( \vec{k}^{\pi} \) is then defined entrywise by 
\[
k^{\pi}_i  :=  \overline{\varphi}(\epsilon_i, c^{\pi}_i) 
\quad \text{for all } i \in [M].
\]

\noindent
For clarity, we use $\vec{k}^{\,\mathrm{orig},\pi}$ to denote the optimal solution to the original formulation $\pi \in \{\textbf{TOL},$ $ \textbf{CONF}, \textbf{EXP}\}$. In the remainder of this subsection, we will analyze the desirable properties of our proposed surrogate reformulation and how it can closely ``approximate'' the original intractable formulations—\textbf{TOL}, \textbf{CONF}, and \textbf{EXP}. 

We begin by characterizing the relationship between \textbf{TOL} and \textbf{R-TOL}.

\vspace{1mm}
\begin{proposition}[\textbf{TOL vs. R-TOL}] \label{prop_A1}
Suppose $\gamma \in (0,1)$. The pair $(\vec k, \delta) = (\vec k^{\textbf{R-TOL}},  \delta^R(\gamma, \vec{\epsilon}, \vec{\sigma}))$ is feasible for \textbf{TOL} and, therefore,  $\delta^*(\gamma, \vec{\epsilon}, \vec{\sigma}) 
\le \delta^R(\gamma, \vec{\epsilon}, \vec{\sigma})$. Moreover, 
\begin{enumerate}
\item[\emph{(i)}] As $\gamma \to 0$, %$c^{\textbf{R-TOL}}_i \to 0$ for all $i \in [M]$, 
$\delta^*(\gamma, \vec{\epsilon}, \vec{\sigma}) \to 0$ and $\delta^R(\gamma, \vec{\epsilon}, \vec{\sigma}) \to 0$;
\item[\emph{(ii)}] As $\gamma \to 1$, %$c^{\textbf{R-TOL}}_i \to 1$ for all $i \in [M]$, 
$\delta^*(\gamma, \vec{\epsilon}, \vec{\sigma}) \to 1 - \alpha - \beta^*(\vec \sigma)$  and $\delta^R(\gamma, \vec{\epsilon}, \vec{\sigma}) \to 1 - \alpha - \beta^*(\vec \sigma)$;
\item[\emph{(iii)}] As $\epsilon_i \to \infty$ for all $i \in [M]$, $\delta^*(\gamma, \vec{\epsilon}, \vec{\sigma}) \to 0$ and $\delta^R(\gamma, \vec{\epsilon}, \vec{\sigma}) \to 0$.
\end{enumerate}    
\end{proposition}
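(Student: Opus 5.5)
Feasibility comes first, by chaining Lemma~\ref{lem_A1} and Lemma~\ref{lem_A2}. Let $\vec c = \vec c^{\textbf{R-TOL}}$, so that $\prod_i c_i \ge \gamma$. Set $\vec k = \vec k^{\textbf{R-TOL}}$, i.e.\ $k_i = \overline{\varphi}(\epsilon_i,c_i)=\ell_i$. Since $\chi^2_{(1-c_i)/2,\nu}\le \nu$ whenever $c_i$ is not too small, we expect $k_i \ge 1$; this needs a check (see below). On the event $\mathcal{E}(\vec S,\vec\epsilon,\vec c)$, Lemma~\ref{lem_A1} gives $\tilde\beta^*(\vec k,\vec S,\vec\sigma)\le \tilde\beta^{robust}$. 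Lemma~\ref{lem_A2} then bounds this by $\Phi(q_{1-\alpha}-\sqrt{N/\sum_i\kappa(\epsilon_i,c_i)a_i})$, which equals $\beta^*(\vec\sigma)+\delta^R$ by the definition of the R-TOL objective. Hence
\[
\mathbb{P}\bigl(\tilde\beta^*\le\beta^*+\delta^R\bigr)\ \ge\ \mathbb{P}(\mathcal E)\ =\ \prod_i c_i\ \ge\ \gamma .
\]
Also $\delta^R\ge0$, because $\kappa\ge1$ by Lemma~\ref{lem_kappa}(ii). So the pair is feasible and $\delta^*\le\delta^R$.

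The constraint $k_i\ge1$ is the delicate point. Only the ratios of the $k_i$ matter (Remark~\ref{remark1}), so I will rescale: replacing $\vec k$ by $\lambda\vec k$ with $\lambda\ge1/\min_i k_i$ leaves $\vec n^*(\vec k,\vec S)$, and hence $\tilde\beta^*$, unchanged. Therefore the original constraint $k_i\ge1$ is without loss of generality, and feasibility holds after this rescaling.

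For the limits, I will sandwich using $0\le\delta^*\le\delta^R$, so it suffices to control $\delta^R$ from above, except in (ii).

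(i) As $\gamma\to0$, take the feasible point $c_i=\gamma^{1/M}\to0$. By Lemma~\ref{lem_kappa}(ii), $\kappa(\epsilon_i,c_i)\to1$, so the objective tends to $\Phi(q_{1-\alpha}-\sqrt{N/\sum a_i})-\beta^*=0$. Hence $\delta^R\to0$, and so does $\delta^*$.

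(iii) Fix $c_i=\gamma^{1/M}<1$. By Lemma~\ref{lem_kappa}(i), $\kappa(\epsilon_i,c_i)\to1$ as $\epsilon_i\to\infty$, so again $\delta^R\to0$.

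(ii) For $\delta^R$: the constraint $\prod_i c_i\ge\gamma\to1$ forces every $c_i\to1$, so every $\kappa\to\infty$ by Lemma~\ref{lem_kappa}(ii). The objective then tends to $\Phi(q_{1-\alpha})-\beta^*=1-\alpha-\beta^*$, uniformly over the feasible set. Since $\delta^R$ can never exceed this value anyway, the limit holds.

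For $\delta^*$, the upper bound $\delta^*\le 1-\alpha-\beta^*$ follows from $\delta^*\le\delta^R$. The lower bound is the main obstacle: I must show that for any fixed $d<\infty$ (equivalently $\delta<1-\alpha-\beta^*$), $\sup_{\vec k}\mathbb{P}(\max_i U_i\le d)<1$, where $\tilde\beta^*=\Phi(q_{1-\alpha}-\sqrt{N/\max_i U_i})$ and $U_i=\sum_j (k_j/k_i)a_j\,Y_j/Y_i$ (the $M$-experiment analogue of the formula in Section~\ref{sec:unknown sigma 2exp}). I plan the following argument. Pick indices $i,j$ and use $\max_i U_i\ge\max(U_i,U_j)$, together with the fact that $\max(x,1/x)\ge1$. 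If $t:=(k_j a_j/k_i a_i)\,Y_j/Y_i$, then $\max(U_i,U_j)\ge \min(a_i,a_j)\max(t,1/t)$. The variable $Y_j/Y_i$ has an $F$-distribution with full support, so for any fixed deterministic scaling $\rho=k_ja_j/(k_ia_i)$, the probability $\mathbb{P}(\max(\rho F,1/(\rho F))>D)$ is bounded below by some $p(D)>0$ uniformly in $\rho$. This uses the tails at both ends; if one tail shrinks, the other grows. Minimizing over $\rho$ (continuous, and tending to $1$ as $\rho\to0$ or $\infty$) gives a positive bound. Consequently any feasible $\vec k$ with $\gamma>1-p$ requires $d>D$, so $d^*\to\infty$ as $\gamma\to1$, i.e.\ $\delta^*\to1-\alpha-\beta^*$.
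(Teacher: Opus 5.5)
Your proposal is correct. The feasibility argument, parts (i) and (iii), and the $\delta^R$ half of part (ii) match the paper's own argument. These use the chain through Lemmas~\ref{lem_A1} and~\ref{lem_A2}, the test point $c_i=\gamma^{1/M}$, and the fact that $\prod_i c_i\ge\gamma$ forces $c_i\ge\gamma$ for every $i$.

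You genuinely depart from the paper in the lower bound on $\delta^*$ in part (ii). The paper works as follows:
\begin{itemize}
\item It takes an optimal TOL solution $\vec k^*(\gamma)$ and discards all experiments except 1 and 2.
\item It splits into cases according to the $\liminf$ and $\limsup$ of $r^*_{12}(\gamma)=k^*_1(\gamma)/k^*_2(\gamma)$ as $\gamma\to1$.
\item In each case it replaces the ratio by a fixed $\bar r$. It then dominates one experiment's Type~2 error by a $\gamma$-free random variable whose support reaches $1-\alpha$.
\end{itemize}
You instead bound uniformly over all deterministic $\vec k$. From $U_i\ge a_i t$ and $U_j\ge a_j/t$ you get $\sup_{\vec k}\mathbb{P}(\max_i U_i\le d)\le 1-p(d/\min(a_i,a_j))<1$. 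This uses only that $\rho\mapsto\mathbb{P}(\rho F>D)+\mathbb{P}(\rho F<1/D)$ is continuous, positive, and tends to $1$ at both ends.

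Your route buys two things. First, you never need a minimizer of TOL to exist, since TOL is an infimum over $\vec k$. Second, you avoid the case split. The paper's third case is not airtight as written: $\liminf r^*_{12}=0$ together with $\limsup r^*_{12}=\infty$ gives $\liminf r^*_{21}=0$, not $>0$. Your two-sided bound through $\max(t,1/t)$ is exactly what handles such an oscillating ratio. The paper's route, in exchange, stays closer to the explicit allocation formula and the two-experiment machinery.

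A few minor points:
\begin{itemize}
\item No rescaling is needed for $k_i\ge1$. For every $c_i\in[0,1)$ one has $\chi^2_{(1-c_i)/2,\nu}\le\chi^2_{1/2,\nu}<\nu$, because the median of $\chi^2_\nu$ lies below its mean. Hence $\overline{\varphi}(\epsilon_i,c_i)>1$, and the literal pair $(\vec k^{\textbf{R-TOL}},\delta^R)$ is feasible. Your rescaling only certifies $\lambda\vec k^{\textbf{R-TOL}}$, which gives $\delta^*\le\delta^R$ but not the literal feasibility claim.
\item With unequal pilot sizes, $U_i=\sum_j (k_j/k_i)a_j(Y_j/\nu_j)/(Y_i/\nu_i)$, so the ratio is $F_{\nu_j,\nu_i}$-distributed. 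The constants are absorbed into $\rho$, and nothing changes.
\item ``Requires $d>D$'' should read $d>\min(a_i,a_j)D$. This is harmless.
\item Like the paper, your argument for (ii) implicitly uses $M\ge2$. For $M=1$ one has $\tilde\beta^*\equiv\beta^*(\vec\sigma)$, so $\delta^*\equiv0$ and the $\delta^*$ claim of (ii) fails.
\end{itemize}
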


\vspace{1mm}
Proposition~\ref{prop_A1} tells us that the optimal solution of \textbf{R-TOL} 
is always feasible for \textbf{TOL}. Indeed, by Lemma \ref{lem_A2}, any pair $(\vec k, \delta)$ with $k_i = \overline{\varphi}(\epsilon_i, c_i)$ for all $i \in [M]$ and
\begin{eqnarray*}
\delta 
= \Phi\left( 
q_{1-\alpha} - 
\sqrt{ 
\frac{N}
{\sum_{i=1}^M \kappa(\epsilon_i, c_i) 
\left( \frac{\sigma_i}{\Delta_i} \right)^2}
}
\right) - \beta^*(\vec\sigma)
\end{eqnarray*}

%\vspace{1mm}
\noindent
for some $\vec c$ satisfying $\prod_{i=1}^M c_i \ge \gamma$ is a feasible solution for \textbf{TOL}. Consequently, we have $\delta^*(\gamma, \vec{\epsilon}, \vec{\sigma}) 
\le \delta^R(\gamma, \vec{\epsilon}, \vec{\sigma})$. Thus, \textbf{R-TOL} can be viewed 
as a conservative approximation of \textbf{TOL}. Notably, the gap between the two 
objectives becomes small in three regimes: (i) when \( \gamma \) is small, 
(ii) when $\gamma$ is large, and (iii) when \( \vec{\epsilon} \) is large. Below, we provide intuition for these limits: % explanations for the limiting behaviors described in Proposition~\ref{prop_A1}:
\vspace{1mm}
\begin{enumerate}
\item When \( \gamma \) is close to 0, the required probability guarantee is weak, making the constraint in \textbf{TOL} easy to satisfy. Hence, the tolerance \( \delta \) can be made small (near zero) while still maintaining feasibility. Similarly, \textbf{R-TOL} can also achieve a near-zero objective by setting \( c_i \to 0\), which yields $\kappa(\epsilon_i, c_i) \to 1$, for all $i \in [M]$. 

\vspace{1mm}
\item When $\gamma$ is close to 1, the required probability guarantee becomes highly stringent and demands near-complete coverage. Since \( \tilde{\beta}^*(\vec{k}, \vec{S}, \vec{\sigma}) \) is a continuous random variable with support in \( (\beta^*(\vec{\sigma}), 1 - \alpha] \), we have \( \delta^*(\gamma, \vec{\epsilon}, \vec{\sigma}) \to 1 - \alpha - \beta^*(\vec{\sigma}) \). In this regime, setting $c_i \to 1$ for all $i \in [M]$ forces $\kappa(\epsilon_i, c_i) \to \infty$, so \textbf{R-TOL} yields the same limit. %\textbf{R-TOL} also has the same limit by setting \( c_i \) close to 1, which yields $\kappa(\epsilon_i, c_i) \to \infty$, for all \( i \in [M] \).

\vspace{1mm}
\item When \( \epsilon_i \to \infty \) for all \( i \in [M] \), the pilot standard deviation \( S_i \) converges to the true standard deviations \( \sigma_i \) almost surely. Thus, no correction is needed for \textbf{TOL} and  \( \delta^*(\gamma, \vec{\epsilon}, \vec{\sigma}) \to 0\). As for \textbf{R-TOL}, the chi-squared confidence interval for each \( \sigma_i^2 \) collapses to its true value, giving $\kappa(\epsilon_i, c_i) \to 1$ for any fixed $c_i$. Hence, any choice of \( \vec{c} \in [0,1)^M \) satisfying \( \prod_{i=1}^M c_i = \gamma \) is sufficient to meet the probabilistic guarantee.
\end{enumerate}

\vspace{1mm}
Similar to Proposition \ref{prop_A1}, the next two propositions characterize the relationship between \textbf{CONF} and \textbf{R-CONF}, and between \textbf{EXP} and \textbf{R-EXP}.

\vspace{1mm}
\begin{proposition}[\textbf{CONF vs. R-CONF}] \label{prop_A2}
Suppose that $\delta \in (0, 1-\alpha - \beta^*(\vec\sigma))$. The pair $(\vec k, \gamma) = (\vec k^{\textbf{R-CONF}}, \gamma^R(\delta, \vec{\epsilon}, \vec{\sigma}))$ is feasible for \textbf{CONF} and $\gamma^*(\delta, \vec{\epsilon}, \vec{\sigma}) 
\ge \gamma^R(\delta, \vec{\epsilon}, \vec{\sigma})$. Moreover, 
\begin{enumerate}
\item[\emph{(i)}] As $\delta \to 0$, %$c^{\textbf{R-CONF}}_i \to 0$ for all $i \in [M]$, 
$\gamma^*(\delta, \vec{\epsilon}, \vec{\sigma}) \to 0$ and $\gamma^R(\delta, \vec{\epsilon}, \vec{\sigma}) \to 0$;
\item[\emph{(ii)}] As $\delta \to 1-\alpha - \beta^*(\vec\sigma)$, %$c^{\textbf{R-CONF}}_i \to 1$ for all $i \in [M]$, 
$\gamma^*(\delta, \vec{\epsilon}, \vec{\sigma}) \to 1$ and $\gamma^R(\delta, \vec{\epsilon}, \vec{\sigma}) \to 1$;
\item[\emph{(iii)}] As $\epsilon_i \to \infty$ for all $i \in [M]$, $\gamma^*(\delta, \vec{\epsilon}, \vec{\sigma}) \to 1$ and $\gamma^R(\delta, \vec{\epsilon}, \vec{\sigma}) \to 1$.
\end{enumerate}
\end{proposition}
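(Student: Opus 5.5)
The plan mirrors Proposition~\ref{prop_A1}. Write $B(\vec c):=\Phi\bigl(q_{1-\alpha}-\sqrt{N/\sum_i \kappa(\epsilon_i,c_i)(\sigma_i/\Delta_i)^2}\bigr)$ for the closed-form bound in Lemma~\ref{lem_A2}.

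\emph{Feasibility.} Fix any $\vec c$ feasible for \textbf{R-CONF}, so $B(\vec c)-\beta^*(\vec\sigma)\le\delta$, and set $k_i=\overline{\varphi}(\epsilon_i,c_i)$. On $\mathcal{E}(\vec S,\vec\epsilon,\vec c)$, Lemma~\ref{lem_A1} gives $\tilde\beta^*(\vec k,\vec S,\vec\sigma)\le\tilde\beta^{\text{robust}}$. Lemma~\ref{lem_A2} then bounds this by $B(\vec c)\le\beta^*(\vec\sigma)+\delta$. Hence
\[
\mathbb{P}\bigl(\tilde\beta^*(\vec k,\vec S,\vec\sigma)\le\beta^*(\vec\sigma)+\delta\bigr)\ \ge\ \mathbb{P}(\mathcal E)=\prod_i c_i .
\]
Two details need handling. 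First, the \textbf{CONF} constraint $k_i\ge1$ is not automatic, since $\overline{\varphi}$ may be below $1$ for small $c_i$. It can be restored by rescaling all $k_i$ by a common constant, because $\vec n^*(\vec k,\vec S)$ depends only on the ratios $k_i/k_j$ (Remark~\ref{remark1}). Second, we need $\gamma\in(0,1)$. This holds because $\delta>0$ and continuity of $B$ with $B\to\beta^*$ as $\vec c\to0$ (Lemma~\ref{lem_kappa}(ii)) give a feasible $\vec c$ with positive product, while $c_i<1$ gives product $<1$. Applying the argument at $\vec c^{\textbf{R-CONF}}$ yields the feasibility claim and $\gamma^*\ge\gamma^R$. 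If the maximum is not attained, the same bound passes through the supremum by approximation.

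\emph{Limits.} Since $\gamma^R\le\gamma^*\le1$, it suffices for (ii) and (iii) to show $\gamma^R\to1$; for (i) it suffices to show $\gamma^*\to0$.

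For (ii), as $\delta\uparrow 1-\alpha-\beta^*$ the threshold $\beta^*+\delta\uparrow1-\alpha$. Take $c_i=c$ for all $i$. For any fixed $c<1$, $\kappa$ is finite, so $B(\vec c)<1-\alpha$ strictly. Hence $B(\vec c)-\beta^*\le\delta$ once $\delta$ is close enough to the endpoint, giving $\gamma^R\ge c^M$. Letting $c\to1$ gives $\gamma^R\to1$.

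For (iii), fix $\delta>0$ and $c<1$. By Lemma~\ref{lem_kappa}(i), $\kappa(\epsilon_i,c)\to1$, so $B(c\vec 1)\to\beta^*$. Feasibility therefore holds for all large $\epsilon$, so $\gamma^R\ge c^M$ eventually; then let $c\to1$.

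For (i), which I expect to be the main obstacle, we must show that the probability of landing within $\delta$ of the oracle vanishes uniformly over $\vec k$. Using the representation of Section~\ref{sec:unknown sigma 2exp} extended to $M$ experiments:
\[
\tilde\beta^*=\Phi\Bigl(q_{1-\alpha}-\sqrt{N/\max_i U_i}\Bigr),\qquad U_i=a_i\sum_j\frac{k_jS_j^2/\Delta_j^2}{k_iS_i^2/\Delta_i^2}.
\]
Here the event $\tilde\beta^*\le\beta^*+\delta$ becomes $\max_i U_i\le d(\delta)$, with $d(\delta)\downarrow\sum_j a_j$ as $\delta\to0$. Setting $w_i=k_iS_i^2/\sigma_i^2$, we have $U_i=\sum_j a_j w_j/w_i$.

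The key inequality is
\[
\max_i U_i\ \ge\ \sum_j a_j\frac{w_j}{\min_i w_i}\ \ge\ \sum_j a_j ,
\]
with the slack at least $a_{j^*}(w_{\max}/w_{\min}-1)$, where $j^*$ attains $w_{\max}$. Thus the event forces $w_{\max}/w_{\min}\le1+\eta(\delta)/\min_j a_j$ with $\eta(\delta)=d(\delta)-\sum_j a_j\to0$. The probability is therefore at most
\[
\sup_{\vec k}\ \mathbb{P}\Bigl(\max_{i,j}\frac{k_iY_i/\nu_i}{k_jY_j/\nu_j}\le 1+\eta\Bigr)\ \le\ \sup_{t>0}\ \mathbb{P}\Bigl(\frac{Y_1/\nu_1}{Y_2/\nu_2}\in[t,\,t(1+\eta)]\Bigr),
\]
where $Y_i\sim\chi^2_{\nu_i}$ and the pairwise restriction uses $M\ge2$.

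The remaining step is to show this supremum tends to $0$ as $\eta\to0$. This follows because the log of the $F$-ratio has a bounded density, so any interval of log-length $\log(1+\eta)$ carries probability $O(\eta)$ uniformly in $t$. This gives $\gamma^*\to0$, and hence $\gamma^R\to0$.
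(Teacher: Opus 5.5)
Your proof is correct, and its outline matches the paper's. The feasibility step goes through Lemmas~\ref{lem_A1}--\ref{lem_A2}, and $\gamma^R\le\gamma^*\le1$ reduces (ii)--(iii) to $\gamma^R\to1$ and (i) to $\gamma^*\to0$. Two of the limit arguments, however, take a different and more careful route.

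\emph{Part (i).} The paper argues only pointwise: for each fixed $\vec k$, $\tilde\beta^*(\vec k,\vec S,\vec\sigma)$ is a continuous random variable with $\tilde\beta^*\ge\beta^*(\vec\sigma)$, so $\mathbb{P}(\tilde\beta^*\le\beta^*(\vec\sigma))=0$. That alone does not control $\gamma^*(\delta,\vec\epsilon,\vec\sigma)$, which is a supremum over $\vec k$; the limit in $\delta$ and the supremum need not commute. Your reduction to $w_{\max}/w_{\min}\le 1+\eta'$ with $\eta':=\eta(\delta)/\min_j a_j$, combined with the bounded density of $\log F$, supplies exactly the missing uniformity. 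Your explicit use of $M\ge2$ is also warranted: for $M=1$, $\tilde\beta^*\equiv\beta^*(\vec\sigma)$, so $\gamma^*$ does not tend to $0$.

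\emph{Part (iii).} The paper picks a common $\hat c$ with $\kappa(\min_i\epsilon_i,\hat c)=\zeta$ and invokes Lemma~\ref{lem_kappa}(iii). Claiming feasibility of $(\hat c,\dots,\hat c)$ tacitly requires $\kappa(\cdot,c)$ to be nonincreasing in $\epsilon$, which Lemma~\ref{lem_kappa} does not establish. Your fixed-$c$ argument via Lemma~\ref{lem_kappa}(i) avoids this.

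\emph{Minor corrections.}
\begin{itemize}
\item The two-sided condition $w_1/w_2,\ w_2/w_1\le1+\eta'$ places the $F$-ratio in $[t,\,t(1+\eta')^2]$, of log-length $2\log(1+\eta')$, not in $[t,\,t(1+\eta')]$. This does not affect the conclusion.
\item The constraint $k_i\ge1$ is automatic. Since $(1-c_i)/2\le1/2$ and the median of $\chi^2_\nu$ lies below $\nu$, we get $\overline{\varphi}(\epsilon_i,c_i)>1$, so $\vec k^{\textbf{R-CONF}}$ itself is feasible and no rescaling is needed.
\item The \textbf{R-CONF} maximum is attained. Because $\kappa\to\infty$ as $c_i\to1$, the feasible set is compact whenever $\delta<1-\alpha-\beta^*(\vec\sigma)$, so your approximation caveat is unnecessary.
\end{itemize}
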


\begin{proposition}[\textbf{EXP vs. R-EXP}] \label{prop_A3}
 The solution $\vec k = \vec k^{\textbf{R-EXP}}$ is feasible for \textbf{EXP} and $g^*(\vec{\epsilon}, \vec{\sigma}) $ $ \le  g^R(\vec{\epsilon}, \vec{\sigma})$. In addition, as $\epsilon_i \to \infty$ for all $i \in [M]$, $g^*(\vec{\epsilon}, \vec{\sigma}) \to \beta^*(\vec\sigma)$ and $g^R(\vec{\epsilon}, \vec{\sigma}) \to \beta^*(\vec\sigma)$.
\end{proposition}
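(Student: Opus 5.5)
The argument has three parts: feasibility, the comparison $g^*\le g^R$, and the limit as $\epsilon_i\to\infty$.

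\emph{Feasibility.} For any $\vec c\in[0,1)^M$ we have $k_i=\overline{\varphi}(\epsilon_i,c_i)\ge 1$. The reason is that $\chi^2_{(1-c_i)/2,\epsilon_i-1}\le \chi^2_{1/2,\epsilon_i-1}<\epsilon_i-1$, since the median of a chi-squared distribution lies below its mean. Hence $\vec k^{\textbf{R-EXP}}$ is feasible for \textbf{EXP}.

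\emph{Comparison.} Fix any $\vec c$ and set $\ell_i=\overline{\varphi}(\epsilon_i,c_i)$. Write $B(\vec c)$ for the closed-form upper bound appearing in Lemma~\ref{lem_A2}.

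On the event $\mathcal{E}(\vec S,\vec\epsilon,\vec c)$, Lemmas~\ref{lem_A1} and~\ref{lem_A2} give
\[
\tilde\beta^*(\vec\ell,\vec S,\vec\sigma)\le\tilde\beta^{\text{robust}}\le B(\vec c).
\]
On the complement of that event, I use the trivial bound $\tilde\beta^*\le 1$ (the paper notes it is in fact at most $1-\alpha$). Splitting the expectation over the event and its complement, and using $\mathbb P(\mathcal E)=\prod_i c_i$ from Lemma~\ref{lem_A1},
\[
\mathbb E[\tilde\beta^*(\vec\ell,\vec S,\vec\sigma)]\le B(\vec c)\prod_i c_i+\Big(1-\prod_i c_i\Big)=1+(B(\vec c)-1)\prod_i c_i .
\]
The right-hand side is exactly the \textbf{R-EXP} objective. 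Taking $\vec c=\vec c^{\textbf{R-EXP}}$ gives $g^*\le \mathbb E[\tilde\beta^*(\vec k^{\textbf{R-EXP}},\vec S,\vec\sigma)]\le g^R$. If \textbf{R-EXP} does not attain its infimum, the same inequality applied to any near-optimal $\vec c$ suffices.

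\emph{Limits.} The lower bound is immediate: $\tilde\beta^*(\vec k,\vec S,\vec\sigma)\ge\beta^*(\vec\sigma)$ pointwise by optimality in \textbf{POWER-OPT}. Hence $\beta^*(\vec\sigma)\le g^*\le g^R$, and it suffices to show $\limsup g^R\le\beta^*(\vec\sigma)$.

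Fix $\eta>0$. Choose a common $c\in(0,1)$ with $c^M\ge 1-\eta$, and set $c_i=c$ for all $i$. By Lemma~\ref{lem_kappa}(i), $\kappa(\epsilon_i,c)\to1$ as $\epsilon_i\to\infty$. By continuity of $\Phi$, $B(\vec c)\to\beta^*(\vec\sigma)$, where $\beta^*(\vec\sigma)$ is given by Proposition~\ref{prop:power_opt_optimal}. Therefore
\[
g^R\le 1+(B(\vec c)-1)c^M\to 1-(1-\beta^*)c^M\le\beta^*+\eta .
\]
Letting $\eta\to0$ gives $\limsup g^R\le\beta^*(\vec\sigma)$, and the squeeze yields both limits.

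The only subtle point I expect is this limit argument. Pushing $c\to1$ makes $\kappa$ blow up (Lemma~\ref{lem_kappa}(ii)), so $c$ must be fixed first and only then $\epsilon\to\infty$. This order of limits is why the proof goes through a $\limsup$ and an $\eta$-argument rather than a direct substitution.
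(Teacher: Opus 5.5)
Your proof is correct. The feasibility and comparison steps match the paper: split $\mathbb{E}[\tilde\beta^*]$ over $\mathcal{E}(\vec S,\vec\epsilon,\vec c)$ and its complement, then use Lemmas~\ref{lem_A1} and~\ref{lem_A2} on the event and the bound $1$ off it.

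The limit step is where you genuinely differ. The paper holds the inflation level fixed. It picks $c_i$ with $\kappa(\epsilon_i,c_i)=\theta$ and uses Lemma~\ref{lem_kappa}(iii) to get $c_i(\epsilon_i,\theta)\to1$. The \textbf{R-EXP} objective then tends to $\Phi\bigl(q_{1-\alpha}-\sqrt{N/(\theta\sum_i(\sigma_i/\Delta_i)^2)}\bigr)$, and the paper finally sends $\theta\to1$. You instead hold the coverage fixed ($c^M\ge1-\eta$), use only the pointwise limit in Lemma~\ref{lem_kappa}(i) to get $\kappa(\epsilon_i,c)\to1$, and then let $\eta\to0$.

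Both are two-stage limits with the roles of the two parameters swapped. Yours is slightly more elementary, since it needs neither the implicit inverse $c_i(\epsilon_i,\theta)$ nor part (iii). It also sidesteps a slip in the paper, which takes $\theta\in(0,1)$ even though $\kappa\ge1$ forces $\theta>1$.

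You are also more careful than the paper in two places. First, you check directly that $k_i=\overline{\varphi}(\epsilon_i,c_i)\ge1$. The paper instead appeals to an ``invertible mapping'' between $\vec k\ge\vec 1$ and $\vec c$, but that mapping is not onto $[1,\infty)$: its range starts at $(\epsilon_i-1)/\chi^2_{1/2,\epsilon_i-1}>1$. Second, you state the squeeze $\beta^*(\vec\sigma)\le g^*\le g^R$ explicitly. The paper's final sentence about $g^*$ is circular as written, and this squeeze is what it presumably intends.
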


\vspace{1mm}
The insights from Propositions~\ref{prop_A2} and \ref{prop_A3} parallel those of 
Proposition~\ref{prop_A1}. For \textbf{CONF}, the surrogate \textbf{R-CONF} achieves a conservative lower bound on the achievable confidence level. When the tolerance $\delta \to 0$, the requirement becomes most stringent, making it difficult to guarantee high confidence; in this regime, both $\gamma^*(\delta, \vec{\epsilon}, \vec{\sigma})$ and $\gamma^R(\delta, \vec{\epsilon}, \vec{\sigma})$ converge to zero. As tolerance $\delta$ increases, the guarantee becomes easier to satisfy, and both formulations converge to full confidence. 
Similarly, for \textbf{EXP}, the surrogate \textbf{R-EXP} provides an upper bound on 
expected Type~2 error. In both cases, the gap between the surrogates and the original formulations vanishes as pilot sizes grow, ensuring that the robust approximations remain faithful to their exact counterparts in large-sample regimes. Together with Proposition~\ref{prop_A1}, these results show that our surrogate frameworks serve as 
reasonable proxies for the original formulations.

The next proposition demonstrates that the optimal \( \vec{c} \) under each reformulation also exhibits a structure similar to the exact solution in the two-experiment setting analyzed in Corollary~\ref{cor:optimal_2exp}.
\vspace{1mm}

\begin{proposition} \label{prop_A4}
Suppose that $\epsilon_i = \epsilon$ for all $i \in [M]$ (i.e., we have identical pilot sample sizes). For each $\pi \in \{\textbf{R-TOL}, \textbf{R-CONF},$ $ \textbf{R-EXP}\}$, we have:
\begin{enumerate}
\item[\emph{(i)}] If $\frac{\sigma_i}{\Delta_i} = \frac{\sigma_j}{\Delta_j}$, then $c^{\pi}_i = c^{\pi}_j$;
\item[\emph{(ii)}] If $\frac{\sigma_i}{\Delta_i} < \frac{\sigma_j}{\Delta_j}$, then $c^{\pi}_i \ge c^{\pi}_j$, with strict inequality when both are interior solutions;
\item[\emph{(iii)}] If $\frac{\sigma_i}{\Delta_i} > \frac{\sigma_j}{\Delta_j}$, then $c^{\pi}_i \le c^{\pi}_j$, with strict inequality when both are interior solutions.
\end{enumerate}
\end{proposition}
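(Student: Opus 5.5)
Since $\epsilon_i=\epsilon$ for all $i$, the function $\kappa(\epsilon,\cdot)$ is the same for every experiment; write $\kappa(c):=\kappa(\epsilon,c)$ and $a_i:=(\sigma_i/\Delta_i)^2$. The plan is to reduce all three surrogates to one common inner problem and then run an exchange argument.

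Because $\Phi(q_{1-\alpha}-\sqrt{N/x})$ is strictly increasing in $x>0$, the objective of \textbf{R-TOL} and the constraint of \textbf{R-CONF} depend on $\vec c$ only through $G(\vec c):=\sum_i \kappa(c_i)a_i$. Hence:
\begin{itemize}
\item \textbf{R-TOL} is $\min G(\vec c)$ subject to $\prod_i c_i\ge\gamma$.
\item \textbf{R-CONF} is $\max \prod_i c_i$ subject to $G(\vec c)\le d$ for the threshold $d$ determined by $\delta$.
\item \textbf{R-EXP}: fix the value $\gamma=\prod_i c_i$ at an optimum. Any optimal $\vec c^{\textbf{R-EXP}}$ must then minimize $G$ over $\{\prod_i c_i=\gamma\}$; otherwise we could lower $G$ while keeping the product fixed and strictly decrease the objective, since the factor $\prod_i c_i>0$ multiplies a term increasing in $G$.
\end{itemize}
So it suffices to show the monotonicity claim for optimizers of \textbf{R-TOL} and \textbf{R-CONF}.

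The core step is a swap argument. Take an optimizer $\vec c$ and indices $i,j$ with $a_i<a_j$, and suppose for contradiction that $c_i<c_j$. Let $\vec c'$ be $\vec c$ with $c_i$ and $c_j$ swapped. The product is unchanged, so feasibility in \textbf{R-TOL} is preserved and the \textbf{R-CONF} objective is unchanged. Moreover
\[
G(\vec c')-G(\vec c)=(a_i-a_j)\bigl(\kappa(c_j)-\kappa(c_i)\bigr)<0,
\]
because Lemma~\ref{lem_kappa}(ii) makes $\kappa$ strictly increasing. For \textbf{R-TOL} this contradicts optimality directly. For \textbf{R-CONF}, $\vec c'$ is feasible with slack in the constraint, so some $c_k<1$ can be raised slightly by continuity of $\kappa$. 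If the product is positive, this strictly increases it, again a contradiction. Hence $c_i\ge c_j$. Part (iii) is the same statement with the roles of $i$ and $j$ exchanged.

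For part (i) and for strictness, use the interior first-order conditions. Set $x_k=\log c_k$. On the constraint $\sum_k x_k=\log\gamma$, the function $h(x):=\kappa(e^{x})$ should be convex, which is the same convexity the paper invokes for \eqref{alternate R_TOL}; this follows from $\kappa$ increasing and convex in $c$ with the exponential map. The KKT conditions for interior coordinates then read
\[
a_k\,\kappa'(c_k)\,c_k=\lambda \quad\text{for all interior } k .
\]
If $a_i=a_j$, these conditions give the same equation for $c_i$ and $c_j$. Provided $c\mapsto c\,\kappa'(c)$ is strictly increasing (which follows from $\kappa'>0$ and the convexity of $\kappa$), the solutions coincide, so $c_i=c_j$. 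If the optimizer is not unique, strict convexity can be used to select the symmetric one, or a symmetrization (averaging) argument applied to $\vec x$ gives it.

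If $a_i<a_j$ and both coordinates are interior, then
\[
c_i\kappa'(c_i)=\lambda/a_i>\lambda/a_j=c_j\kappa'(c_j),
\]
which forces $c_i>c_j$ by the same strict monotonicity of $c\,\kappa'(c)$. The same KKT structure applies to \textbf{R-CONF}, with $\lambda$ replaced by the reciprocal of its multiplier.

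The main obstacle I expect is verifying that the multiplier $\lambda$ is strictly positive and handling boundary cases. These include $c_k=0$, the possibility that the product is zero in \textbf{R-CONF}, and optimizers that are not unique. The exchange argument covers the weak inequalities in all cases, so the KKT analysis only needs to be carried out for interior points.
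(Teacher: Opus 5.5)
Your proof is correct. The step that gives part (i) and strictness is exactly the paper's argument: stationarity in $x_k=\log c_k$, i.e.\ $a_k\,c_k\kappa'(c_k)=\lambda$, together with strict monotonicity of $c\mapsto c\,\kappa'(c)$. The paper writes this as $g_i'(x_i)(\sigma_i/\Delta_i)^2=\lambda$ with $g_i'$ strictly increasing.

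You depart from the paper in two places.
\begin{itemize}
\item Your exchange argument for the weak inequalities uses only strict monotonicity of $\kappa$, with no differentiability, convexity, or constraint qualification. It also makes the mechanism transparent: moving the larger confidence level to the experiment with smaller $a_i$ lowers $\sum_k\kappa(c_k)a_k$ at no cost in $\prod_k c_k$.
\item For \textbf{R-EXP} you show that the optimizer is an \textbf{R-TOL} optimizer at its own level $\gamma=\prod_k c_k^{\textbf{R-EXP}}$. The paper instead differentiates $1+A(S)P$ directly and obtains the common value $-A(S)/A'(S)$. Your reduction avoids that computation and is the structural fact behind the paper's line-search algorithm. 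The paper's route is shorter and handles all three problems in one uniform way.
\end{itemize}

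The obstacles you anticipate are not real, because no optimizer has a zero coordinate:
\begin{itemize}
\item in \textbf{R-TOL}, the constraint forces $\prod_k c_k\ge\gamma>0$;
\item in \textbf{R-CONF}, $d(\delta)>\sum_k a_k$ for $\delta>0$, so small positive $\vec c$ are feasible with positive product;
\item in \textbf{R-EXP}, a zero product gives objective $1$, and equal positive $c_k$ do strictly better.
\end{itemize}
Hence, in the open $x$-parametrization, every optimizer is interior and stationarity holds at each of them. Positivity $\lambda=a_k c_k\kappa'(c_k)>0$ is then immediate, and no selection among multiple optimizers is needed. So the KKT step alone already gives (i)--(iii) with strict inequalities, and the ``interior'' qualifier in the statement is automatic. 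Your exchange argument remains a valid, more elementary, but logically redundant second route to the weak inequalities.
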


\subsection{Solving \textbf{R-TOL}, \textbf{R-CONF}, and \textbf{R-EXP}}
\label{subsec:solve_r-tol}
We now discuss how to solve the surrogate reformulations. 

Note that solving \textbf{R-TOL} is equivalent to the solving the following optimization problem:
%\vspace{1mm}
\begin{eqnarray*}
\min_{\vec{c}} && \sum_{i=1}^M \kappa(\epsilon_i, c_i) 
\left(\frac{\sigma_i}{\Delta_i}\right)^2 \\
\text{subject to} && \prod_{i=1}^M c_i \ge \gamma, \\
&& c_i \in  [0,1) \ \ \forall i \in [M]
\end{eqnarray*}
%\vspace{1mm}
\noindent
Let \( x_i = \log c_i \) and define \( g(x_i) := \kappa(\epsilon_i, e^{x_i}) \) for all \( i \in [M] \). By construction,  \( x_i \in (-\infty, 0) \) and \( g(x_i) \) is increasing in \( x_i \), with $\lim_{x_i \to -\infty} g(x_i) = 1$ and $\lim_{x_i \to 0} g(x_i) = \infty$. The problem becomes:
%\vspace{1mm}
\begin{eqnarray} \label{alternate R_TOL}
\min_{\vec{x}} && \sum_{i=1}^M g(x_i) 
\left(\frac{\sigma_i}{\Delta_i}\right)^2 \label{OBJ1} \\
\text{subject to} && \sum_{i=1}^M x_i \ge \log \gamma, \nonumber \\
&& x_i \in  (-\infty, 0) \ \ \forall i \in [M] \nonumber
\end{eqnarray}
%\vspace{1mm}
\noindent
By Lemma~\ref{lem_kappa}(ii), $\kappa(\epsilon_i,c_i)$ is convex and increasing in $c_i$.
Therefore $g(x_i)=\kappa(\epsilon_i,e^{x_i})$ is convex in $x_i$, and the objective in \eqref{alternate R_TOL} is convex and separable.
Since the constraint set is also convex, \eqref{alternate R_TOL} is a (deterministic) convex optimization problem and can be solved efficiently using standard solvers.

\textbf{R-CONF} shares the same structural ingredients as \textbf{R-TOL}. Applying the change of variables \(x_i=\log c_i\) and \(g(x_i)=\kappa(\epsilon_i,e^{x_i})\), and using the definition of \(d(\delta)\) in \eqref{eq:d(delta)}, it can be written as % equivalently as
\begin{eqnarray} \label{alternate R_CONF}
\max_{\vec{x}} && \sum_{i=1}^M x_i \\
\text{subject to} && 
\sum_{i=1}^M g(x_i)
\left(\frac{\sigma_i}{\Delta_i}\right)^2 \le d(\delta), \nonumber \\
&& x_i \in (-\infty, 0) \ \ \forall i \in [M] \nonumber
\end{eqnarray}

%\vspace{1mm}
\noindent
The above is a convex problem and can be solved efficiently using any off-the-shelf solver.

As for \textbf{R-EXP}, for any fixed value of $\prod_{i=1}^M c_i = \gamma$, 
the inner minimization over $\vec{c}$ is equivalent to solving 
\textbf{R-TOL}. Therefore, \textbf{R-EXP} can be solved by 
performing a one-dimensional line search over $\gamma \in (0,1)$: 
for each candidate $\gamma$, solve the corresponding 
\textbf{R-TOL} subproblem to obtain the minimal inner value, and 
then choose the $\gamma$ that minimizes the overall objective. 
Each subproblem is convex and can be efficiently solved using 
any off-the-shelf solver.

\subsection{The Proposed Approach}

While the reformulations \textbf{R-TOL}, \textbf{R-CONF}, and 
\textbf{R-EXP} provide a computationally tractable framework, their 
objectives and constraints are still expressed in terms of the true 
standard deviations $\vec{\sigma}$, which are unknown in practice. 
To translate these structural results into an operational procedure, 
we propose the \textbf{Surrogate-}$S$ method. 

The key idea is to use the pilot study as a plug-in estimator for the 
unknown variance parameters. Specifically, we replace each 
$\sigma_i$ in the surrogate formulations with its pilot-based estimate 
$S_i$, and solve the resulting optimization problem exactly as before. 
This substitution produces a fully data-dependent procedure that 
retains the tractability of the surrogate programs while eliminating 
the need for knowledge of the true $\vec{\sigma}$.

Formally, replacing $\sigma_i$ with $S_i$ yields the following empirical 
optimization problems:

\vspace{1mm}
\noindent \textbf{Empirical R-TOL (Surrogate-$S$):}
\begin{eqnarray} \label{empirical_R_TOL}
\min_{\vec{x}} && \sum_{i=1}^M g(x_i) \left(\frac{S_i}{\Delta_i}\right)^2 \\
\text{subject to} && \sum_{i=1}^M x_i \ge \log \gamma, \nonumber \\
&& x_i \in (-\infty, 0) \ \ \forall i \in [M] \nonumber
\end{eqnarray}

\vspace{1mm}
\noindent \textbf{Empirical R-CONF (Surrogate-$S$):}
\begin{eqnarray} \label{empirical_R_CONF}
\max_{\vec{x}} && \sum_{i=1}^M x_i \\
\text{subject to} && \sum_{i=1}^M g(x_i) \left(\frac{S_i}{\Delta_i}\right)^2 \le d(\delta), \nonumber \\
&& x_i \in (-\infty, 0) \ \ \forall i \in [M] \nonumber
\end{eqnarray}

\vspace{1mm}
\noindent \textbf{Empirical R-EXP (Surrogate-$S$):}
Similar to \textbf{R-EXP} but replace each $\sigma_i$ with $S_i$.

\vspace{1mm}
To operationalize the proposed method in practice, the platform proceeds as in Figure~\ref{fig:surrogate_s_flowchart}. 
It first collects the pilot variance estimates $S_i^2$ from samples of 
size $\epsilon_i$, together with the managerial effect-size thresholds 
$\Delta_i$. Given these inputs, the platform solves the selected empirical 
surrogate formulation (e.g., \eqref{empirical_R_TOL} for \textbf{R-TOL}) 
using a standard convex optimization solver to obtain the optimal 
decision variables $\vec{x}^*$. These variables are then mapped to confidence levels via 
$c_i^* = \exp(x_i^*)$, which determine the corresponding robust 
correction factors $k_i^* = \overline{\varphi}(\epsilon_i, c_i^*)$ 
derived from the $\chi^2$ upper bounds. Finally, substituting the 
resulting correction factors $\vec{k}^*$ and the pilot estimates 
$\vec{S}$ into the power-optimal allocation formula 
\eqref{eq:n*(k,S)} yields the allocation of the total experimentation 
budget $N$ across experiments.

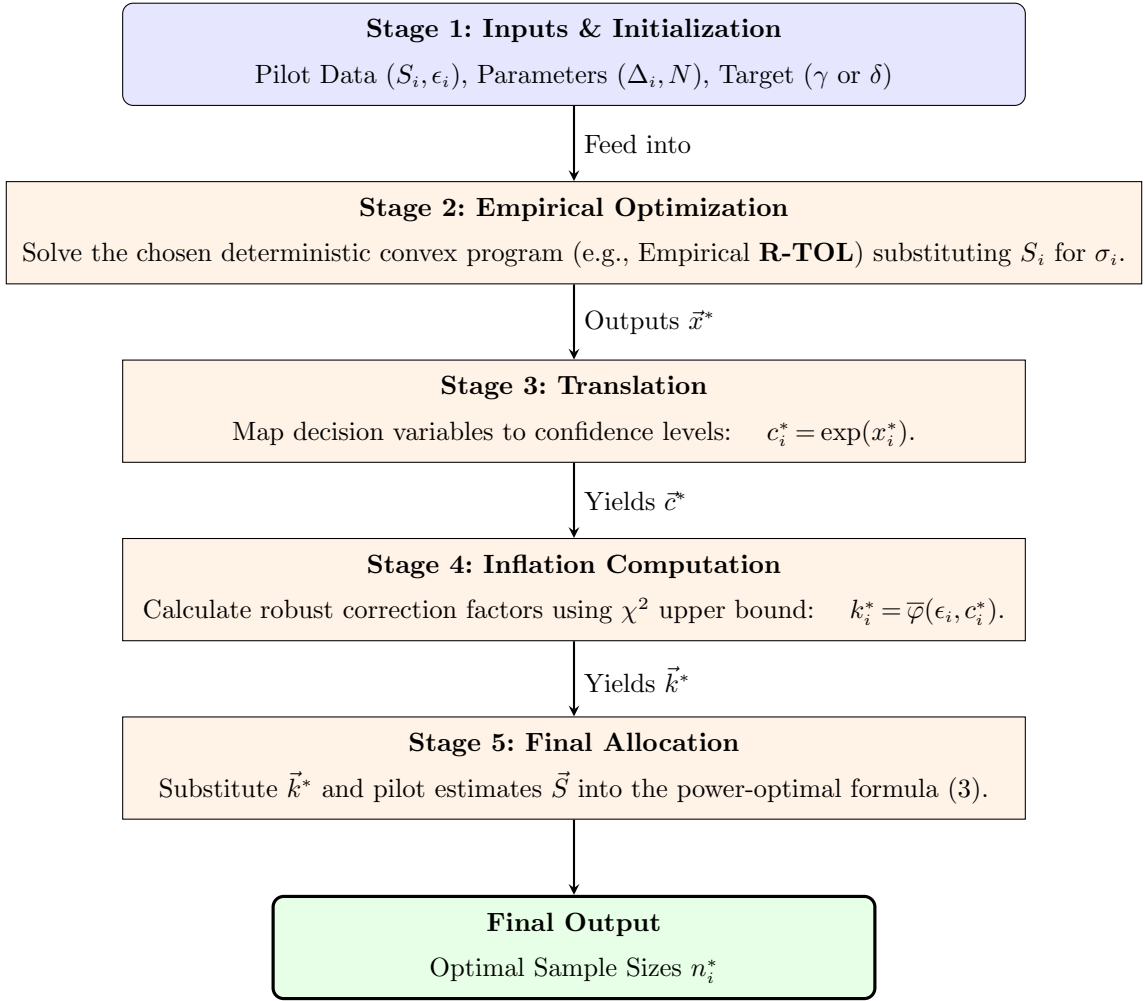
\begin{figure}[htbp]
\centering

\usetikzlibrary{positioning,calc}

\tikzset{
  input/.style = {rectangle, rounded corners, draw=black, fill=blue!10,
                  minimum width=12cm, inner sep=6pt, align=center},
  process/.style = {rectangle, draw=black, fill=orange!10,
                    minimum width=12cm, inner sep=6pt, align=center},
  output/.style = {rectangle, rounded corners, draw=black, fill=green!10,
                   very thick, minimum width=8cm, inner sep=6pt, align=center},
  arrow/.style = {thick, ->, >=stealth},
  font=\small
}

\begin{tikzpicture}[node distance=10mm]

\node (start) [input] {
  \textbf{Stage 1: Inputs \& Initialization}\\
  Pilot Data ($S_i,\epsilon_i$), Parameters ($\Delta_i,N$), Target ($\gamma$ or $\delta$)
};

\node (opt) [process, below=of start] {
  \textbf{Stage 2: Empirical Optimization}\\
  Solve the chosen deterministic convex program (e.g., Empirical \textbf{R-TOL})
  substituting $S_i$ for $\sigma_i$.
};

\node (trans) [process, below=of opt] {
  \textbf{Stage 3: Translation}\\
  Map decision variables to confidence levels: \quad $c_i^*=\exp(x_i^*)$.
};

\node (inf) [process, below=of trans] {
  \textbf{Stage 4: Inflation Computation}\\
  Calculate robust correction factors using $\chi^2$ upper bound: \quad
  $k_i^*=\overline{\varphi}(\epsilon_i,c_i^*)$.
};

\node (alloc) [process, below=of inf] {
  \textbf{Stage 5: Final Allocation}\\
  Substitute $\vec{k}^*$ and pilot estimates $\vec{S}$ into the power-optimal formula~(3).
};

\node (end) [output, below=of alloc] {
  \textbf{Final Output}\\
  Optimal Sample Sizes $n_i^*$
};

\draw [arrow] (start) -- (opt);
\draw [arrow] (opt) -- (trans);
\draw [arrow] (trans) -- (inf);
\draw [arrow] (inf) -- (alloc);
\draw [arrow] (alloc) -- (end);

\node[right] at ($(start.south)!0.5!(opt.north)$)  {Feed into};
\node[right] at ($(opt.south)!0.5!(trans.north)$)  {Outputs $\vec{x}^*$};
\node[right] at ($(trans.south)!0.5!(inf.north)$)  {Yields $\vec{c}^*$};
\node[right] at ($(inf.south)!0.5!(alloc.north)$)  {Yields $\vec{k}^*$};

\end{tikzpicture}

\caption{End-to-End Process Flow of the Surrogate-$S$ Method. The procedure transforms raw pilot data into final sample size allocations through a sequence of convex optimization and deterministic mappings.}
\label{fig:surrogate_s_flowchart}
\end{figure}

\vspace{2mm}
To illustrate the practical impact of our surrogate reformulation, we present three simulation plots corresponding to the objectives \textbf{R-TOL}, \textbf{R-CONF}, and \textbf{R-EXP}. In the discussion below, we compare the performance of three distinct allocation strategies: 
\begin{itemize}
\item The \textbf{Naive Plug-in (Blue line)}, which serves as the uncorrected baseline (no inflation factor, i.e., $k_i = 1$) where $S_i$ is substituted for $\sigma_i$; 
\item The \textbf{Oracle Surrogate-$\sigma$ (Orange line)}, representing the theoretical benchmark where robust factors are tuned using the true $\sigma_i$; and,
\item Our proposed \textbf{Surrogate-$S$ (Green line)}, which implements the robust formulation using only the pilot estimates $S_i$.
\end{itemize}

\begin{figure}[htbp]
    \centering
    \includegraphics[width=0.7\textwidth]{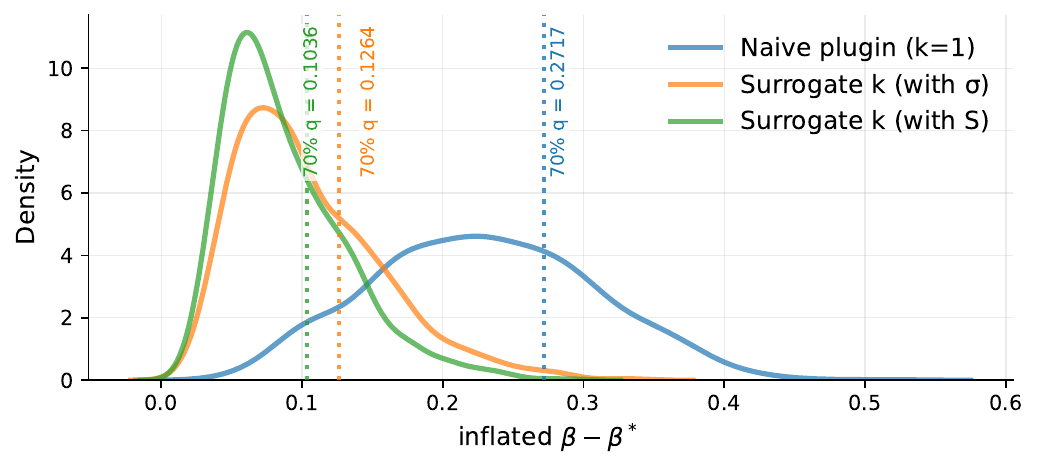}
    \caption{Distribution of Maximum Type~2 Error Relative to $\beta^*(\vec{\sigma})$ for the \textbf{R-TOL} objective ($\gamma = 0.7$). The vertical dotted lines mark the 70th percentile cutoffs. The curves compare the \textbf{Naive Plug-in (Blue)} with no correction factor, the \textbf{Oracle Surrogate-$\sigma$ (Orange)}, and our proposed \textbf{Surrogate-$S$ (Green)}.}
    \label{fig:rtol}
\end{figure}

\vspace{1mm}
Figure~\ref{fig:rtol} evaluates the \textbf{R-TOL} objective, where the platform requires a $70\%$ confidence level ($\gamma=0.7$) that the realized Type~2 error remains within a specific bound. The vertical dotted lines quantify the ``cost'' of this reliability: they mark the minimum tolerance $\delta$ needed to ensure the error stays below that threshold in $70\%$ of the trials. To achieve this $70\%$ confidence guarantee, the Naive Plug-in method (Blue) forces the user to accept a substantial excess error margin of $\approx 0.27$. In contrast, our \textbf{Surrogate-$S$} method (Green) satisfies the same $70\%$ confidence requirement with a strictly tighter tolerance of $\approx 0.10$. This demonstrates that for a fixed level of confidence, our robust formulation reduces the necessary error margin by over $60\%$ relative to the naive baseline.

\begin{figure}[htbp]
    \centering
    \includegraphics[width=0.7\textwidth]{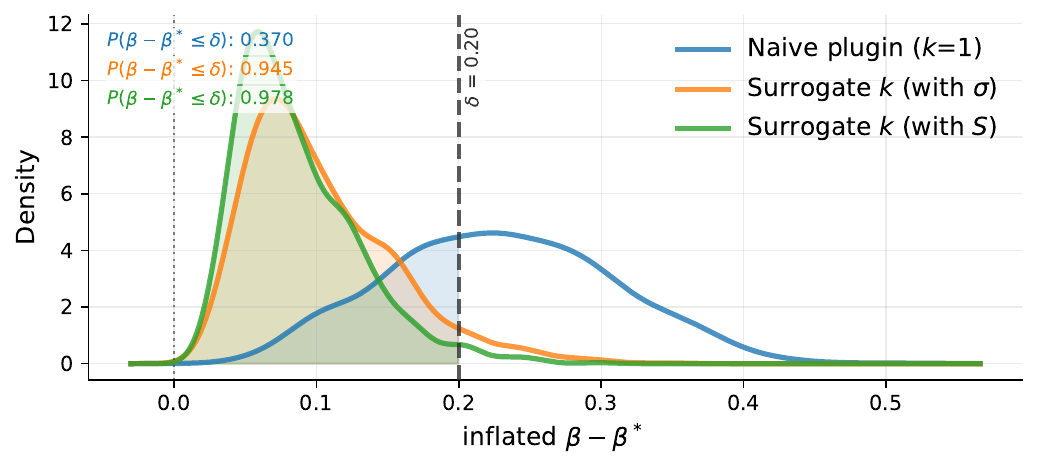}
    \caption{Distribution of Maximum Type~2 Error Relative to $\beta^*(\vec{\sigma})$ for the \textbf{R-CONF} objective ($\delta = 0.2$). The dashed vertical line marks the tolerance threshold. The curves compare the \textbf{Naive Plug-in (Blue)}, the \textbf{Oracle Surrogate-$\sigma$ (Orange)}, and our proposed \textbf{Surrogate-$S$ (Green)}.}
    \label{fig:rconf}
\end{figure}

Figure~\ref{fig:rconf} evaluates the \textbf{R-CONF} objective, where the platform sets a strict tolerance limit of $\delta=0.2$ on the excess error (marked by the vertical dashed line). The goal is to maximize the probability---or confidence level---that the realized Type~2 error stays within this pre-specified bound (i.e., to the left of the dashed line). The Naive method (Blue) fails to reliably meet this constraint, with significant probability mass leaking beyond the threshold; specifically, it achieves a realized confidence of only $37.0\%$, meaning it violates the error limit in most trials. Conversely, our \textbf{Surrogate-$S$} method (Green) successfully concentrates the distribution within the allowable region, achieving a realized confidence level of $97.8\%$. This demonstrates that the robust approach effectively guarantees that the error remains within the manager's tolerance, whereas the naive approach frequently violates it.
\begin{figure}[htbp]
    \centering
    \includegraphics[width=0.7\textwidth]{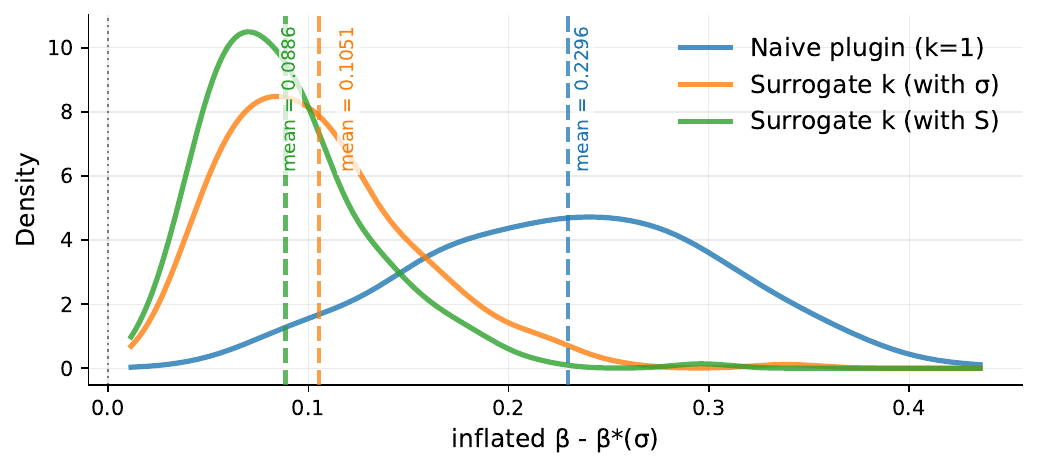}
    \caption{Distribution of Maximum Type~2 Error Relative to $\beta^*(\vec{\sigma})$ for the \textbf{R-EXP} objective. The vertical dashed lines indicate the mean excess error. The curves compare the \textbf{Naive Plug-in (Blue)}, the \textbf{Oracle Surrogate-$\sigma$ (Orange)}, and our proposed \textbf{Surrogate-$S$ (Green)}.}
    \label{fig:rexp}
\end{figure}

Finally, Figure~\ref{fig:rexp} evaluates the \textbf{R-EXP} objective, which adopts a risk-neutral perspective by minimizing the \emph{expected} worst-case Type~2 error. The vertical dashed lines mark the mean of each distribution, representing the average excess error a platform would incur over repeated applications. The Naive approach (Blue) results in a significantly higher average cost, with a mean excess error of approximately $0.23$. In contrast, our \textbf{Surrogate-$S$} method (Green) shifts the entire distribution toward zero, greatly reducing this average cost to roughly $0.09$. This reduction of over $60\%$ demonstrates that accounting for estimation uncertainty via robust inflation factors yields substantial performance gains on average, not just in extreme cases. Lastly, we note in Figures~\ref{fig:rtol}-~\ref{fig:rexp} our proposed Surrogate-S cost is comparable (and often better) to the Oracle Surrogate approach, showing the robust formulation loses little with the plug-in approach.

\section{Conclusion and Future Directions} \label{sec:closing}

In this paper, we studied the allocation of limited experimentation 
resources across a large portfolio of parallel tests in an 
experiment-rich regime. Departing from traditional allocation rules 
that prioritize estimation accuracy or average performance, we 
adopted a minimax perspective that directly controls the maximum 
Type~2 error across experiments. This objective ensures that no 
experiment is systematically underpowered and that detection 
reliability is balanced across the portfolio. We showed that MSE-optimal allocations, while natural for uniform 
estimation accuracy, can perform poorly from a discovery standpoint, 
particularly under tight budgets. To address this gap, we introduced 
correction factors that inflate pilot-based variance estimates, 
mitigating the risk of underestimating variability and the associated 
power loss. We formalized this idea through three optimization-based 
frameworks—\textbf{TOL}, \textbf{CONF}, and \textbf{EXP}—each reflecting a distinct risk 
criterion. Although these formulations lead to high-dimensional 
stochastic programs, we developed robust optimization–inspired 
surrogate reformulations that are tractable, scalable, and retain the 
structural logic of optimal variance inflation. Building on this 
analysis, we proposed \textbf{Surrogate-$S$}, a fully data-dependent 
and implementable procedure that achieves near-oracle performance in 
numerical experiments. Taken together, our results underscore the importance of explicitly 
controlling Type~2 error in large-scale experimentation systems. 
More broadly, they illustrate how principled optimization tools can 
be used to align statistical guarantees with managerial objectives in 
resource-constrained environments.

Several directions for future research merit further investigation. 
First, extending the framework to adaptive or sequential allocation 
policies could enhance efficiency when experimentation unfolds over 
time. Second, incorporating dependence across experiments—such as 
overlapping user populations or interference effects—would enrich 
the model and better reflect real-world platforms. Third, exploring 
connections with broader causal inference frameworks may expand the 
applicability of our methods to more complex experimental designs. 
These directions point toward a more comprehensive theory of 
large-scale experimental design under uncertainty and resource 
constraints.

%\newpage
\bibliographystyle{ormsv080} % outcomment this and next line in Case 1
\bibliography{jpi} 

%% Here starts the e-companion (EC)
%%%%%%%%%%%%%%%%%%%%%%%%%%%%%%%%%%%%%%%%%%%%%%%%%%%%%%%%%%
\ECSwitch

\ECDisclaimer
%%%%%%%%%%%%%%%%%%%%%%%%%%%%%%%%%%%%%%%%%%%%%%%%%%%%%%%%%%

%%% EC.1 matches Main Section 4 %%%
\section{Proofs and Supplemental Material for Section \ref{sec:known_sigma}} \label{ec:known_sigma}

\subsection{Proof of Proposition~\ref{prop:power_opt_optimal}}

By definition, the Type 2 error function \( \beta(\sigma_i, n_i) \) is continuous and strictly decreasing in \( n_i \). Under the minimax objective, the optimal solution must equalize the Type 2 errors across all experiments. This can be shown by a simple contradiction argument. Suppose \( \beta(\sigma_i, n_i^*) > \beta(\sigma_j, n_j^*) \) 
for some pair \( i \ne j \). Then, it is possible to shift a small amount of allocation 
from experiment \( e_i \) to experiment \( e_j \), which would strictly reduce the maximum Type 2 error between the two experiments. This contradicts the optimality of \( \vec{n}^* \). 
Hence, in the optimal solution, all Type 2 errors must be equal.

Assume the optimal allocation \( \vec{n}^* \) satisfies:
\[
\beta(\sigma_1, n_1^*) = \cdots = \beta(\sigma_M, n_M^*) = \beta^*, 
\quad \text{and} \quad \sum_{i=1}^M n_i^* = N.
\]
(If \( \sum_{i=1}^M n_i^* < N \), then there are unused resources that can be redistributed across the experiments to further reduce the maximum Type 2 error. This contradicts optimality, so any optimal solution must fully exhaust the budget, i.e., \( \sum_{i=1}^M n_i^* = N \).) By definition of \( \beta(\sigma_i, n_i) \), we have:
\[
\Phi\left( q_{1-\alpha} - \frac{ \Delta_i \sqrt{n_i^*} }{ \sigma_i } \right) = \beta^*,
\quad \forall i \in [M].
\]
Applying \( \Phi^{-1}(\cdot) \) to both sides and rearranging gives:
\[
n_i^* = \left( \frac{ \sigma_i }{ \Delta_i } \right)^2 
\left( q_{1-\alpha} - \Phi^{-1}(\beta^*) \right)^2.
\]
Summing over all experiments:
\[
\sum_{i=1}^M n_i^* = 
\left( q_{1-\alpha} - \Phi^{-1}(\beta^*) \right)^2 
\cdot \left[\sum_{i=1}^M \left( \frac{ \sigma_i }{ \Delta_i } \right)^2 \right] = N.
\]
Now, let $C := \sum_{i=1}^M \left( \frac{ \sigma_i }{ \Delta_i } \right)^2$. Then, we have $
C \left( q_{1-\alpha} - \Phi^{-1}(\beta^*) \right)^2 = N$, or equivalently
\[
\beta^* = \Phi\left( q_{1-\alpha} - \sqrt{ \frac{N}{C} } \right).
\]
We conclude that the optimal Type 2 error is:
\[
\beta^*(\vec{\sigma}) = 
\Phi\left( q_{1-\alpha} - \sqrt{ 
\frac{N}{\sum_{j=1}^M \left( \frac{ \sigma_j }{ \Delta_j } \right)^2 } } \right).
\]
Substituting back into the expression for \( n_i^* \), we obtain:
\[
n_i^* = \left( \frac{ \sigma_i }{ \Delta_i } \right)^2 
\cdot \frac{N}{C} 
= N \cdot \frac{ 
\left( \frac{ \sigma_i }{ \Delta_i } \right)^2 
}{ 
\sum_{j=1}^M \left( \frac{ \sigma_j }{ \Delta_j } \right)^2 
}.
\]
This completes the proof. \hfill $\blacksquare$

%%% EC.2 matches Main Section 5 %%%
\section{Proofs and Supplemental Material for Section \ref{sec:unknown sigma 2exp}} \label{ec:unknown sigma 2exp}

\subsection{Proof of Lemma~\ref{lem:H(r,d)}}

\underline{Proof of Property (i)}

Property~(i) of the lemma follows directly from the definition of \( H(r, d) \). Since \( F_{F_{\nu, \nu}} \) is the CDF of the F-distribution, it is monotonically increasing. As \( d \) increases, the upper limit increases while the lower limit decreases. Hence, the interval expands, and \( H(r,d) \) increases with \( d \).

\vspace{2mm}
\noindent
\underline{Proof of Property (ii)}

For property (ii), $\lim_{r \to 0^+} H(r,d) = 0$ and $\lim_{r \to \infty} H(r,d) = 0$ follow directly from the definition of $H(r,d)$. Now, we consider the maximizer $r(d)$.  Taking the partial derivative of \( H(r,d) \) with respect to \( r \) yields
\begin{align*}
\frac{\partial H}{\partial r}
&= \frac{a_2}{r^2 (d - a_1)} 
   f_F\left( \frac{a_2}{r (d - a_1)} \right)
 - \frac{d - a_2}{r^2 a_1} 
   f_F\left( \frac{d - a_2}{r a_1} \right),
\end{align*}
where $f_F$ is the pdf of $F_{F_{v,v}}$. Setting the derivative to zero yields
\begin{align*}
\frac{a_2}{r^2 (d - a_1)} 
f_F\left( \frac{a_2}{r (d - a_1)} \right)
&= \frac{d - a_2}{r^2 a_1} 
f_F\left( \frac{d - a_2}{r a_1} \right).
\end{align*}
Multiplying both sides by \( r \), we get:
\begin{align*}
\frac{a_2}{r(d - a_1)} 
f_F\left( \frac{a_2}{r (d - a_1)} \right)
&= \frac{d - a_2}{r a_1} 
f_F\left( \frac{d - a_2}{r a_1} \right).
\end{align*}
Now, let $x_1 := \frac{a_2}{r (d - a_1)}$ and $x_2 := \frac{d - a_2}{r a_1}$. Then, the above equation becomes:
\begin{align*}
g(x_1) \ := \ x_1 f_F(x_1)
\ = \ x_2 f_F(x_2) \ := \ g(x_2).
\end{align*}

\begin{claim} \label{claim_Appendix1}
The function $g(x)$ has the following properties:
\begin{itemize}
    \item [\emph{(i)}] On the set $x \in (0, \infty)$, $g(x)$ on is unimodal with a unique global maximum at $x=1$.
    \item[\emph{(ii)}] On the set $x \in (0, \infty)$, $g(x'_1) = g(x'_2)$ if and only if $x'_1 = x'_2$, or $x'_1 \cdot x'_2$ = 1.
\end{itemize}
\end{claim}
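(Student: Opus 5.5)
The plan is to work directly with the explicit density of $F_{\nu,\nu}$. Up to a positive normalizing constant $B$ (depending only on $\nu$), $f_F(x) = B\, x^{\nu/2-1}(1+x)^{-\nu}$ for $x>0$. Hence
\[
g(x) \;=\; x f_F(x) \;=\; B\,\frac{x^{\nu/2}}{(1+x)^{\nu}} \;=\; B\left(\frac{\sqrt{x}}{1+x}\right)^{\nu}.
\]
The whole claim then reduces to elementary properties of the map $h(x) := \sqrt{x}/(1+x)$ on $(0,\infty)$. Since $t\mapsto Bt^{\nu}$ is strictly increasing on $(0,\infty)$ (recall $\nu=\epsilon-1\ge 1$), the maps $g$ and $h$ have the same monotonicity, the same maximizer, and the same level sets.

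For part (i), I will compute $\frac{d}{dx}\log h(x) = \frac{1}{2x} - \frac{1}{1+x} = \frac{1-x}{2x(1+x)}$. This is positive on $(0,1)$, zero at $x=1$, and negative on $(1,\infty)$. So $h$, and therefore $g$, is strictly increasing on $(0,1)$ and strictly decreasing on $(1,\infty)$, with unique global maximum at $x=1$. The limits $g(x)\to 0$ as $x\to 0^+$ and as $x\to\infty$ also follow immediately, since $h(x)\sim\sqrt{x}$ at $0$ and $h(x)\sim 1/\sqrt{x}$ at $\infty$.

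For part (ii), the key observation is the reciprocal symmetry
\[
h(1/x) \;=\; \frac{x^{-1/2}}{1+1/x} \;=\; \frac{\sqrt{x}}{x+1} \;=\; h(x).
\]
This gives the ``if'' direction: $x_1'=x_2'$ trivially yields equality, and $x_1'x_2'=1$ gives $g(x_2')=g(1/x_1')=g(x_1')$.

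For the ``only if'' direction, I will use strict monotonicity on each side of $1$. Suppose $g(x_1')=g(x_2')$.
\begin{itemize}
\item If both points lie in $(0,1]$, or both in $[1,\infty)$, then injectivity of $g$ on that interval forces $x_1'=x_2'$.
\item Otherwise, say $x_1'<1<x_2'$. Then $1/x_2'\in(0,1)$ and $g(1/x_2')=g(x_2')=g(x_1')$, so injectivity on $(0,1)$ gives $x_1'=1/x_2'$, i.e.\ $x_1'x_2'=1$.
\end{itemize}

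The only step needing care is the first one: writing down the $F_{\nu,\nu}$ density correctly and noticing that $x f_F(x)$ collapses to a pure power of $h$. The exponent $\nu/2-1$ combines with the extra factor $x$ to give exactly $x^{\nu/2}$. Once that is in place, the rest is elementary single-variable calculus plus the $x\mapsto 1/x$ symmetry, which reflects the fact that $\Theta$ and $1/\Theta$ have the same law when the degrees of freedom are equal.
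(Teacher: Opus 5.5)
Your proposal is correct and follows essentially the same route as the paper, which also writes $g(x)=c\,x^{\nu/2}(1+x)^{-\nu}$, uses the log-derivative to obtain strict increase on $(0,1)$, strict decrease on $(1,\infty)$ and the unique maximum at $x=1$, and checks $g(1/x)=g(x)$ by direct computation. Your explicit case split for the ``only if'' direction of (ii) simply spells out the step that the paper states as a one-line consequence of unimodality.
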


\textsc{Proof.} We first show property (i) for $g(x)$. Let \( a = \frac{\nu}{2} > 0 \). Then, 
\[
f_F(x) = c \cdot x^{a - 1} (1 + x)^{-2a}, \quad x > 0,
\]
where \( c = \frac{\Gamma(2a)}{\Gamma(a)^2} a^a \) is the normalizing constant. Thus,
\[
g(x) = x f_F(x) = c \cdot x^a (1 + x)^{-2a}.
\]
To determine unimodality, we find the critical points by computing the derivative. Consider the natural logarithm of $g(x)$:
\[
\ln g(x) = \ln c + a \ln x - 2a \ln (1 + x).
\]
Differentiating with respect to \( x \):
\[
\frac{g'(x)}{g(x)} = \frac{d}{dx} \ln g(x) = \frac{a}{x} - \frac{2a}{1 + x}.
\]
Setting the derivative to zero to find critical points yileds:
\[
\frac{a}{x} - \frac{2a}{1 + x} = 0 \implies \frac{1}{x} = \frac{2}{1 + x} \implies 1 + x = 2x \implies x = 1.
\]
Since \( a > 0 \) and \( x > 0 \), \( x = 1 \) is the only critical point. 
Given \( g'(x) > 0 \) for \( 0 < x < 1 \), \( g'(x) = 0 \) at \( x = 1 \), and \( g'(x) < 0 \) for \( x > 1 \), we conclude that $g(x)$ is unimodal.  

We next show property (ii), i.e., that $g(x'_1) = g(x'_2)$ if and only if either $x'_1 = x'_2$ or $x'_1 x'_2 = 1$. Using the expression \( g(x) = c \cdot x^a (1 + x)^{-2a} \), evaluate \( g \) at \( \frac{1}{x} \):
\[
g\left( \frac{1}{x} \right) = c \cdot \left( \frac{1}{x} \right)^a \left( 1 + \frac{1}{x} \right)^{-2a} = c \cdot x^{-a} \left( \frac{x + 1}{x} \right)^{-2a} = c \cdot x^{-a} \cdot \frac{x^{2a}}{(x + 1)^{2a}} = c \cdot \frac{x^a}{(x + 1)^{2a}} = g(x).
\]
Thus, \( g(x) = g\left( \frac{1}{x} \right) \) for all \( x > 0 \). By property (i), \( g(x) \) is unimodal with a maximum at \( x = 1 \), increasing on \( (0, 1) \), and decreasing on \( (1, \infty) \). We conclude that \( g(x'_1) = g(x'_2) \) if and only if \( x'_1 = x'_2 \) or \( x'_2 = \frac{1}{x'_1} \). \hfill $\blacksquare$

Recall that $r = r(d)$ must satisfy $g(x_1) = g(x_2)$. By Claim \ref{claim_Appendix1}, this happens if and only if $x_1 = x_2$ or $x_1 x_2 = 1$. Since $x_1 \not= x_2$, we must have $x_1 x_2 = 1$. This yields
\begin{eqnarray*}
\frac{a_2}{r(d - a_1)} \cdot \frac{d - a_2}{r a_1} = 1 \implies r = \sqrt{\frac{a_2}{d-a_1} \cdot \frac{d-a_2}{a_1}}.
\end{eqnarray*}
This completes the proof. \hfill $\blacksquare$

\subsection{Proof of Proposition~\ref{prop:optimal_k_M2}}

\underline{Proof for \textbf{TOL}}

Since \( d = d(\delta) \) is increasing in \(\delta\), minimizing \(\delta\) is equivalent to minimizing \(d\). So, we will focus on minimizing $d$ instead of $\delta$. The optimal solution for \textbf{TOL} satisfies $\delta^* \in (0, 1 - \alpha - \beta^*(\vec\sigma))$ almost surely. Thus, $d^* = d(\delta^*) \in \big( \sum_{j \in [2]} (\frac{\sigma_j}{\Delta_j})^2, \infty \big)$ and we can apply Lemma \ref{lem:H(r,d)}.

By definition, $d^*$ is the smallest $d$ such that there exists $r$ satisfying $H(r, d^*) \ge \gamma$. We claim that the optimal inflation ratio $r^*$ is given by $r^* = r(d^*)$. We prove by contradiction. Suppose $H(\hat r, d^*) = \gamma$ for some $\hat r \not= r^*(d^*)$. By the unimodality of $H(r,d)$ (Lemma \ref{lem_kappa} part (ii)), $H(r(d^*), d^*) > H(\hat r, d^*)$, which implies $H(r^*(d^*), d^*) > \gamma$. Moreover, by the property of $H(r,d)$ (Lemma \ref{lem:H(r,d)} part (i)), there exists $d' < d^*$ such that $H(r(d^*), d^*) > H(r(d^*), d') \ge \gamma$. This contradicts our definition that $d^*$ is the smallest $d$. We conclude that we must have $r^* = r(d^*)$. 

The formula for $r^* = r(d^*)$ follows directly from Lemma \ref{lem:H(r,d)}. As for deriving the value of $d^*$, we utilize the following properties:
\begin{itemize}
    \item [(i)] $F_{F_{\nu, \nu}}(\frac{m_2}{r^*}) - F_{F_{\nu, \nu}}(\frac{m_1}{r^*}) = F_{F_{\nu, \nu}}(\sqrt{\frac{m_2}{m_1}}) - F_{F_{\nu, \nu}}(\sqrt{\frac{m_1}{m_2}}) = \gamma$ 
    \item[(ii)] $F_{F_{\nu, \nu}}(\sqrt{\frac{m_1}{m_2}}) + F_{F_{\nu, \nu}}(\sqrt{\frac{m_2}{m_1}}) = 1$
\end{itemize}

Property (i) follows because, at optimal solution, we must have $H(r^*, d^*) = \gamma$ (if $H(r^*, d^*) > \gamma$, we can always reduce $d^*$ and get a smaller feasible solution, contradicting the optimality of $d^*$). Property (ii) comes from the property of the $F$-distribution for symmetric degree of freedom $(\nu, \nu)$. Specifically, it is known that if $X \sim F_{\nu, \nu}$, then $1/X \sim F_{\nu, \nu}$. This has the following implication:
\begin{align*}
    1 - F_{F_{\nu,\nu}}(x) = P(X > x) = P\left(\frac{1}{X} < \frac{1}{x}\right) = F_{F_{\nu,\nu}}\left(\frac{1}{x}\right) .
\end{align*}
Thus, we have $F_{F_{\nu,\nu}}(x) + F_{F_{\nu,\nu}}(1/x) = 1$. 

Now, solving the following system of equations:
\begin{eqnarray*}
F_{F_{\nu, \nu}}\left(\sqrt{\frac{m_2}{m_1}}\right) - F_{F_{\nu, \nu}}\left(\sqrt{\frac{m_1}{m_2}}\right) & = & \gamma \\
F_{F_{\nu, \nu}}\left(\sqrt{\frac{m_1}{m_2}}\right) + F_{F_{\nu, \nu}}\left(\sqrt{\frac{m_2}{m_1}}\right) & = & 1
\end{eqnarray*}
yields
\begin{eqnarray*}
F_{F_{\nu, \nu}}\left(\sqrt{\frac{m_2}{m_1}}\right) \ = \ \frac{1 + \gamma}{2}.
\end{eqnarray*}
Or, equivalently,
\begin{eqnarray*}
\frac{(d^*-a_1)(d^*-a_2)}{a_1 a_2} \ = \ \left(F^{-1}_{\nu, \nu}\left(\frac{1 + \gamma}{2}\right)\right)^2
\end{eqnarray*}
Solving for $d^*$ yields
\[
d^* = \frac{a_1 + a_2 + \sqrt{(a_1 - a_2)^2 + 4 a_1 a_2 \cdot \left(F^{-1}_{\nu, \nu}(\frac{1+\gamma}{2})\right)^2}}{2}.
\]

\vspace{2mm}
\noindent
\underline{Proof of \textbf{CONF}}

For \textbf{CONF} with tolerance \(\beta^*(\vec\sigma) + \delta\), we want to maximize the confidence:
\[
\gamma = \mathbb{P}\left( \tilde{\beta}^*(\vec{k}, \vec{S}, \vec{\sigma}) \leq \beta^*(\vec\sigma) + \delta \right).
\]
Compared with \textbf{TOL}, the $\delta$ here is fixed constant. We are essentially solving $\max_r H(r,d(\delta))$, and its optimal solution is given by Lemma \ref{lem:H(r,d)}. \hfill \(\blacksquare\)

\subsection{Proof of Corollary~\ref{cor:optimal_2exp}}

\underline{Proof for \textbf{TOL}}

Recall the expression of $d^* = d(\delta^*)$ for $\textbf{TOL}$ from Proposition \ref{prop:optimal_k_M2}, i.e.,
\[
d^* = \frac{a_1 + a_2 + \sqrt{(a_1 - a_2)^2 + 4 a_1 a_2 \cdot \left(F^{-1}_{\nu, \nu}(\frac{1+\gamma}{2})\right)^2}}{2}.
\]
We first argue that $d^* > a_1 + a_2$. To do this, it is sufficient to show that 
\[
F^{-1}_{\nu, \nu}\left(\frac{1+\gamma}{2} \right) > 1,\quad \forall \gamma \in (0, 1)
\]
But this is true because the median of F-distribution with symmetric degree of freedom $(\nu, \nu)$ is always 1. To see this, note that, by a similar argument as in the proof of Proposition \ref{prop:optimal_k_M2}, if $X\sim F_{\nu,\nu}$, then $1/X \sim F_{\nu, \nu}$. By definition, the median of $F_{\nu, \nu}$ is the number $m$ such that $\mathbb{P}(X \le m) = 0.5$. So,
$$
0.5 = \mathbb{P}(X \le m) = \mathbb{P}\left(\frac{1}{X} \le m\right) = \mathbb{P}\left(X \ge \frac{1}{m}\right),
$$
which implies
$$
\mathbb{P}\left(X \le \frac{1}{m}\right) = 1 - \mathbb{P}\left(X \ge \frac{1}{m}\right)=  0.5.
$$
Thus,
$$
\mathbb{P}(X \le m) = \mathbb{P}\left(X \le \frac{1}{m}\right).
$$
Since $X$ is a continuous random variable, we must have $m = \frac{1}{m}$ and, therefore, $m = 1$.

We can now express $d^* = a_1 + a_2 + a^*$ for some $a^* > 0$. Substituting this into the expression for $r^*$ yields 
\[
r^* = \sqrt{m_1 \cdot m_2} = 
\sqrt{\frac{1 + \frac{a^*}{a_1}}{1 + \frac{a^*}{a_2}}} = 
\sqrt{1 + \frac{\frac{1}{a_1} - \frac{1}{a_2}}{\frac{1}{a^*} + \frac{1}{a_2}}}.
\]
The results for \textbf{TOL} immediately follows: (i) if $a_1 = a_2$, then $r^* = 1$; (ii) if $a_1 < a_2$, then $r^* > 1$; (iii) if $a_1 > a_2$, then $r^* < 1$.

\underline{Proof for \textbf{CONF}}

The proof for \textbf{CONF} is similar. Recall that \( d(\delta) \) is increasing for \( \delta \in (0, 1 - \alpha - \beta^*(\vec\sigma)) \). Moreover, plugging \(\delta=0\) into \eqref{eq:d(delta)} gives
\[
d(0)=\frac{N}{\left(q_{1-\alpha}-\Phi^{-1}(\beta^*(\vec\sigma))\right)^2}.
\]
Using Proposition~\ref{prop:power_opt_optimal} with \(M=2\), we have
\(\beta^*(\vec\sigma)=\Phi\!\left(q_{1-\alpha}-\sqrt{\tfrac{N}{a_1+a_2}}\right)\),
so \(q_{1-\alpha}-\Phi^{-1}(\beta^*(\vec\sigma))=\sqrt{\tfrac{N}{a_1+a_2}}\), and therefore \(d(0)=a_1+a_2\). Thus, we must have
\[
d(\delta) > a_1 + a_2, \quad \forall \delta \in (0, 1 - \alpha - \beta^*(\vec\sigma)).
\]
In other words, for a given $ \delta \in (0, 1 - \alpha - \beta^*(\vec\sigma))$, we can write $d(\delta) = a_1 + a_2 + a(\delta)$ for some $a(\delta) > 0$. The rest of the proof is similar to that of \textbf{TOL}. 

\subsection{Proof of Corollary~\ref{cor:r_with_delta}}

From our analysis in the proof of Corollary \ref{cor:optimal_2exp}, for \textbf{TOL}, we have
\[
r^* = \sqrt{m_1 \cdot m_2} = 
\sqrt{\frac{1 + \frac{a^*(\gamma)}{a_1}}{1 + \frac{a^*(\gamma)}{a_2}}} = 
\sqrt{1 + \frac{\frac{1}{a_1} - \frac{1}{a_2}}{\frac{1}{a^*(\gamma)} + \frac{1}{a_2}}}.
\]
where we use $a^*(\gamma)$ to denote its dependency on $\gamma$. From the definition of \( d^* \) in Proposition~\ref{prop:optimal_k_M2}, we know that \( d^* \) increases with \( \gamma \). Since \( a^*(\gamma) \) is increasing in \( d^* \), it follows that \( a^*(\gamma) \) also increases with \( \gamma \). As a result, \( r^* \) increases with \( \gamma \).

As for \textbf{CONF}, again, from our analysis in Proposition \ref{prop:optimal_k_M2},
\[
r^* = \sqrt{m_1 \cdot m_2} = 
\sqrt{\frac{1 + \frac{a(\delta)}{a_1}}{1 + \frac{a(\delta)}{a_2}}} = 
\sqrt{1 + \frac{\frac{1}{a_1} - \frac{1}{a_2}}{\frac{1}{a(\delta)} + \frac{1}{a_2}}}.
\]
By the definition of $d(\delta)$, it is increasing for \( \delta \in (0, 1 - \alpha - \beta^*(\vec\sigma)) \). Therefore, as $\delta$ increases, $a(\delta)$ also increases. As a result, $r^*$ increases with $\delta$. \hfill\(\blacksquare\)

\subsection{Proof of Proposition~\ref{prop:optimal_exp}}

Fix two experiments with difficulty indices
\[
a_i := \left(\frac{\sigma_i}{\Delta_i}\right)^2, \qquad i=1,2.
\]
Let $r := k_1/k_2$ denote the inflation ratio, and set $s := \log r \in \mathbb{R}$.
We will analyze the objective as a function of $s$.

\textbf{Step 1. Distribution of $W$.}
Let $Y_1,Y_2 \overset{\text{i.i.d.}}{\sim} \chi^2_\nu$ with $\nu := \epsilon-1$,
and define
\[
W := \log\!\Big(\frac{Y_1}{Y_2}\Big).
\]
\emph{Symmetry.} Since $Y_1$ and $Y_2$ are i.i.d., the ratio $Y_1/Y_2$
has the same distribution as $Y_2/Y_1$. Hence
\[
W = \log\!\Big(\tfrac{Y_1}{Y_2}\Big)
\;\overset{d}{=}\;
\log\!\Big(\tfrac{Y_2}{Y_1}\Big) = -W.
\]
Thus the distribution of $W$ is symmetric about $0$.

\emph{Unimodality.} The density of $W$ (up to normalization) is
\[
f_W(w) \;\propto\; e^{(\nu/2)w}(1+e^w)^{-\nu}, \qquad w \in \mathbb{R}.
\]
Its log-derivative is
\[
\frac{d}{dw}\log f_W(w)
= \frac{\nu}{2} - \nu \frac{e^w}{1+e^w}.
\]
For $w<0$ the derivative is positive, for $w=0$ it equals zero,
and for $w>0$ it is negative.
Hence $g_W$ is strictly increasing on $(-\infty,0)$,
strictly decreasing on $(0,\infty)$, and attains its unique maximum at $w=0$.
Therefore, $W$ has a symmetric and strictly unimodal density with mode $0$.

\textbf{Step 2. Expression for the maximum error.}
For a given $s$, the quantities $U_1(r)$ and $U_2(r)$ can be expressed as
\[
U_1(r) = a_1 + a_2 e^{-(W+s)}, \qquad
U_2(r) = a_2 + a_1 e^{W+s}.
\]
Define the deterministic function
\[
M(y) := \max\{a_1+a_2 e^{-y}, \; a_2+a_1 e^y\}, \qquad y \in \mathbb{R}.
\]
Then, $\max(U_1(r),U_2(r)) = M(W+s)$.

The expected Type~2 error under the \textbf{EXP} formulation is
\begin{eqnarray*}
J(s) :=
\mathbb{E}\left[
\Phi \left(q_{1-\alpha} - \sqrt{N}\cdot \sqrt{\frac{1}{M(W+s)}}\right)
\right].
\end{eqnarray*}

\textbf{Step 3. Structure of $M(y)$.}
Let
$f_1(y) := a_1 + a_2 e^{-y}$ and $f_2(y) := a_2 + a_1 e^y$. We first analyze each component.  
\begin{itemize}
\item $f_1(y)$ is strictly decreasing in $y$ since 
$f_1'(y) = -a_2 e^{-y} < 0$ for all $y \in \mathbb{R}$.
\item $f_2(y)$ is strictly increasing in $y$ since 
$f_2'(y) = a_1 e^y > 0$ for all $y \in \mathbb{R}$.
\end{itemize}

The two curves intersect at a unique point $y^* \in \mathbb{R}$, which is 
the solution to
\[
f_1(y^*) = f_2(y^*) 
\;\;\Longleftrightarrow\;\;
a_1 + a_2 e^{-y^*} = a_2 + a_1 e^{y^*}.
\]
Rearranging, we have
\[
a_1 - a_2 = a_1 e^{y^*} - a_2 e^{-y^*}.
\]
The right-hand side is strictly increasing in $y$ (its derivative with respect to $y$ is always positive), hence there is a unique solution $y^*$.

Therefore,
\[
M(y) =
\begin{cases}
f_1(y), & y \le y^*, \\[1ex]
f_2(y), & y \ge y^*,
\end{cases}
\]
with $M(y^*) = f_1(y^*) = f_2(y^*)$. Since $f_1$ is strictly decreasing and 
$f_2$ is strictly increasing, $M(y)$ is strictly decreasing on 
$(-\infty,y^*]$ and strictly increasing on $[y^*,\infty)$. 
Thus $M(y)$ has a unique global minimum at $y^*$.

We now compute $y^*$. We start with 
$
a_1 - a_2 = a_1 e^{y^*} - a_2 e^{-y^*}.
$
Multiplying both sides by $e^{y^*}$ and letting $z=e^{y^*}>0$ yields
$
a_1 z^2 - (a_1 - a_2) z - a_2 = 0.
$
Solving this quadratic equation gives
\[
z = \frac{(a_1 - a_2) + \sqrt{(a_1 - a_2)^2 + 4 a_1 a_2}}{2 a_1}
     = \frac{a_2}{a_1}.
\]
Hence
\[
y^* = \log\!\left(\frac{a_2}{a_1}\right).
\]

\textbf{Step 4. Properties of $H(y)$.}
Define
\[
H(y) := \Phi\!\left(q_{1-\alpha}-\frac{\sqrt{N}}{\sqrt{M(y)}}\right).
\]
Note that $H(y)$ is strictly increasing in $M(y)$. 
Because $M(y)$ is strictly decreasing on $(-\infty,y^*]$ and strictly 
increasing on $[y^*,\infty)$, $H(y)$ is also strictly decreasing on 
$(-\infty,y^*]$, strictly increasing on $[y^*,\infty)$, and attains its 
unique minimum at $y^*$. 

\textbf{Step 5. Sublevel structure and auxiliary lemma.}

\emph{(5.1) Sublevel sets of $M$ and $H$.}
As established earlier, $M(y)=\max\{f_1(y),f_2(y)\}$ with 
$f_1(y)=a_1+a_2e^{-y}$ strictly decreasing and $f_2(y)=a_2+a_1e^{y}$ strictly increasing.
They intersect at a unique point $y^*$, which is the unique global minimizer of $M$.
Since $M(y)$ is strictly decreasing on $(-\infty,y^*]$ and strictly increasing on $[y^*,\infty)$,
for any $m>M(y^*)$, the set $\{y:M(y)\le m\}$ is a closed interval containing $y^*$.
Specifically, for any $m\ge M(y^*)$,
\[
\{y:M(y)\le m\}
=\{y:f_1(y)\le m\}\cap\{y:f_2(y)\le m\}
=[\ell(m),u(m)],
\]
where
\[
\ell(m)=-\log\!\Big(\frac{m-a_1}{a_2}\Big),\qquad
u(m)=\log\!\Big(\frac{m-a_2}{a_1}\Big).
\]
(Here, $\ell(m)$ solves $f_1(y)=m$ and $u(m)$ solves $f_2(y)=m$.)

The center and half-length of the interval are given by
\[
C(m):=\frac{\ell(m)+u(m)}{2}
=\frac12\log\!\left(\frac{a_2(m-a_2)}{a_1(m-a_1)}\right),
\qquad
R(m):=\frac{u(m)-\ell(m)}{2}>0.
\]
Since $M(y^*) = f_1(y^*) = a_1 + a_2 e^{-y^*}$ and $M(y^*) = f_2(y^*) = a_2 + a_1 e^{y^*}$, we have
\[
C(M(y^*)) = \frac12\log\left(\frac{a_2(M(y^*)-a_2)}{a_1(M(y^*)-a_1)}\right) = \frac12\log\left(\frac{a_2 a_1 e^{y^*}}{a_1 a_2 e^{-y^*}}\right) = y^*.
\]
As $m\to\infty$,
\[
C(m)\to \frac12\log\Big(\frac{a_2}{a_1}\Big) = \frac12 y^*.
\]
Moreover,
\[
C'(m)=\tfrac12\!\left(\frac{1}{m-a_2}-\frac{1}{m-a_1}\right),
\]
so $C(m)$ is strictly increasing in $m$ when $a_2>a_1$, strictly decreasing when $a_1>a_2$,
and constant when $a_1=a_2$.

Because $H(y)$ is strictly increasing in $M(y)$, its sublevel sets inherit those of $M$:
for every $\tau\in[H(y^*),1-\alpha)$ there exists a unique $m(\tau)\in[M(y^*),\infty)$ such that
\[
\{y:H(y)\le\tau\}=\{y:M(y)\le m(\tau)\}=[L(\tau),U(\tau)],
\]
where $L(\tau):=\ell(m(\tau))$ and $U(\tau):=u(m(\tau))$.
Define their center and half-length:
\[
c(\tau):=\frac{L(\tau)+U(\tau)}{2}=C(m(\tau)),\qquad
r(\tau):=\frac{U(\tau)-L(\tau)}{2}=R(m(\tau))>0.
\]

When $a_1=a_2$, the function $M(y)$ is symmetric about $y^*=0$, so all sublevel sets are
centered at $y^* = 0$. Indeed, by definition, $C(m)$ is always 0. So, $c(\tau)\equiv y^*=0$ for all $\tau$. 

When $a_1\neq a_2$, the centers of the sublevel intervals move away from $y^*$.
To see this, recall the formulas of $C(m)$ and $C'(m)$. If $a_2>a_1$, then $C(m)$ is strictly increasing for all $m>M(y^*)$.
Since $m(\tau)>M(y^*)$ for every $\tau\in(H(y^*),1-\alpha)$ and
$c(\tau)=C(m(\tau))$, we obtain
\[
c(\tau)>C(M(y^*))=y^* \qquad\text{whenever } a_2>a_1.
\]
Analogously, if $a_1>a_2$, then $C(m)$ is strictly decreasing for all $m>M(y^*)$ and hence $c(\tau)<y^*$ for all such $\tau$.
Consequently,
\[
\operatorname{sign}\big(c(\tau)-y^*\big)=\operatorname{sign}(a_2-a_1)
\qquad\forall\,\tau\in(H(y^*),1-\alpha).
\]

\emph{(5.2) Auxiliary lemma for symmetric unimodal densities.}
For any fixed $c\in\mathbb{R}$ and $r>0$, define
\[
F_{c,r}(s):=\mathbb{P}\big(W+s\in[c-r,c+r]\big)
= \int_{c-r}^{c+r} f_W(x-s)\,dx
= F_W(c+r-s)-F_W(c-r-s),
\]
where $F_W$ is the cdf of $W$.
Since $f_W$ is continuous, symmetric ($f_W(x)=f_W(-x)$), and strictly unimodal
with mode $0$, $F_{c,r}$ is differentiable in $s$ and
\[
\frac{d}{ds}F_{c,r}(s)=f_W(c-r-s)-f_W(c+r-s).
\]
We analyze the sign of $\dfrac{d}{ds}F_{c,r}(s)$ with $a:=c-s$. Recall that $r > 0$.
\begin{itemize}
\item \emph{Case 1: $s<c$ ($a>0$).}
We split according to the sign of $a-r$.
\begin{itemize}
\item {If $a\ge r$,} then $a-r\ge0$ and $a+r>0$, so both arguments are nonnegative.
Since $a-r<a+r$ and $f_W$ is strictly decreasing on $(0,\infty)$, we have
$f_W(a-r)>f_W(a+r)$; if $a=r$ this remains true because $f_W(0)$ is the strict maximum.
Hence
$
\frac{d}{ds}F_{c,r}(s)=f_W(a-r)-f_W(a+r)>0.
$

\item {If $0<a<r$,} then $a-r<0<a+r$.
By symmetry, $f_W(a-r)=f_W(-(a-r))=f_W(r-a)$ with $0<r-a<a+r$.
Since $f_W$ is strictly decreasing on $(0,\infty)$, $f_W(r-a)>f_W(a+r)$, and therefore
$
\frac{d}{ds}F_{c,r}(s)=f_W(a-r)-f_W(a+r)>0.
$
\end{itemize}

We conclude that $\dfrac{d}{ds}F_{c,r}(s)>0$ for all $s<c$.

\item \emph{Case 2: $s=c$ ($a=0$).} In this case, we have $
\frac{d}{ds}F_{c,r}(s)=f_W(-r)-f_W(r)=0
$
by symmetry.

\item \emph{Case 3: $s>c$ ($a<0$).}
Write $a=-b$ with $b>0$ and consider $\dfrac{d}{ds}F_{c,r}(s)=f_W(-b-r)-f_W(-b+r)$.
We split according to the sign of $-b+r$.
\begin{itemize}
\item {If $r\le b$,} then $-b-r<-b+r\le 0$, so both arguments are negative.
Since $f_W$ is strictly increasing on $(-\infty,0)$, we have $f_W(-b-r)<f_W(-b+r)$, hence
$
\frac{d}{ds}F_{c,r}(s)<0.
$

\item {If $r>b$,} then $-b-r<0< -b+r$, and by symmetry
$f_W(-b-r)=f_W(b+r)$ and $f_W(-b+r)=f_W(r-b)$ with $0<r-b<b+r$.
As $f_W$ is strictly decreasing on $(0,\infty)$, $f_W(b+r)<f_W(r-b)$; therefore
$
\frac{d}{ds}F_{c,r}(s)=f_W(-b-r)-f_W(-b+r)<0.
$
\end{itemize}

We conclude that $\dfrac{d}{ds}F_{c,r}(s)<0$ for all $s>c$.
\end{itemize}

Combining the three cases,
\begin{equation}\label{eq:monot-interval-prob}
\frac{d}{ds}F_{c,r}(s)
\begin{cases}
>0, & s<c,\\[0.3ex]
=0, & s=c,\\[0.3ex]
<0, & s>c,
\end{cases}
\end{equation}
so for any $r > 0$, $F_{c,r}(s)$ is strictly increasing for $s<c$, strictly decreasing for $s>c$,
and attains its unique maximum at $s=c$.

\textbf{Step 6. Monotonicity and minimization of $J(s)$.}

\emph{(6.1) Representation and derivative.}
Because $H$ is continuous with range contained in $[H(y^*),1-\alpha)$, we can write:
\begin{align*}
J(s)&=\mathbb{E}[H(W+s)]
=H(y^*)+\int_{H(y^*)}^{1-\alpha}\Big(1-\mathbb{P}\big(H(W+s)\le\tau\big)\Big)\,d\tau\\
&=H(y^*)+\int_{H(y^*)}^{1-\alpha}\Big(1-F_{c(\tau),r(\tau)}(s)\Big)\,d\tau.
\end{align*}
Since $|F'_{c(\tau),r(\tau)}(s)|\le2\|f_W\|_\infty$ uniformly in $s$,
differentiation under the integral sign is justified, giving
\[
J'(s)=-\int_{H(y^*)}^{1-\alpha}\frac{d}{ds}F_{c(\tau),r(\tau)}(s)\,d\tau.
\]

\emph{(6.2) Existence of a minimizer.}
As $|s|\to\infty$, $H(W+s)\to1-\alpha$ for each fixed $W$. Since $H$ is bounded from above by $1-\alpha$, we can apply dominated convergence theorem and get
\[
\lim_{|s|\to\infty}J(s) =  \lim_{|s|\to\infty} \mathbb{E}[H(W+s)] =  \mathbb{E}\left[\lim_{|s|\to\infty}H(W+s) \right] = 1-\alpha.
\]
Since $H(y)$ is bounded below for all $y$, the mapping
$J(s)=\int_{\mathbb{R}}H(w+s)f_W(w)\,dw$
is continuous and bounded below.
Moreover, because $H(y)<1-\alpha$ for all finite $y$,
we have $J(s)<1-\alpha$ for all finite $s$.
Hence the global minimum of $J$ is strictly smaller than its limiting value at the tails and it must happen at some finite point $s^*\in\mathbb{R}$.

\emph{(6.3) Sign of $J'(s)$ and characterization of $s^*$.}
From (6.1),
\[
J'(s)=-\int_{H(y^*)}^{1-\alpha}\frac{d}{ds}F_{c(\tau),r(\tau)}(s)\,d\tau.
\]
For each fixed $\tau$, the function $F_{c(\tau),r(\tau)}(s)$ is unimodal and
attains its maximum at $s=c(\tau)$ by \eqref{eq:monot-interval-prob}, i.e.,
 $\frac{d}{ds}F_{c(\tau),r(\tau)}(s)$ is positive for $s<c(\tau)$
and negative for $s>c(\tau)$.

Recall from (5.1) that
\[
\operatorname{sign}\big(c(\tau)-y^*\big)=\operatorname{sign}(a_2-a_1)
\qquad\forall\,\tau\in(H(y^*),1-\alpha).
\]
We analyze the sign of $J'(s)$ for each parameter regime.
\begin{itemize}
\item \emph{Case 1: $a_2>a_1$.}
Then $c(\tau)>y^*>0$ for all $\tau$.
For every $s\le0$, we automatically have $s<c(\tau)$, hence
$\frac{d}{ds}F_{c(\tau),r(\tau)}(s)>0$ for all $\tau$.
Substituting into the integral expression for $J'(s)$ gives
\[
J'(s)=-\int_{H(y^*)}^{1-\alpha}\!\frac{d}{ds}F_{c(\tau),r(\tau)}(s)\,d\tau<0
\quad\text{for all }s\le0.
\]
Thus $J$ is strictly decreasing on $(-\infty,0]$.
Since $J$ is continuous and tends to the same finite limit $1-\alpha$ at both tails,
no minimizer can lie in $(-\infty,0]$, so the global minimum satisfies $s^*>0$.
That is, the optimal inflation ratio $r^*=e^{s^*}>1$.

\item \emph{Case 2: $a_1>a_2$.}
The argument is symmetric with the previous case. Here, $c(\tau)<y^*<0$ for all $\tau$.
For every $s\ge0$, we have $s>c(\tau)$, so
$\frac{d}{ds}F_{c(\tau),r(\tau)}(s)<0$, and hence $J'(s)>0$ for all $s\ge0$.
Therefore $J$ is strictly increasing on $[0,\infty)$, and the global minimum
must satisfy $s^*<0$ (equivalently $r^*<1$).

\item \emph{Case 3: $a_1=a_2$.}
Then $c(\tau)\equiv y^*=0$ for all $\tau$.
Each $F_{0,r(\tau)}(s)$ is symmetric and maximized at $s=0$,
so $\frac{d}{ds}F_{0,r(\tau)}(s)$ is positive for $s<0$, negative for $s>0$,
and zero at $s=0$.
Consequently $J'(s)>0$ for $s<0$, $J'(s)<0$ for $s>0$, and $J'(0)=0$.
Hence $J$ is even and strictly increasing in $|s|$, attaining its unique
global minimum at $s^*=0$.
\end{itemize}

\textbf{Step 6. Conclusions for the inflation ratio.}
Recall $s=\log r$. From Step~5:
\begin{itemize}
\item If $a_1=a_2$, then $s^*=0$ and $r^*=e^{s^*}=1$.
\item If $a_2>a_1$, then $s^*>0$ and $r^*=e^{s^*}>1$.
\item If $a_1>a_2$, then $s^*<0$ and $r^*=e^{s^*}<1$.
\end{itemize}
This completes the proof. \hfill\(\blacksquare\)

%%% EC.3 matches Main Section 6 %%%
\section{Proofs and Supplemental Material for Section \ref{sec:approximate k}} \label{ec:approximate k}

\subsection{Proof of Lemma~\ref{lem_A1}}

By the definition of the event $\mathcal{E}(\vec{S}, \vec{\epsilon}, \vec{c})$, 
we have $\sigma_i^2 \le \ell_i S_i^2$ for all $i \in [M]$, since 
$\ell_i = \overline{\varphi}(\epsilon_i, c_i)$ corresponds to the upper bound of the confidence interval for $\sigma_i^2$. Because the function $\beta(\tilde{\sigma}_i, n_i)$ is increasing in 
$\tilde{\sigma}_i$, it follows that, on the event $\mathcal{E}(\vec{S}, 
\vec{\epsilon}, \vec{c})$,
\begin{eqnarray*}
\tilde{\beta}^*(\vec{\ell}, \vec{S}, \vec{\sigma}) 
\ \le \ 
\tilde{\beta}^*(\vec{\ell}, \vec{S}, \vec{\Sigma}(\vec{\ell}, \vec{S})) 
\ = \ 
\tilde{\beta}^{\text{robust}}(\vec{S}, \vec{\epsilon}, \vec{\sigma}, \vec{c}).
\end{eqnarray*}
As for the probability statement, note that each individual event 
$\left\{ \sigma_i^2 \in CI_i(S_i, \epsilon_i, c_i) \right\}$ occurs 
with probability $c_i$ by the definition of the confidence interval. 
Since we assume that the experiments are independent, we have:
\[
\mathbb{P}(\mathcal{E}(\vec{S}, \vec{\epsilon}, \vec{c})) 
= \prod_{i=1}^M \mathbb{P}(\sigma_i^2 \in CI_i(S_i, \epsilon_i, c_i)) 
= \prod_{i=1}^M c_i.
\]
This completes the proof. \hfill $\blacksquare$

\subsection{Proof of Lemma~\ref{lem_A2}}

Suppose $\ell_i = \overline{\varphi}(\epsilon_i, c_i)$ for all 
$i \in [M]$. By the definition of 
$\tilde{\beta}^{\text{robust}}(\vec{S}, \vec{\epsilon}, 
\vec{\sigma}, \vec{c})$, we have:
\begin{eqnarray*}
\tilde{\beta}^{\text{robust}}(\vec{S}, \vec{\epsilon}, \vec{\sigma}, \vec{c}) 
&=& \tilde{\beta}^*(\vec{\ell}, \vec{S}, \vec{\Sigma}(\vec{\ell}, \vec{S})) \\
&=& \max_{i \in [M]} 
\left\{ \beta( \sqrt{\ell_i} S_i,\ n^*_i(\vec{k}, \vec{S})) \right\} \\
&=& \Phi\left( 
q_{1-\alpha} - 
\sqrt{ 
\frac{N}
{\sum_{i=1}^M \ell_i \left( \frac{S_i}{\Delta_i} \right)^2}
}
\right),
\end{eqnarray*}
where the last equality follows from the closed-form expression of 
$\beta(\cdot)$ given in (\ref{eq:power}).

Now, on the event $\mathcal{E}(\vec{S}, \vec{\epsilon}, \vec{c})$, we have:
\[
\underline{\varphi}(\epsilon_i, c_i) S_i^2 
\ \le\ \sigma_i^2 
\ \le\ \overline{\varphi}(\epsilon_i, c_i) S_i^2,
\]
which implies:
\[
\frac{\sigma_i^2}{\overline{\varphi}(\epsilon_i, c_i)} 
\ \le\ S_i^2 
\ \le\ \frac{\sigma_i^2}{\underline{\varphi}(\epsilon_i, c_i)}.
\]
Substituting this into the expression above, and using the fact that 
$\Phi(\cdot)$ is increasing, we obtain:
\begin{eqnarray*}
\tilde{\beta}^{\text{robust}}(\vec{S}, \vec{\epsilon}, \vec{\sigma}, \vec{c}) 
&=& 
\Phi\left( 
q_{1-\alpha} - 
\sqrt{ 
\frac{N}
{\sum_{i=1}^M \ell_i \left( \frac{S_i}{\Delta_i} \right)^2}
}
\right) \\
&\le&
\Phi\left( 
q_{1-\alpha} - 
\sqrt{ 
\frac{N}
{\sum_{i=1}^M \kappa_i(\epsilon_i, c_i) 
\left( \frac{\sigma_i}{\Delta_i} \right)^2}
}
\right),
\end{eqnarray*}
where 
$\kappa(\epsilon_i, c_i) := 
\overline{\varphi}(\epsilon_i, c_i) \big/ 
\underline{\varphi}(\epsilon_i, c_i)$.

As for the lower bound, observe that on the event 
$\mathcal{E}(\vec{S}, \vec{\epsilon}, \vec{c})$, we have:
\[
\tilde{\beta}^{\text{robust}}(\vec{S}, \vec{\epsilon}, \vec{\sigma}, \vec{c}) 
\ \ge\ 
\tilde{\beta}^*(\vec{\ell}, \vec{S}, \vec{\sigma}) 
\ \ge\ 
\beta^*(\vec{\sigma}),
\]
where the first inequality follows from Lemma~\ref{lem_A1}, and the second 
from the fact that $\beta(\tilde{\sigma}_i, n_i)$ is increasing in 
$\tilde{\sigma}_i$. This completes the proof. \hfill $\blacksquare$

\subsection{Proof of Lemma \ref{lem_kappa}}

\underline{Part (i)}

Recall that for $\nu = \epsilon_i - 1$, we have
\[
\kappa(\epsilon_i, c_i)
= 
\frac{\chi^2_{(1+c_i)/2,\nu}}
     {\chi^2_{(1-c_i)/2,\nu}},
\]
where $\chi^2_{p,\nu}$ denotes the $p$-quantile of the chi-squared distribution with $\nu$ degrees of freedom.

Let $X_\nu \sim \chi^2_\nu$. It is known that 
\[
\frac{X_\nu}{\nu} \xrightarrow{p} 1
\]
as $\nu \to \infty$. Hence, the distribution of $X_\nu/\nu$ converges weakly to a degenerate distribution at~1.
For distributions with continuous and strictly increasing CDFs, convergence in distribution to a point mass
implies convergence of their quantiles to that point. Therefore, for every fixed $p \in (0,1)$,
\[
\lim_{\nu \to \infty} \frac{\chi^2_{p,\nu}}{\nu} = 1.
\]
Applying this to the numerator and denominator of $\kappa(\epsilon_i,c_i)$ gives
\[
\lim_{\epsilon_i \to \infty} 
\kappa(\epsilon_i, c_i)
=\lim_{\nu \to \infty}
\frac{\chi^2_{(1+c_i)/2,\nu}/\nu}
     {\chi^2_{(1-c_i)/2,\nu}/\nu}
=1.
\]
Hence $\displaystyle \lim_{\epsilon_i \to \infty} \kappa(\epsilon_i,c_i)=1$ for all $c_i\in[0,1)$.

\underline{Part (ii)}

\textbf{Continuity and Differentiability.}
The quantile function $\chi^2_{p,\nu}$ is continuous and differentiable in~$p$ for
$p\in(0,1)$.  For $c_i\in[0,1)$, the arguments $(1\pm c_i)/2$ lie in $(0,1)$, and
the denominator $\chi^2_{(1-c_i)/2,\nu}$ is strictly positive.
Hence $\kappa(\epsilon_i,c_i)$ is continuous and differentiable in~$c_i$.

\textbf{Convexity.}
Fix $\epsilon_i\ge2$ and write $\nu=\epsilon_i-1$.  Let $F_\nu$, $f_\nu$, and
$Q_\nu=F_\nu^{-1}$ denote the CDF, PDF, and quantile function of
$\chi^2_\nu$.  For $c_i\in[0,1)$,
\[
\kappa(\epsilon_i,c_i)
=\frac{Q_\nu\!\left(\frac{1+c_i}{2}\right)}
       {Q_\nu\!\left(\frac{1-c_i}{2}\right)}.
\]
We show that $c_i\mapsto\kappa(\epsilon_i,c_i)$ is strictly convex on $[0,1)$.

Define
\[
p_+(c):=\tfrac{1+c}{2},\qquad p_-(c):=\tfrac{1-c}{2},\qquad
\phi(c):=\log\kappa(\epsilon_i,c)
        =\log Q_\nu(p_+(c))-\log Q_\nu(p_-(c)).
\]
Since $\kappa=\exp(\phi)$ and $\exp$ is increasing and convex, it suffices to
show that $\phi$ is convex; indeed,
$\kappa''(c)=\kappa(c)\big(\phi''(c)+(\phi'(c))^2\big)>0$
whenever $\phi''(c)\ge0$.

Because $f_\nu>0$ and continuously differentiable on $(0,\infty)$, the inverse
function theorem yields
\[
Q_\nu'(p)=\frac{1}{f_\nu(Q_\nu(p))},\qquad p\in(0,1).
\]
Hence,
\begin{align*}
\phi'(c)
 &=\frac{1}{2}\frac{Q_\nu'(p_+(c))}{Q_\nu(p_+(c))}
  +\frac{1}{2}\frac{Q_\nu'(p_-(c))}{Q_\nu(p_-(c))}
  =\tfrac12\,g(p_+(c))+\tfrac12\,g(p_-(c)),\\
g(p)&:=\frac{Q_\nu'(p)}{Q_\nu(p)}
     =\frac{1}{Q_\nu(p)\,f_\nu(Q_\nu(p))}>0.
\end{align*}
Differentiating again and using $p_+'(c)=\tfrac12$,
$p_-'(c)=-\tfrac12$ gives
\[
\phi''(c)=\tfrac14\big[g'(p_+(c))-g'(p_-(c))\big].
\]
Thus $\phi''(c)\ge0$ will follow once we show that $g'$ is strictly increasing
on $(0,1)$.

Let $x:=Q_\nu(p)>0$.  Since $Q_\nu'(p)=1/f_\nu(x)$, define
$h(x):=(x f_\nu(x))^{-1}$, so that $g(p)=h(Q_\nu(p))$.  By the chain rule,
\[
g'(p)=h'(Q_\nu(p))\,Q_\nu'(p)
     =\frac{h'(x)}{f_\nu(x)},\qquad
h'(x)=-\frac{f_\nu(x)+x f_\nu'(x)}{(x f_\nu(x))^2}.
\]
Therefore
\[
g'(p)=-\frac{f_\nu(x)+x f_\nu'(x)}{x^2 f_\nu(x)^3}.
\]
For the $\chi^2_\nu$ density
$f_\nu(x)=C x^{\nu/2-1}e^{-x/2}$ ($C>0$),
\[
\frac{f_\nu'(x)}{f_\nu(x)}=\frac{\nu/2-1}{x}-\frac12
\quad\Rightarrow\quad
f_\nu(x)+x f_\nu'(x)
  =f_\nu(x)\!\left(\frac{\nu}{2}-\frac{x}{2}\right)
  =\frac{f_\nu(x)}{2}(\nu-x).
\]
Hence
\[
g'(p)=\frac{x-\nu}{2x^2 f_\nu(x)^2}
      =:H(x).
\]
To establish monotonicity, observe that
$f_\nu(x)^2=C^2 x^{\nu-2}e^{-x}$, so
\[
H(x)=\frac{e^{x}}{2C^2}\frac{x-\nu}{x^{\nu}}
   =K\,e^{x}(x-\nu)x^{-\nu},
   \qquad K=\frac{1}{2C^2}>0.
\]
Let $S(x)=(x-\nu)e^{x}x^{-\nu}$.
Then
\[
\frac{S'(x)}{S(x)}
  =\frac{1}{x-\nu}+1-\frac{\nu}{x},\qquad
H'(x)=K\,e^{x}x^{-\nu}(x-\nu)
      \!\left(\frac{1}{x-\nu}+1-\frac{\nu}{x}\right)
      =K\,e^{x}x^{-\nu}\frac{x+(x-\nu)^2}{x}>0
\]
for all $x>0$.  Thus $H$ is strictly increasing on $(0,\infty)$.
Because $Q_\nu$ is strictly increasing ($Q_\nu'(p)=1/f_\nu(Q_\nu(p))>0$),
the composition $g'(p)=H(Q_\nu(p))$ is strictly increasing on $(0,1)$.

Consequently, for $c\in(0,1)$ we have $p_+(c)>p_-(c)$ and
\[
\phi''(c)
  =\tfrac14\big[g'(p_+(c))-g'(p_-(c))\big]
  >0.
\]
At $c=0$, $p_+=p_-=\tfrac12$, so $\phi''(0)=0$ while $\phi'(0)=g(\tfrac12)>0$.
Finally,
\[
\kappa''(c)=\kappa(c)\big(\phi''(c)+(\phi'(c))^2\big)>0,
\]
showing that $\kappa(\epsilon_i,\cdot)$ is strictly convex on $[0,1)$.

\textbf{Monotonicity.}
As $c_i$ increases, the upper quantile
$\chi^2_{(1+c_i)/2,\nu}$ increases while the lower quantile
$\chi^2_{(1-c_i)/2,\nu}$ decreases; hence
$\kappa(\epsilon_i,c_i)$ increases strictly in~$c_i$.

\textbf{Limits.}
As $c_i\to0^+$, both arguments of the quantiles approach~0.5, so
\[
\lim_{c_i\to0^+}\kappa(\epsilon_i,c_i)
  =\frac{\chi^2_{0.5,\nu}}{\chi^2_{0.5,\nu}}=1.
\]
As $c_i\to1^-$, the numerator argument tends to~1 and the denominator argument
to~0.  Because $\chi^2_{p,\nu}\to\infty$ as $p\to1$ and
$\chi^2_{p,\nu}\to0$ as $p\to0$, we obtain
\[
\lim_{c_i\to1^-}\kappa(\epsilon_i,c_i)
  =\frac{\infty}{0}=\infty.
\]

\underline{Part (iii)}

Fix $x$. For notational brevity, we use $c_i(\epsilon_i) := c_i(\epsilon_i, x)$. We prove the claim by contradiction. Suppose that $\lim_{\epsilon_i \to \infty} c_i(\epsilon_i) \neq 1$.  
Then, there exists a subsequence, denoted by $\{\epsilon_{i,k}\}$, such that 
$\epsilon_{i,k} \to \infty$ and $c_i(\epsilon_{i,k}) \to c^*$ for some $c^* < 1$.
From Part~(i), for every fixed $c \in [0,1)$ we have 
$\lim_{\epsilon_i \to \infty} \kappa(\epsilon_i,c) = 1$.  
Since $c_i(\epsilon_{i,k}) \to c^* < 1$, we can examine the limit of 
$\kappa(\epsilon_{i,k}, c_i(\epsilon_{i,k}))$ along this subsequence.  
For any $\delta > 0$, sufficiently large $k$ ensures that 
$c^* - \delta < c_i(\epsilon_{i,k}) < c^* + \delta$.  
Because $\kappa(\epsilon_i,c)$ is strictly increasing in~$c$ 
(from Part~(ii)), we have
\[
\kappa(\epsilon_{i,k},\, c^*-\delta)
   < \kappa(\epsilon_{i,k},\, c_i(\epsilon_{i,k}))
   < \kappa(\epsilon_{i,k},\, c^*+\delta).
\]
Taking limits as $k \to \infty$ and using Part~(i) gives
\[
\lim_{k\to\infty}\kappa(\epsilon_{i,k},\, c^*-\delta)
   \le
   \lim_{k\to\infty}\kappa(\epsilon_{i,k},\, c_i(\epsilon_{i,k}))
   \le
   \lim_{k\to\infty}\kappa(\epsilon_{i,k},\, c^*+\delta),
\]
and both bounds equal~1.  
Hence
\[
\lim_{k\to\infty}\kappa(\epsilon_{i,k},\, c_i(\epsilon_{i,k})) = 1.
\]
However, by definition $c_i(\epsilon_i)$ satisfies 
$\kappa(\epsilon_i, c_i(\epsilon_i)) = x > 1$ for all~$\epsilon_i$, so
$\kappa(\epsilon_{i,k}, c_i(\epsilon_{i,k})) = x$ for every $k$.
This contradicts the limit above, which forces $x=1$.  
Therefore our assumption is false, and we conclude that
\[
\lim_{\epsilon_i \to \infty} c_i(\epsilon_i) = 1.
\]
\hfill$\square$

\subsection{Proof of Proposition~\ref{prop_A1}}

For a fixed confidence level $\gamma \in (0,1)$, consider any vector 
$\vec{c} \in [0,1)^M$ such that $\prod_{i=1}^M c_i \ge \gamma$, 
and let $k_i = \overline{\varphi}(\epsilon_i, c_i)$ for all $i \in [M]$. 
Then, by Lemma~\ref{lem_A2}, the pair $(\vec{k}, \delta)$ where
\begin{eqnarray*}
\delta 
= \Phi\left( 
q_{1-\alpha} - 
\sqrt{ 
\frac{N}
{\sum_{i=1}^M \kappa(\epsilon_i, c_i) 
\left( \frac{\sigma_i}{\Delta_i} \right)^2}
}
\right) - \beta^*(\vec\sigma)
\end{eqnarray*}
is a feasible solution for \textbf{TOL}. 
Thus, by construction, we have 
$\delta^*(\gamma, \vec{\epsilon}, \vec{\sigma}) 
\le \delta^R(\gamma, \vec{\epsilon}, \vec{\sigma})$.

\underline{Part (i)}

It is sufficient to prove that $\delta^R(\gamma, \vec{\epsilon}, \vec{\sigma}) \to 0$ as $\gamma \to 0$. The claim $\delta^*(\gamma, \vec{\epsilon}, \vec \sigma) \to 0$ will immediately follow from  $\delta^*(\gamma, \vec{\epsilon}, \vec{\sigma}) 
\le \delta^R(\gamma, \vec{\epsilon}, \vec{\sigma})$. To prove $\delta^R(\gamma, \vec{\epsilon}, \vec{\sigma}) \to 0$ as $\gamma \to 0$, note that $\vec c = (\gamma^{1/M}, \dots, \gamma^{1/M})$ is feasible to \textbf{TOL} for small $\gamma$ and, therefore, we can bound
\begin{eqnarray*}
\delta^R(\gamma, \vec{\epsilon}, \vec{\sigma}) \ \le \ \Phi\left( 
q_{1-\alpha} - 
\sqrt{ 
\frac{N}
{\sum_{i=1}^M \kappa(\epsilon_i, \gamma^{1/M}) 
\left( \frac{\sigma_i}{\Delta_i} \right)^2}
}
\right) - \beta^*(\vec\sigma) \ := \ \bar\delta(\gamma, \vec\epsilon, \vec\sigma).
\end{eqnarray*}
Since $\kappa(\epsilon_i, 0) = 1$ by definition, as $\gamma \to 0$, $\kappa(\epsilon_i, \gamma^{1/M}) \to 1$ for all $i \in [M]$. Consequently, we have $\bar\delta(\gamma, \vec\epsilon, \vec\sigma) \to 0$, which implies $\delta^R(\gamma, \vec\epsilon, \vec\sigma) \to 0$.

\underline{Part (ii)}

By definition, we always have $\delta^*(\gamma, \vec{\epsilon}, \vec{\sigma}) \le 1 - \alpha - \beta^*(\vec \sigma)$ and $\delta^R(\gamma, \vec{\epsilon}, \vec{\sigma}) \le 1 - \alpha - \beta^*(\vec \sigma)$. The claim $\delta^R(\gamma, \vec{\epsilon}, \vec{\sigma}) \to 1 - \alpha - \beta^*(\vec \sigma)$ as $\gamma \to 1$ follows directly by noticing that we need to set $c_i \to 1$ for all $i \in [M]$ to satisfy the first constraint in \textbf{R-TOL} as $\gamma \to 1$. 

We now show that $\delta^*(\gamma, \vec{\epsilon}, \vec{\sigma}) \to 1 - \alpha - \beta^*(\vec \sigma)$ as $\gamma \to 1$. Let $(\vec k^*(\gamma), \delta^*(\gamma))$ denote the optimal solution to \textbf{TOL}. Note that 
\begin{eqnarray*}
\gamma &\le & \mathbb{P}(
\tilde{\beta}^*(\vec{k}^*(\gamma), \vec{S}, \vec{\sigma}) \le \beta^*(\vec\sigma) + \delta^*(\gamma)) \\[1mm]
& = & \mathbb{P}\left(\max_{i \in [M]} \ \Phi\left( q_{1-\alpha} - \frac{\Delta_i \sqrt{N}}{\sigma_i} \sqrt{\frac{k^*_i(\gamma) \left(\frac{S_i}{\Delta_i}\right)^2}{\sum_{j \in [M]} k^*_j(\gamma) \left(\frac{S_j}{\Delta_j}\right)^2}} \right) \le \beta^*(\vec\sigma) + \delta^*(\gamma) \right) \\[1mm]
& \le & \mathbb{P}\left(\max_{i \in [2]} \ \Phi\left( q_{1-\alpha} - \frac{\Delta_i \sqrt{N}}{\sigma_i} \sqrt{\frac{k^*_i(\gamma) \left(\frac{S_i}{\Delta_i}\right)^2}{\sum_{j \in [M]} k^*_j(\gamma) \left(\frac{S_j}{\Delta_j}\right)^2}} \right) \le \beta^*(\vec\sigma) + \delta^*(\gamma) \right) \\[1mm]
& \le & \mathbb{P}\left(\max_{i \in [2]} \ \Phi\left( q_{1-\alpha} - \frac{\Delta_i \sqrt{N}}{\sigma_i} \sqrt{\frac{k^*_i(\gamma) \left(\frac{S_i}{\Delta_i}\right)^2}{\sum_{j \in [2]} k^*_j(\gamma) \left(\frac{S_j}{\Delta_j}\right)^2}} \right) \le \beta^*(\vec\sigma) + \delta^*(\gamma) \right),
\end{eqnarray*}
where the second inequality holds because the maximum of $M$ terms is at least as large as the maximum of two terms and the last inequality holds because $\sum_{j \in [2]} k^*_j(\gamma) (\frac{S_j}{\Delta_j})^2 \le \sum_{j \in [M]} k^*_j(\gamma) (\frac{S_j}{\Delta_j})^2$.

Now, define 
\begin{eqnarray*}
r^*_{ij}(\gamma) := \frac{k^*_i(\gamma)}{k^*_j(\gamma)}. 
\end{eqnarray*}
There are three cases:
\begin{itemize}
\item Case 1: $\liminf_{\gamma \to 1} r^*_{12}(\gamma) > 0$,
\item Case 2: $\liminf_{\gamma \to 1} r^*_{12}(\gamma) = \limsup_{\gamma \to 1} r^*_{12}(\gamma) = 0$;
\item Case 3: $\liminf_{\gamma \to 1} r^*_{12}(\gamma) = 0$ and  $\limsup_{\gamma \to 1} r^*_{12}(\gamma) > 0$.
\end{itemize}

We start with Case 1. In this case, there must exist $\bar r > 0$ such that $r^*_{12}(\gamma) > \bar r$ for all sufficiently large $\gamma$ (i.e., for $\gamma$ sufficiently close to 1). In other words, for sufficiently large $\gamma$, we can bound:
\begin{eqnarray*}
\gamma &\le & \mathbb{P}\left(\max_{i \in [2]} \ \Phi\left( q_{1-\alpha} - \frac{\Delta_i \sqrt{N}}{\sigma_i} \sqrt{\frac{k^*_i(\gamma) \left(\frac{S_i}{\Delta_i}\right)^2}{\sum_{j \in [2]} k^*_j(\gamma) \left(\frac{S_j}{\Delta_j}\right)^2}} \right) \le \beta^*(\vec\sigma) + \delta^*(\gamma) \right) \\[1mm]
&\le & \mathbb{P}\left(\Phi\left( q_{1-\alpha} - \frac{\Delta_2 \sqrt{N}}{\sigma_2} \sqrt{\frac{k^*_2(\gamma) \left(\frac{S_2}{\Delta_2}\right)^2}{\sum_{j \in [2]} k^*_j(\gamma) \left(\frac{S_j}{\Delta_j}\right)^2}} \right) \le \beta^*(\vec\sigma) + \delta^*(\gamma) \right) \\[1mm]
& = & \mathbb{P}\left(\Phi\left( q_{1-\alpha} - \frac{\Delta_2 \sqrt{N}}{\sigma_2} \sqrt{\frac{\left(\frac{S_2}{\Delta_2}\right)^2}{ r^*_{12}(\gamma) \left(\frac{S_1}{\Delta_1}\right)^2 + \left(\frac{S_2}{\Delta_2}\right)^2}} \right) \le \beta^*(\vec\sigma) + \delta^*(\gamma) \right) \\[1mm]
& \le & \mathbb{P}\left(\Phi\left( q_{1-\alpha} - \frac{\Delta_2 \sqrt{N}}{\sigma_2} \sqrt{\frac{\left(\frac{S_2}{\Delta_2}\right)^2}{ \bar r \left(\frac{S_1}{\Delta_1}\right)^2 + \left(\frac{S_2}{\Delta_2}\right)^2}} \right) \le \beta^*(\vec\sigma) + \delta^*(\gamma) \right). 
\end{eqnarray*}
Note that the random variable 
\begin{eqnarray*}
W := \Phi\left( q_{1-\alpha} - \frac{\Delta_2 \sqrt{N}}{\sigma_2} \sqrt{\frac{\left(\frac{S_2}{\Delta_2}\right)^2}{ \bar r \left(\frac{S_1}{\Delta_1}\right)^2 + \left(\frac{S_2}{\Delta_2}\right)^2}} \right)
\end{eqnarray*}
is independent of $\gamma$. It is a continuous random variable with support in $(0, 1-\alpha]$. Thus, as $\gamma \to 1$, to guarantee that the last inequality in the above holds, we must have $\delta^*(\gamma) \to 1-\alpha-\beta^*(\vec \sigma)$.  

We now consider Case 2. In this case, $\lim_{\gamma \to 1} r^*_{21}(\gamma) = \infty > 0$ exists. Thus, there must exist $\bar r > 0$ such that $r^*_{21}(\gamma) > \bar r$ for all sufficiently large $\gamma$. The remainder of the argument proceeds in the same way as in Case 1 but by focusing on experiment 1 instead of experiment 2.

As for case 3, note that it implies $\liminf_{\gamma \to 1} r^*_{21}(\gamma) > 0$ and  $\limsup_{\gamma \to 1} r^*_{21}(\gamma) = \infty > 0$. Since $\liminf_{\gamma \to 1} r^*_{21}(\gamma) > 0$, again, there must exist $\bar r > 0$ such that $r^*_{21}(\gamma) > \bar r$ for all sufficiently large $\gamma$. The remainder of the argument proceeds in the same way as in Case 1 but by focusing on experiment 1 instead of experiment 2.

\underline{Part (iii)}

Similar to part (i), it is sufficient to prove that $\delta^R(\gamma, \vec{\epsilon}, \vec{\sigma}) \to 0$ as $\gamma \to 0$. The claim $\delta^*(\gamma, \vec{\epsilon}, \vec \sigma) \to 0$ will immediately follow from  $\delta^*(\gamma, \vec{\epsilon}, \vec{\sigma}) \le \delta^R(\gamma, \vec{\epsilon}, \vec{\sigma})$.

By the same argument used in the proof of part (i), we can bound: 
\begin{eqnarray*}
\delta^R(\gamma, \vec \epsilon, \vec \sigma ) \ \le \ \Phi\left(q_{1-\alpha} - \sqrt{\frac{N}{\sum_{i=1}^M \kappa(\epsilon_i, \gamma^{1/M}) \left(\frac{\sigma_i}{\Delta_i}\right)^2}}\right) - \beta^*(\vec\sigma).
\end{eqnarray*}
By Lemma \ref{lem_kappa}, for each $i \in [M]$, $\kappa(\epsilon_i, \gamma^{1/M}) \to 1$ as $\epsilon_i \to \infty$. Thus, the quantity on the right hand side goes to 0 as $\epsilon_i \to \infty$ for all $i \in [M]$, which implies $\delta^R(\gamma, \vec \epsilon, \vec \sigma ) \to 0$.  \hfill $\blacksquare$

\subsection{Proof of Proposition~\ref{prop_A2}}

Fix a tolerance level $\delta$. Consider any vector $\vec{c} \in (0,1]^M$ that is feasible for \textbf{R-CONF}, with $\prod_{i=1}^M c_i = \gamma$. Let $k_i = \overline{\varphi}(\epsilon_i, c_i)$ for all $i \in [M]$. Lemmas \ref{lem_A1} and \ref{lem_A2}, together with the first constraint in \textbf{R-CONF}, imply  
\begin{eqnarray*}
\tilde{\beta}^*(\vec{k}, \vec{S}, \vec{\sigma}) \ \le \ \tilde{\beta}^{\text{robust}}(\vec{S}, \vec{\epsilon}, \vec{\sigma}, \vec{c})
\ \le \ 
\Phi\left(
q_{1-\alpha} - 
\sqrt{
\frac{N}
{\sum_{i=1}^M 
\kappa_i(\epsilon_i, c_i) 
\left( \frac{\sigma_i}{\Delta_i} \right)^2}
}
\right) \ \le \ \beta^*(\vec \sigma) + \delta
\end{eqnarray*}
with probability at least $\gamma$. In other words, 
\begin{eqnarray*}
\mathbb{P}(
\tilde{\beta}^*(\vec{k}, \vec{S}, \vec{\sigma}) \le \beta^*(\vec\sigma) + \delta) \ge \gamma,
\end{eqnarray*}
which means that $(\vec k, \gamma)$ is feasible for \textbf{CONF}. Since any $\gamma$ that is attainable under \textbf{R-CONF} is also feasible for \textbf{CONF}, we conclude that $\gamma^*(\delta, \vec{\epsilon}, \vec{\sigma}) \ge \gamma^R(\delta, \vec{\epsilon}, \vec{\sigma}).$

\underline{Part (i)}

The claim $\gamma^*(\delta, \vec{\epsilon}, \vec{\sigma}) \to 0$ as $\delta \to 0$ holds because, by definition, $\tilde \beta^*(\vec k, \vec S, \vec\sigma)$ is a continuous random variable and $\tilde \beta^*(\vec k, \vec S, \vec\sigma) \ge \beta^*(\vec\sigma)$ almost surely for all $\vec k \ge \vec 1$ and $\vec S$, which imply 
\begin{eqnarray*}
\mathbb{P}(\tilde \beta^*(\vec k, \vec S, \vec\sigma) \le \beta^*(\vec\sigma)) = 1 - \mathbb{P}(\tilde \beta^*(\vec k, \vec S, \vec\sigma) \ge \beta^*(\vec\sigma)) = 0. 
\end{eqnarray*}
The claim $\gamma^R(\delta, \vec{\epsilon}, \vec{\sigma}) \to 0$ as $\delta \to 0$ holds since $\gamma^*(\delta, \vec{\epsilon}, \vec{\sigma}) \ge \gamma^R(\delta, \vec{\epsilon}, \vec{\sigma}).$

\underline{Part (ii)}

For the claim $\gamma^R(\delta, \vec{\epsilon}, \vec{\sigma}) \to 1$ as $\delta \to 1-\alpha-\beta^*(\vec \sigma)$, note that if we set $\delta = 1-\alpha-\beta^*(\vec \sigma)$, by Lemma \ref{lem_kappa}, the first constraint in \textbf{R-CONF} is always satisfied for any $\vec c \in [0,1]^M$. Since the objective is to maximize $\prod_{i=1}^M c_i$, we can set $c_i \to 1$ for all $i \in [M]$. 

As for the claim $\gamma^*(\delta, \vec{\epsilon}, \vec{\sigma}) \to 1$ as $\delta \to 1-\alpha-\beta^*(\vec \sigma)$ holds since $\gamma^*(\delta, \vec{\epsilon}, \vec{\sigma}) \ge \gamma^R(\delta, \vec{\epsilon}, \vec{\sigma}).$  

\underline{Part (iii)}

First, note that the constraint
\begin{eqnarray*}
\delta \ \ge \ \Phi\left(q_{1-\alpha} - \sqrt{\frac{N}{\sum_{i=1}^M \kappa(\epsilon_i, c_i) \left(\frac{\sigma_i}{\Delta_i}\right)^2}}\right) - \beta^*(\vec\sigma)
\end{eqnarray*}
is equivalent to
\begin{eqnarray*}
\sum_{i=1}^M \kappa(\epsilon_i, c_i) \left(\frac{\sigma_i}{\Delta_i}\right)^2 \le \frac{N}{[q_{1-\alpha} - \Phi^{-1}(\delta + \beta^*(\vec\sigma))]^2}.
\end{eqnarray*}
Now, consider a solution $\vec c = (\hat c, \dots, \hat c)$, where $\hat c$ satisfies
\begin{eqnarray*}
\kappa\left(\min_{i \in [M]}\epsilon_i, \hat c\right) \cdot \sum_{i=1}^M \left(\frac{\sigma_i}{\Delta_i}\right)^2 = \frac{N}{[q_{1-\alpha} - \Phi^{-1}(\delta + \beta^*(\vec\sigma))]^2},
\end{eqnarray*}
or equivalently
\begin{eqnarray*}
\kappa\left(\min_{i \in [M]}\epsilon_i, \hat c\right) \ = \ \zeta \ := \  \frac{N}{[q_{1-\alpha} - \Phi^{-1}(\delta + \beta^*(\vec\sigma))]^2} \cdot \left[\sum_{i=1}^M \left(\frac{\sigma_i}{\Delta_i}\right)^2\right]^{-1}. 
\end{eqnarray*}
By construction, the vector $\vec c = (\hat c, \dots, \hat c)$ as defined above is feasible for \textbf{R-CONF}. So, we can bound $\gamma^R(\delta, \vec{\epsilon}, \vec{\sigma}) \ge \hat c^M$. More precisely, $\hat c$ is a function of $\min_{i \in [M]} \epsilon_i$ and $\zeta$, and we can express it as $\hat c(\min_{i\in[M]} \epsilon_i, \zeta)$. Now, $\epsilon_i \to \infty$ for all $i \in [M]$ implies $\min_{i\in[M]} \epsilon_i \to \infty$. By Lemma \ref{lem_kappa} part (iii), this further implies $\hat c(\min_{i\in[M]} \epsilon_i, \zeta) \ \to \ 1$. We conclude that $\gamma^R(\delta, \vec{\epsilon}, \vec{\sigma}) \to 1$ as $\epsilon_i \to \infty$ for all $i \in [M]$.

The claim $\gamma^*(\delta, \vec{\epsilon}, \vec{\sigma}) \to 1$ follows from $\gamma^*(\delta, \vec{\epsilon}, \vec{\sigma}) 
\ge \gamma^R(\delta, \vec{\epsilon}, \vec{\sigma}).$  \hfill $\blacksquare$

\subsection{Proof of Proposition~\ref{prop_A3}}

Fix $\vec k \ge \vec 1$. Due to the invertible mapping between $k_i$ and $c_i$, there exist $\vec c$ such that $k_i = \overline{\varphi}(\epsilon_i, c_i)$ for all $i \in [M]$. We can decompose the expected maximum 
Type 2 error as follows:
\begin{eqnarray*}
\mathbb{E}[
\tilde{\beta}^*(\vec{k}, \vec{S}, \vec{\sigma})] 
&=& \mathbb{E}[
\tilde{\beta}^*(\vec{k}, \vec{S}, \vec{\sigma}) \cdot 
\mathbf{1}\{\mathcal{E}(\vec{S}, \vec{\epsilon}, \vec{c})\}] 
+ \mathbb{E}[
\tilde{\beta}^*(\vec{k}, \vec{S}, \vec{\sigma}) \cdot 
\mathbf{1}\{\mathcal{E}^c(\vec{S}, \vec{\epsilon}, \vec{c})\}] \\
&\le& \mathbb{E}[
\tilde{\beta}^{\text{robust}}(\vec{S}, \vec{\epsilon}, \vec{\sigma}, \vec{c}) 
\cdot \mathbf{1}\{\mathcal{E}(\vec{S}, \vec{\epsilon}, \vec{c})\}] 
+ \mathbb{P}(\mathcal{E}^c(\vec{S}, \vec{\epsilon}, \vec{c})) \\
&\le& \Phi\left( 
q_{1-\alpha} - 
\sqrt{ 
\frac{N}
{\sum_{i=1}^M \kappa(\epsilon_i, c_i) 
\left( \frac{\sigma_i}{\Delta_i} \right)^2}
}
\right) \cdot \mathbb{P}(\mathcal{E}(\vec{S}, \vec{\epsilon}, \vec{c})) 
+ \mathbb{P}(\mathcal{E}^c(\vec{S}, \vec{\epsilon}, \vec{c})) \\
&=& 1 + \left[
\Phi\left( 
q_{1-\alpha} - 
\sqrt{ 
\frac{N}
{\sum_{i=1}^M \kappa(\epsilon_i, c_i) 
\left( \frac{\sigma_i}{\Delta_i} \right)^2}
}
\right) - 1
\right] \cdot \prod_{i=1}^M c_i,
\end{eqnarray*}
where the first inequality follows from Lemma~\ref{lem_A1} and the fact that $\tilde{\beta}^*(\vec{k}, \vec{S}, \vec{\sigma}) \le 1$, and the second inequality follows from Lemma~\ref{lem_A2}. We conclude that $g^*(\vec \epsilon, \vec \sigma) \le g^R(\vec \epsilon, \vec \sigma)$.

Next, note that we always have
\begin{eqnarray*}
1 + \left[ \Phi\left(q_{1-\alpha} - \sqrt{\frac{N}{\sum_{i=1}^M \kappa(\epsilon_i, c_i) \left(\frac{\sigma_i}{\Delta_i}\right)^2}}\right) - 1\right] \cdot \prod_{i=1}^M c_i & \ge & \Phi\left(q_{1-\alpha} - \sqrt{\frac{N}{\sum_{i=1}^M \kappa(\epsilon_i, c_i) \left(\frac{\sigma_i}{\Delta_i}\right)^2}}\right) \\[1mm]
& \ge & \beta^*(\vec \sigma).
\end{eqnarray*}
So, $g^R(\vec \epsilon, \vec \sigma) \ge \beta^*(\vec \sigma)$. We now show that $g^R(\vec \epsilon, \vec \sigma) \to \beta^*(\vec \sigma)$ as $\epsilon_i \to \infty$ for all $i \in [M]$. Let $\theta \in (0,1)$ be a number close to 1 and set $c_i$ such that $\kappa(\epsilon_i, c_i) = \theta$ for all $i \in [M]$. Since $c_i$ is a function of $\epsilon_i$ and $\theta$, we can write it as $c_i(\epsilon_i, \theta)$. By Lemma \ref{lem_kappa} part (iii), $c_i(\epsilon_i, \theta) \to 1$ as $\epsilon_i \to \infty$. Thus,
\begin{eqnarray*}
\limsup_{\epsilon_i \to \infty, \, \forall i \in [M]} g^R(\vec \epsilon, \vec \sigma) \ \le \ \Phi\left(q_{1-\alpha} - \sqrt{\frac{N}{\sum_{i=1}^M \theta \left(\frac{\sigma_i}{\Delta_i}\right)^2}}\right). 
\end{eqnarray*}
The above inequality holds for all $\theta \in (0,1)$ arbitrarily close to 1. Since  $g^R(\vec \epsilon, \vec \sigma) \ge \beta^*(\vec \sigma)$, we conclude that $g^R(\vec \epsilon, \vec \sigma) \to \beta^*(\vec \sigma)$ as $\epsilon_i \to \infty$ for all $i \in [M]$.  

The claim $g^*(\vec \epsilon, \vec \sigma) \to \beta^*(\vec \sigma)$ holds since, by definition, $g^*(\vec \epsilon, \vec \sigma) \ge \beta^*(\vec \sigma)$ and $g^*(\vec \epsilon, \vec \sigma) \to \beta^*(\vec \sigma)$.  \hfill $\blacksquare$

\subsection{Proof of Proposition \ref{prop_A4}}

Assume \(\epsilon_i=\epsilon\) for all \(i\in[M]\). Define \(x_i:=\log c_i\in(-\infty,0)\) and \(g_i(x):=\kappa(\epsilon,e^{x})\). 
Since \(\kappa(\epsilon,\cdot)\) is strictly increasing and strictly convex on \([0,1)\), we have
\[
g_i'(x)=e^{x}\kappa'(\epsilon,e^{x})>0,\qquad
g_i''(x)=e^{x}\kappa'(\epsilon,e^{x})+e^{2x}\kappa''(\epsilon,e^{x})>0,
\]
so each \(g_i\) is strictly increasing and strictly convex; in particular \(g_i'\) is strictly increasing on \((-\infty,0)\).

\underline{\textbf{R-TOL}}

From the analysis in Section \ref{subsec:solve_r-tol}, \textbf{R-TOL} can be simplified as
\begin{eqnarray*}
\min_{\vec{c}} && \sum_{i=1}^M g_i(x_i) 
\left(\frac{\sigma_i}{\Delta_i}\right)^2 \label{OBJ1} \\
\text{subject to} && \sum_{i=1}^M x_i \ge \log \gamma, \nonumber \\
&& x_i \in (-\infty, 0), \quad \forall i \in [M]. \nonumber
\end{eqnarray*}
the resulting KKT optimality condition for \textbf{R-TOL} must have
\[
g_i'(x_i)\,\left(\tfrac{\sigma_i}{\Delta_i}\right)^2-\lambda=0
\quad\Longleftrightarrow\quad
\boxed{\ g_i'(x_i)\,\left(\tfrac{\sigma_i}{\Delta_i}\right)^2=\lambda\ }
\]
with the right-hand side independent of \(i\). Since \(g_i'\) is strictly increasing, the above equation implies
\[
\left(\tfrac{\sigma_i}{\Delta_i}\right)^2<\left(\tfrac{\sigma_j}{\Delta_j}\right)^2 \Rightarrow x_i>x_j
\ \ (\text{hence }c_i>c_j),
\]

\underline{\textbf{R-CONF}}

Similar to \textbf{R-TOL}, we can rewrite the formulation of \textbf{R-CONF} as
\begin{eqnarray*}
\max_{\vec{c}} && \sum_{i=1}^M x_i \\
\text{subject to} && 
\sum_{i=1}^M g_i(x_i) \left(\frac{\sigma_i}{\Delta_i}\right)^2 \le d(\delta), \\[1mm]
&& x_i \in (-\infty,0), \quad \forall i \in [M]
\end{eqnarray*}
where $d(\delta)$ is defined in Section \ref{sec:unknown sigma 2exp}. The KKT condition yields
\[
-1+\lambda\,g_i'(x_i)\left(\tfrac{\sigma_i}{\Delta_i}\right)^2=0
\quad\Longleftrightarrow\quad
\boxed{\ g_i'(x_i)\,\left(\tfrac{\sigma_i}{\Delta_i}\right)^2=\frac{1}{\lambda}\ }
\]
which is independent of \(i\). Strict monotonicity of \(g_i'\) gives the same ordering as \textbf{R-TOL}.

\underline{\textbf{R-EXP}}

Let
\[
S(\vec x):=\sum_{i=1}^M g_i(x_i)\left(\tfrac{\sigma_i}{\Delta_i}\right)^2,
\qquad
P(\vec x):=\prod_{i=1}^M e^{x_i}=e^{\sum_{i=1}^M x_i},
\]
and define \(A(S):=\Phi\!\bigl(q_{1-\alpha}-\sqrt{N/S}\bigr)-1\).
Then \(A(S)\in(-1,0)\) and
\[
A'(S)=\varphi\!\bigl(q_{1-\alpha}-\sqrt{N/S}\bigr)\cdot\frac{\sqrt{N}}{2\,S^{3/2}}>0,
\]
where \(\varphi\) is the standard normal pdf. \textbf{EXP} can be written as
\begin{eqnarray*}
\min_{\vec{c}} && J(\vec x):=1+A\!\bigl(S(\vec x)\bigr)\,P(\vec x),\qquad \\
\text{subject to} && 
x_i \in (-\infty,0), \quad \forall i \in [M]
\end{eqnarray*}
Any optimum is interior (if some \(x_k=-\infty\) then \(P(\vec x)=0\) so \(J=1\); taking \(x_i=t<0\) for all \(i\) yields \(J(t)=1+A(S(t))e^{Mt}<1\)).  
First-order conditions give, for each \(i\),
\[
A'(S)\,g_i'(x_i)\left(\tfrac{\sigma_i}{\Delta_i}\right)^2 P(\vec x)+A(S)\,P(\vec x)=0
\quad\Longleftrightarrow\quad
\boxed{\ g_i'(x_i)\,\left(\tfrac{\sigma_i}{\Delta_i}\right)^2= -\frac{A(S)}{A'(S)}\ }
\]
with the right-hand side independent of \(i\). Again, strict increase of \(g_i'\) implies the same ordering as \textbf{R-TOL}. \hfill $\blacksquare$

\end{document}